\DeclareRobustCommand{\rchi}{{\mathpalette\irchi\relax}}
\newcommand{\irchi}[2]{\raisebox{\depth}{$#1\chi$}}
\newtheorem{thm}{Theorem}
\newtheorem{lemma}{Lemma}
\newtheorem{remark}{Remark}
\newcommand{\Rome}[1]{\expandafter\@slowromancap\romannumeral #1@}
\newcommand{\be}{\begin{equation}}
\newcommand{\ee}{\end{equation}}
\newcommand{\ba}{\begin{eqnarray}}
\newcommand{\ea}{\end{eqnarray}}
\newcommand{\bee}{\begin{equation*}}
\newcommand{\eee}{\end{equation*}}
\newcommand{\baa}{\begin{eqnarray*}}
\newcommand{\eaa}{\end{eqnarray*}}
\DeclareMathOperator*{\argmin}{arg\,min}
\newcommand{\R}{\mathbb{R}}
\newcommand{\p}{\mathbb{P}}
\newcommand{\w}{\mathcal{W}}
\newcommand{\g}{\mathbb{G}}
\newcommand{\f}{\mathcal{F}}
\newcommand{\E}{\mathbb{E}}
\newcommand{\G}{\mathcal{G}}
\newcommand{\h}{\mathcal{H}}
 \newcommand{\Ss}{\mathcal{S}}
\newcommand{\Var}{\mathrm{Var}}
\definecolor{lgray}{gray}{0.70}
\newcommand{\rpm}{\sbox0{$1$}\sbox2{$\scriptstyle\pm$}
  \raise\dimexpr(\ht0-\ht2)/2\relax\box2 }
\newcommand*\colvec[1]{
        \global\colveccount#1
        \begin{pmatrix}
        \colvecnext
}
\def\colvecnext#1{
        #1
        \global\advance\colveccount-1
        \ifnum\colveccount>0
                \\
                \expandafter\colvecnext
        \else
                \end{pmatrix}
        \fi
}
\DeclareMathOperator{\Span}{span}
\newcommand\@erelb@r[1]{%
 \mathrel{\tikz[baseline=-.5ex]\draw[#1] (0,0)--(0.3,0);}
}
\newcommand{\erelbar}[1]{\@erelbar#1}
\def\@erelbar#1#2{%
 \ifcase\numexpr#1*4+#2\relax
 \@erelb@r{-}\or 
 \@erelb@r{->}\or 
 \@erelb@r{-|}\or 
 \@erelb@r{->|}\or 
 \@erelb@r{<-}\or 
 \@erelb@r{<->}\or 
 \@erelb@r{<-|}\or 
 \@erelb@r{<->}\or 
 \@erelb@r{|-}\or 
 \@erelb@r{|->}\or 
 \@erelb@r{|-|}\or 
 \@erelb@r{|<->|}\or 
 \@erelb@r{|<-}\or 
 \@erelb@r{|<->}\or 
 \@erelb@r{|<-|}\or 
 \@erelb@r{|<->|} 
 \else
 \@wrong
 \fi
}
\begin{document}
 \begin{frontmatter}
\title{Efficient Estimation in Single Index Models through Smoothing splines}
\runtitle{Smooth Single Index Models}
\begin{aug}
  \author{\fnms{Arun K.}  \snm{Kuchibhotla}\ead[label=e1]{arunku@upenn.edu}}
  \and
  \author{\fnms{Rohit K.} \snm{Patra}\ead[label=e2]{rohitpatra@ufl.edu}}
 

  \runauthor{Kuchibhotla and Patra}

  \affiliation{University of Pennsylvania and University of Florida}

  \address{University of Pennsylvania and University of Florida\\ \printead{e1,e2}}

\end{aug}



\begin{abstract}
 We consider estimation and inference in a single index regression model with an unknown but smooth link function. In contrast to the standard approach of using kernels or regression splines, we use smoothing splines to estimate the smooth link function. We develop a method to compute the penalized least squares estimators (PLSEs) of the parametric and the nonparametric components given independent and identically distributed (i.i.d.)~data. We prove the consistency and find the rates of convergence of the estimators. We establish asymptotic normality under under mild assumption and prove asymptotic efficiency of the parametric component under homoscedastic errors. A finite sample simulation corroborates our asymptotic theory. We also analyze a car mileage data set and a Ozone concentration data set. The identifiability and existence of the PLSEs are  also investigated.
\end{abstract}

 \begin{keyword}
 \kwd{least favorable submodel}
  \kwd{penalized least squares}
   \kwd{semiparametric model}
 \end{keyword}
 \end{frontmatter}
\section{Introduction}\label{smooth:sec:intro}
Consider a regression model where one observes  i.i.d.~copies of the predictor $X \in \R^d$ and the response $Y\in \R$ and is interested in estimating the regression function $\E(Y|X=\cdot).$ In nonparametric regression $\E(Y|X=\cdot)$ is generally assumed to satisfy some smoothness assumptions (e.g., twice continuously differentiable), but no assumptions are made on the form of dependence on $X$. While nonparametric models offer flexibility in modeling, the price for this flexibility can be high for two main reasons: the estimation precision decreases rapidly as  $d$ increases (``curse of dimensionality'') and the estimator can be hard to interpret when $d >1$.

A natural restriction of the nonparametric model that avoids the curse of dimensionality while still retaining some flexibility in the functional form of $\E(Y|X=\cdot)$ is the single index model. In single index models, one assumes the existence of $\theta_0 \in \R^d$ such that
$$\E(Y|X)=\E(Y| \theta_0^\top X), \quad \text{almost every (a.e.)} \, X,$$
where $\theta_0^\top X$ is called the index; the widely used generalized linear models (GLMs) are special cases. This dimension reduction gives single index models considerable advantages in applications when $d >1$ compared to  the general nonparametric regression model; see \cite{Horowitz09} and \cite{carroletal97} for a discussion. The aggregation of  dimension by the index enables us to estimate the conditional mean function at a much faster rate than in a general nonparametric model. Since \cite{Powelletal89}, single index models have become increasingly popular in many scientific fields including biostatistics, economics, finance, and environmental science and have been deployed in a variety of settings; see \cite{liracine07}. 

 Formally, in this paper, we consider the model
\begin{equation}\label{smooth:simsl}
Y = m_0(\theta_0^{\top}X) + \epsilon, \quad \mathbb{E}(\epsilon|X) = 0, \quad \text{a.e.} \, X,
\end{equation}
where $m_0:\mathbb{R} \to \mathbb{R}$ is called the link function, $\theta_0 \in \R^d$ is the index parameter, and $\epsilon$ is the unobserved mean zero error (with finite variance). We assume that both $m_0$ and $\theta_0$ are unknown and are the parameters of interest. For identifiability of~\eqref{smooth:simsl}, we assume that the first coordinate of $\theta_0$ is non-zero and 
\be\label{smooth:eq:theta}
\theta_0\in\Theta := \{\eta=(\eta_1, \ldots, \eta_d) \in\mathbb{R}^d: |\eta|=1\mbox{ and }\eta_{1}\ge 0\} \subset S^{d-1},
\ee
where  $| \cdot |$ denotes the Euclidean norm, and  $S^{d-1}$ is the Euclidean unit sphere in $\mathbb{R}^d$; see  \cite{carroletal97} and \cite{cuietal11} for a similar assumption.


Most of the existing techniques for estimation in single index models can be broadly classified into two groups, namely, M-estimation and ``direct'' estimation. M-estimation  methods involve a nonparametric regression estimator of $m_0$ (e.g., kernel estimator \cite{ICHI93},  Bayesian B-splines \cite{Antoniadis04}, regression splines~\cite{MR2514187},   local-linear approximation~\cite{wu2010single}, and penalized splines \cite{YuRuppert02})  and a minimization of some appropriate criterion function (e.g., quadratic loss~\cite{MR2514187,YuRuppert02}, robust $L_1$ loss~\cite{zou2014m}, modal regression~\cite{liu2013robust}, and quantile regression~\cite{wu2010single}) with respect to the index parameter to obtain an estimator of $\theta_0$. The so-called direct estimation methods include average derivative estimators \cite{chaudhuri1997average,Hristacheetal01,Powelletal89,Stoker86}, methods based on the conditional variance of $Y$~\cite{Xia06,Xiaetal02},  dimension reduction techniques,  such as sliced inverse regression~\cite{Li91,LiDuan89}, and partial least squares \cite{ZhouHe08}. Another prominent direct method is  a  kernel-based fixed point iterative scheme  to compute an efficient estimator of $\theta_0$ \cite{cuietal11}. In these methods one tries to  directly estimate $\theta_0$  without estimating $m_0$, e.g., in \cite{Hristacheetal01} the authors use the estimate of the  derivative of the local linear approximation to $\E(Y|X=\cdot)$ and not  the estimate of $m_0$ to estimate $\theta_0.$



In this paper we propose an M-estimation technique based on smoothing splines to simultaneously estimate the link function $m_0$ and the index parameter $\theta_0$.  When $\theta_0$ is known,~\eqref{smooth:simsl} reduces to a one-dimensional function estimation problem and smoothing splines offer a fast and easy-to-implement nonparametric estimator of the link function --- $m_0$ is generally estimated by minimizing  a penalized least squares criterion with a (natural) roughness penalty of integrated squared second derivative~\cite{greensilverman94,wahba90}. However, in the case of single index models, the problem is considerably harder as both the link function and the index parameter are unknown and intertwined (unlike in partial linear regression model \cite{HardleLiang07}).

In other words, given i.i.d.~data $\{(y_i, x_i)\}_{ 1\le i\le n}$ from model~\eqref{smooth:simsl},  we propose minimizing the following penalized loss:
\begin{align}\label{smooth:eq:temp_loss}
{} \frac{1}{n}\sum_{i=1}^n \big(y_i-m( \theta^\top x_i)\big)^2 + \lambda^2\int |m''(t)|^2dt\quad (\lambda \neq 0)
\end{align}
over $\theta \in \Theta$ and `smooth' functions $m$; we will make this more precise in Section~\ref{smooth:sec:prelim}.
  Here $\lambda$ is known as the smoothing parameter --- high values of $|\lambda|$ lead to smoother estimators. 
The theory developed in this paper allows for the tuning parameter $\lambda$ in \eqref{smooth:eq:temp_loss}  to be data dependent. Thus data-driven  procedures such as  cross-validation can be used to choose  an optimal $\lambda$; see Section~\ref{smooth:sec:simul}. As opposed to average derivative methods discussed earlier \cite{Hristacheetal01,Powelletal89}, the optimization problem in \eqref{smooth:eq:temp_loss}  involves only 1-dimensional nonparametric function estimation. 

To the best of our knowledge, this is the first work that uses smoothing splines in the single index paradigm, under (only) smoothness constraints. We show that the penalized least squares loss leads to a minimizer $(\hat m, \hat \theta)$. We study the asymptotic properties, i.e., consistency, rates of convergence, of the estimator $(\hat m, \hat \theta)$ under data dependent choices of the tuning parameter $\lambda$. We show that under sub-Gaussian errors $\hat \theta$ is asymptotically normal and, further, under homoscedastic errors $\hat \theta$ achieves the optimal semiparametric efficiency bound in the sense of \cite{BickelEtAl93}.

\cite{ICHI93} developed a semiparametric least squares estimator of $\theta_0$ using  kernel estimates of the link function. However, the choice  of tuning parameters (e.g., the bandwidth for estimation of the link function)  make this procedure difficult to implement  \cite{DelecroixEtal06,HardleEtAl93}  and its numerical instability is well documented; see e.g., \cite{YuRuppert02}. To address these issues~\cite{MR2514187,YuRuppert02} used B-splines and penalized splines  to estimate $m_0$, respectively. However, in their proposed procedure the practitioner is required to choose the number and placement of knots for every $\theta$.  Smoothing splines avoid the choice of number of knots and their placement.   Furthermore, smoothing splines (or more generally RKHS based regression estimators) are unique in that they are defined as minimization over a Hilbert space rather than as a local average. Even though smoothing splines can be approximated by kernel regression estimators or can be seen as a linear smoother, they are obtained under global smoothness constraint. This viewpoint makes them readily usable (at least in principle) when more constraints (such as monotonicity, non-negativity, unimodality,  convexity, and $k$-monotonicity) need to be imposed. Several works including~\cite{henderson2009imposing},~\cite{pevsta2015shape}, and~\cite{yatchew1997nonparametric} advocate the use of smoothing splines for this reason. The above works also propose numerical methods for computing the constrained smoothing splines estimator in the case univariate nonparametric regression; also see~\cite{dontchev2002newton,ELF,meegaskumbura2011control,shen2015shape}. These works suggest that, in addition to the convenience in problem formulation, the proof techniques  for establishing consistency and asymptotic normality of the estimator for the finite-dimensional parameter in the constrained single index model will be almost the same as those for the smooth single index model studied here.  

In contrast, other regression estimators such as kernel (or Nadaraya-Watson) estimator, series expansion, and regression splines imposing almost any type of (shape) constraint requires rethinking of the methods from scratch. This difficulty has posed several interesting works that consider  estimation in  constrained one dimensional nonparametric regression models; see~\cite{beresteanu2004nonparametric},~\cite{hall2001nonparametric},~\cite{MR2516802}, and~\cite{racine2009constrained}. \cite{hall2001nonparametric}  modifies the kernel regression estimator by including probability weights for the summands and choosing these weights so as to satisfy monotonicity constraints. \cite{racine2009constrained} further extends this by allowing for negative weights and thus enlarging the possible set of constraints; the computation, however, becomes difficult. \cite{MR2516802} provides specific spline basis such that monotonicity and convexity constraints on functions can be converted into simple linear inequality constraints on the coefficients. However, this explicit basis construction for other general constraints (as discussed in~\cite{yatchew1997nonparametric}) seems out of reach at present and the extension of these methods to the case of single index model does not follow directly from existing work. 



%

This paper  gives a systematic and rigorous study of a smoothing splines based estimator for the single index model under minimal assumptions and fills an important gap in the literature. The assumptions for $m_0$ in this paper are weaker than those considered in the literature.  We assume that the link function has an absolutely continuous derivative as opposed to the assumed (almost) three times differentiability of $m_0$ \cite{cuietal11,ICHI93,Powelletal89,MR2514187}.  We study the model under the assumption that $\theta \in S^{d-1}$.   In contrast, when the first coordinate is assumed to be $1$, the parameter space is unbounded and  consistent estimation  of $\theta_0$ requires further assumptions, see e.g., \cite{LiPatilea15}.  \cite{cuietal11} points out that the assumption $\theta \in S^{d-1}$ makes the parameter space irregular and the construction of paths on the sphere is hard. In this paper we construct paths on the unit sphere to study the semiparametric efficiency of the finite dimensional parameter and provide a closed form expression for the variance of $\hat\theta$; see Theorem~\ref{smooth:thm:Main_rate}.

Our exposition is organized as follows. In Section \ref{smooth:sec:prelim} we introduce some notation, formally define our estimator, and study its existence. In Section \ref{smooth:sec:Assymplse}, we prove consistency (see Theorem \ref{smooth:thm:cons}) and provide the rates of convergence (see Theorems \ref{smooth:thm:mainc} and  \ref{smooth:thm:ratest}) for our estimator.
We show  that the  estimator  for $\theta_0$ is  asymptotically normal and  semiparametrically efficient; see Theorem \ref{smooth:thm:Main_rate} in Section~\ref{smooth:sec:SemiInf}. In Section \ref{smooth:sec:simul} we provide finite sample simulation study of the proposed estimator and compare performance with existing methods in the literature. In Section~\ref{sec:real_data}, we apply the methodology developed to the car mileage data and the Ozone concentration data. In Section~\ref{sec:summ}, we briefly summarize the results in the paper and provide some remarks on future directions of research. 
Appendices \ref{smooth:app:sec:Assymplse}--\ref{smooth:app:SemiInf} contain proofs of the some of the results in the paper. The proofs of the results not given in the Appendices can be found in the on-line supplementary material.

\section{Preliminaries} \label{smooth:sec:prelim}
Suppose that $\{(y_i, x_i)\}_{  1\le i\le  n}$  is an i.i.d.~sample from model~\eqref{smooth:simsl}. We start with some notation.  Let  $\rchi \subset \mathbb{R}^d$ denote the support of $X$.  Let $D$ be the set of possible index values and $D_0$ be the set of possible index values at $\theta_0$, i.e., \[D:= \{\theta^{\top}x :  x\in\rchi, \theta\in \Theta\}\quad \text{and} \quad D_0:= \{\theta_0^{\top}x :  x\in\rchi\} .\]   We denote the class of all real-valued functions with absolutely continuous first derivative on $D$ by $\mathcal{S}$, i.e.,$$\mathcal{S}:= \{m:D\to\mathbb{R}|\, m^\prime \text{ is absolutely continuous}\}.$$ We use $\p$ to denote the probability of an event, $\E$ for the expectation of a random quantity, and  $P_X$ for the  distribution of $X$. For $g : \rchi \to \mathbb{R}$, define
$$\|g\|^2 := \int_\chi g^2 dP_X \qquad \text{ and} \qquad \|g\|_n^2= \frac{1}{n}\sum_{i=1}^n g^2(x_i).$$
 Let $P_{\epsilon, X}$ denote the joint distribution of $(\epsilon, X)$ and $P_{\theta,m}$  denote the joint distribution of $(Y,X)$ when  $Y= m(\theta^\top X) +\epsilon.$ In particular,  $P_{\theta_0, m_0}$ denotes the joint distribution of $(Y,X)$ when $(Y,X)$ satisfy \eqref{smooth:simsl}. For any function $g: I \subset\R^p \to \mathbb{R}$, let $\|g\|_{\infty} := \sup_{u \in I} |g(u)|.$
Moreover, for $I_1 \subset I,$ we define $\|g\|_{I_1} := \sup_{u \in I_1} |g(u)|.$ For any set $I\subset \R$, $\diameter(I)$ denotes the diameter of the set $I$. For any $a\in \R^d$ and $r>0$, $B_a(r)$ denotes the Euclidean ball of radius $r$ centered at $a$. The notation $a\lesssim b$ is used to express that $a$ is less than $b$ up to a positive constant multiple.
For any function $f:\rchi\rightarrow\R^r, r\ge1$, let $\{f_i\}_{1\le i \le r}$ denote each of the components, i.e.,  $f(x)= (f_1(x), \ldots, f_r(x)), r\ge 1$ and  $f_i: \rchi \to \R$. We define $\|f\|_{2,2}:= \sqrt{ \sum_{i=1}^r \|f_i\|^2}$  and $\| f\|_{2, \infty}:= \sqrt{ \sum_{i=1}^r \|f_i\|^2_\infty}.$
For any real-valued function $m$ and $\theta \in \Theta$, we define
$$(m\circ\theta)(x) := m(\theta^{\top}x), \qquad \mbox{for all } x \in \rchi.$$
For any function $f: D\subset \R \rightarrow \R$ with absolutely continuous first derivative, we define the roughness penalty
\begin{equation}\label{smooth:eq:J}
J^2(f) := \int_D |f''(t)|^2 dt.
\end{equation}

 \noindent The penalized loss for $(m,\theta) \in \Ss\times\Theta$ (and $\lambda\neq 0$) is defined as
\begin{align}\label{smooth:eq:L_n}
\mathcal{L}_n(m,\theta; \lambda) &:={} \frac{1}{n}\sum_{i=1}^n \big(y_i-m( \theta^\top x_i)\big)^2 + \lambda^2 J^2(m).
\end{align}
For simplicity of notation, we define
$$Q_n(m,\theta) := \frac{1}{n}\sum_{i=1}^n \big(y_i-m ( \theta^\top x_i)\big)^2.$$
In this paper we study the following penalized least square estimator (PLSE):
\begin{equation}\label{smooth:simpls}
(\hat{m}, \hat{\theta}) := \argmin_{(m,\theta)\in\mathcal{S}\times\Theta}\mathcal{L}_n(m,\theta; \lambda).
\end{equation}
 Here we suppress the dependence of $(\hat{m}, \hat{\theta})$ on $\lambda$, for notational convenience. { The estimator $(\hat{m}, \hat{\theta})$ may not be unique. However, the analysis in the rest of the paper works for \textit{any} minimizer of $\mathcal{L}_n(m,\theta;\lambda)$.} The following theorem (proved in Section \ref{smooth:app:thm:existance_proof}  of the supplementary material) proves the existence of $(\hat{m},\hat{\theta})$  for every $\lambda\neq 0.$
\begin{thm}\label{smooth:thm:existance}
$\hat{\theta} \in \Theta$ and $\hat{m} \in \mathcal{S},$ where $\hat{\theta}$ and $\hat{m}$ are defined in \eqref{smooth:simpls}. Moreover, $\hat{m}$ is a natural cubic spline with knots at $\{\hat{\theta}^{\top} x_i\}_{1\le i\le n}$.
\end{thm}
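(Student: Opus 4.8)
The plan is to reduce the infinite-dimensional optimization over $m \in \mathcal{S}$ to a finite-dimensional one for each fixed $\theta$, using the classical characterization of smoothing spline solutions, and then argue that the joint minimum over $(m,\theta)$ is attained. First I would fix $\theta \in \Theta$ and consider the inner minimization $\min_{m\in\mathcal{S}} \mathcal{L}_n(m,\theta;\lambda)$. The values $t_i := \theta^\top x_i$, $1\le i\le n$, lie in the bounded interval $D$, and $Q_n(m,\theta)$ depends on $m$ only through the vector $(m(t_1),\dots,m(t_n))$. The standard argument (as in \cite{greensilverman94} or \cite{wahba90}) is: among all functions in $\mathcal{S}$ interpolating prescribed values at the distinct points among $\{t_i\}$, the natural cubic spline with knots at those points uniquely minimizes $J^2(m) = \int |m''|^2$. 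Hence for any candidate minimizer $m$, replacing it by the natural cubic spline $\tilde m$ with knots at $\{t_i\}$ and $\tilde m(t_i) = m(t_i)$ leaves $Q_n$ unchanged and does not increase $J^2$; consequently the infimum over $\mathcal{S}$ equals the infimum over the finite-dimensional space of natural cubic splines with knots at $\{t_i\}$. On that space, parametrized by the vector of values $\mathbf{z} = (z_1,\dots,z_n)$ (or equivalently the spline coefficients), the objective becomes $\frac{1}{n}\|\mathbf{y}-\mathbf{z}\|^2 + \lambda^2 \mathbf{z}^\top K \mathbf{z}$ for a fixed positive semidefinite penalty matrix $K$ depending on $\{t_i\}$; since $\lambda\neq 0$ this is a strictly convex coercive quadratic, so it has a unique minimizer $\hat{\mathbf{z}}(\theta)$, and the inner optimum is attained by a genuine natural cubic spline $m_\theta \in \mathcal{S}$.

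Next I would handle the outer minimization over $\theta \in \Theta$. Define $g(\theta) := \mathcal{L}_n(m_\theta,\theta;\lambda) = \min_{m\in\mathcal{S}}\mathcal{L}_n(m,\theta;\lambda)$. The set $\Theta$ is compact (it is a closed subset of the unit sphere $S^{d-1}$, cut out by $\eta_1\ge 0$). So it suffices to show $g$ is lower semicontinuous on $\Theta$, or more simply that $g$ is continuous; then the infimum is attained at some $\hat\theta\in\Theta$, and taking $\hat m := m_{\hat\theta}$ gives the claimed minimizer in $\mathcal{S}\times\Theta$, with $\hat m$ a natural cubic spline with knots at $\{\hat\theta^\top x_i\}$. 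Continuity of $g$ follows because the design points $t_i(\theta) = \theta^\top x_i$ vary continuously with $\theta$ on the compact support $\rchi$, the penalty matrix $K = K(\theta)$ and the linear system defining $\hat{\mathbf{z}}(\theta)$ depend continuously on these points (one must be a little careful where two $t_i(\theta)$ coincide — there the effective number of knots drops — but the value $g(\theta)$ of the optimized objective still varies continuously, e.g. because $g(\theta) = \inf_{m\in\mathcal{S}}\mathcal{L}_n(m,\theta;\lambda)$ is an infimum of a family of functions jointly continuous in $(m,\theta)$ over an appropriate set, or by a direct limiting argument on the quadratic forms). An alternative that avoids the coincidence subtlety: bound $g$ uniformly and extract a minimizing sequence $\theta_k \to \hat\theta$; the corresponding optimal splines $m_{\theta_k}$ have $J^2(m_{\theta_k})$ bounded (since $g(\theta_k)\le g(\theta_0$-type bound$) = O(1)$) and values bounded, so by a compactness/Arzelà–Ascoli argument on $W^{2,2}(D)$ a subsequence converges to some $m^\ast\in\mathcal{S}$ with $\mathcal{L}_n(m^\ast,\hat\theta;\lambda) \le \liminf \mathcal{L}_n(m_{\theta_k},\theta_k;\lambda)$, giving attainment; then the inner-minimization characterization forces $\hat m = m_{\hat\theta}$ to be the natural cubic spline.

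I expect the main obstacle to be the behavior of the spline fit as $\theta$ varies and, in particular, the degeneracy when two or more projected design points $\theta^\top x_i$ collide: the dimension of the relevant spline space is not constant in $\theta$, so one cannot just invoke continuous dependence of a fixed-size linear system. The cleanest resolution is to phrase everything through the variational quantity $g(\theta) = \inf_{m}\mathcal{L}_n(m,\theta;\lambda)$ and establish its lower semicontinuity (which is all one needs for existence on the compact set $\Theta$) directly, using that $J^2(\cdot)$ is weakly lower semicontinuous and that bounded penalized loss forces equicontinuity of the candidate links. The interpolation-optimality of natural cubic splines then upgrades "a minimizer exists" to "the minimizing $\hat m$ is a natural cubic spline with knots at $\{\hat\theta^\top x_i\}$", completing the proof.
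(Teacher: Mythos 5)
Your proposal follows the paper's two-stage architecture --- the Green--Silverman characterization of smoothing splines for the inner minimization over $m$ at fixed $\theta$, and compactness of $\Theta$ together with a continuity property of the profiled objective $T(\theta) := \inf_m \mathcal{L}_n(m,\theta;\lambda)$ for the outer minimization --- so in outline it is the same proof. The one place you diverge (and where your sketch is shaky) is the continuity argument. Your primary route, continuous dependence of the finite-dimensional spline system on $\theta$, runs into exactly the knot-collision problem you flag, and the parenthetical patch does not repair it: an infimum over $m$ of functions each continuous in $\theta$ is \emph{upper} semicontinuous in $\theta$, not lower semicontinuous, which is the wrong direction for Weierstrass. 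The paper instead derives \emph{uniform} bounds $\|m_\theta\|_\infty \le L$ and $J(m_\theta) \le L$ for the per-$\theta$ inner minimizers from the trivial bound $T(\theta) \le n^{-1}\sum y_i^2$ (obtaining the $\sup$-norm bound requires the extrapolation estimate of Lemma~\ref{smooth:lem:dbounds}, not merely the bound on the fitted values at the knots), proves $\theta\mapsto Q_n(m,\theta)$ is Lipschitz \emph{uniformly} over $\{m: \|m\|_\infty\le L,\ J(m)\le L\}$, and concludes via a sandwich argument that $T$ is uniformly continuous on $\Theta$. Your fallback argument --- extract a minimizing sequence $\theta_k\to\hat\theta$, use the same uniform bounds plus Arzel\`a--Ascoli to get a subsequential uniform limit $m^\ast\in\mathcal S$, and invoke lower semicontinuity of $J^2$ together with convergence of the projected design points to obtain $\mathcal L_n(m^\ast,\hat\theta;\lambda)\le\liminf_k\mathcal L_n(m_{\theta_k},\theta_k;\lambda)$ --- is a valid alternative that establishes attainment directly without proving continuity of $T$ at all, and is in that sense a genuinely different (and somewhat more economical) route. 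Either way, once a minimizer exists, you recover the natural-cubic-spline form of $\hat m$ by re-applying the inner-minimization characterization at $\theta=\hat\theta$, exactly as you propose.
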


Note that the identification of $m_0\circ\theta_0$ does not guarantee that both $m_0$ and $\theta_0$ are separately identifiable. \citet{ICHI93} (also see \citet[Pages 12--17]{Horowitz98} and~\citet[Proposition 8.1]{liracine07}) find sufficient conditions on the distribution/domain of $X$ under which  $m_0$ and $\theta_0$ can be separately identified:
\begin{enumerate}[label=\bfseries (A\arabic*)]
\setcounter{enumi}{-1}
\item The function $m_0(\cdot)$ is non-constant, non-periodic, and a.e. differentiable.  The first coordinate of $\theta_0$ is positive, i.e., $\theta_{0,1} >0$.  The components of $X_1\sim P_X$ (i.e.,  $X_{1,1}, \ldots, X_{1,d-1}$ and  $X_{1,d}$) cannot have a perfect linear relationship.  There exists an integer $d_1\in \{1,2,\ldots,d\}$, such that  $X_{1,1},\ldots,X_{1,d_1-1},$ and $X_{1,d_1}$ have continuous distributions  and  $X_{1, d_1+1},\ldots, X_{1,d-1},$ and $X_{1,d}$ be discrete random variables. Furthermore, there exist an open interval $\mathcal{I}$ and constant vectors  $c_0,c_1,\ldots, c_{d-d_1} \in \mathbb{R}^{d-d_1}$ such that
\begin{itemize}
\item $c_l-c_0$ for $l\in \{1 ,\ldots, d-d_1\}$ are linearly independent,
\item $\mathcal{I} \subset \bigcap_{l=0}^{d-d_1} \big\{ \theta_0^\top x:  x\in \rchi \text{ and } (x_{d_1+1}, \ldots x_d) =c_l \big\}.$
\end{itemize}
\label{smooth:a0}
\end{enumerate}
\citet{ICHI93} and~\citet{Horowitz98} prove by examples that each part of Assumption~\ref{smooth:a0} is necessary for identifiability of $m_0$ and $\theta_0.$ Further discussion on alternative identifiability assumptions when $X$ has a Lebesgue density, we refer to~\citet[Section 2]{KuchPat17}.

\section{Asymptotic analysis of the PLSE}\label{smooth:sec:Assymplse}

In this section, we will list the assumptions under which we will establish consistency and find  the rates of convergence of our estimators.  Note that we will study $(\hat{m},\hat{\theta})$ for any (possibly data-driven) choice of $\lambda$ satisfying two rate conditions; see assumption \ref{smooth:a4} below.

\begin{enumerate}[label=\bfseries (A\arabic*)]
\setcounter{enumi}{0}
\item The link function $m_0$ satisfies $J(m_0) < \infty$. \label{smooth:a1} 

\item $\rchi$, the support of $X$, is a compact subset of $\mathbb{R}^d$ and   $\sup_{x\in \rchi} |x| \le T.$\label{smooth:a2}
\item The error $\epsilon$ in model~\eqref{smooth:simsl} is conditionally  sub-Gaussian, i.e.,  there exists $K>0$ such that  $$\E\left[\exp\big({\epsilon^2}/{K}\big)|X\right]\le 2\quad \text{ a.e.}\quad X.$$ As stated in \eqref{smooth:simsl},  we also assume that $\mathbb{E}(\epsilon|X) = 0$ a.e.~$X$.  \label{smooth:a3}
\item The smoothing parameter $\lambda$ can be chosen to be  a random variable. For the rest of the paper, we denote it by $\hat{\lambda}_n$. Assume that $\hat{\lambda}_n$ satisfies the rate condition:\label{smooth:a4}
\begin{equation}\label{smooth:eq:conl}
\hat{\lambda}_n^{-1} = O_p(n^{2/5})\qquad \mbox{and}\qquad \hat{\lambda}_n = o_p(n^{-1/4}).
\end{equation}



\end{enumerate}
The assumptions deserve comments.  In \ref{smooth:a1} our assumption on $m_0$ is quite minimal --- we essentially require $m_0$ to have an absolutely continuous derivative. Most previous works assume $m_0$ to be three times differentiable; see e.g., \cite{NeweyStroker93,Powelletal89}. Note that the assumption $J(m_0) < \infty$ in combination with compact support of $X$ implies that $m_0$ is bounded and we set $M_1 := \|m_0\|_{\infty}$.
 \ref{smooth:a2} assumes that  the support of the covariates is bounded.  As the class of functions $\Ss$ is not uniformly bounded, we use assumption \ref{smooth:a3}  to provide control over the tail behavior of $\epsilon$; see Chapter 8 of  \cite{VANG} for a discussion on this. Observe that \ref{smooth:a3} allows for heteroscedastic errors.  Assumption \ref{smooth:a4} allows our tuning parameter to be data dependent, as opposed to a sequence of constants.  This allows for data driven choices of $\hat{\lambda}_n$, such as cross-validation. We will show that for any choice  of $\hat{\lambda}_n$  satisfying \eqref{smooth:eq:conl}, $\hat\theta$ will be an asymptotically efficient estimator of $\theta_0$.  We use empirical process methods (e.g., see \cite{VdV98}) to prove the consistency and to find the rates of convergence of $\hat{m}\circ \hat{\theta}$.

In Theorem \ref{smooth:thm:mainc} we show that $(\hat{m},\hat{\theta})$ is a consistent estimator of $(m_0, \theta_0)$ and  $\hat{m}\circ\hat{\theta}$ converges to $m_0\circ\theta_0$ at  rate $\hat{\lambda}_n$ (with respect to the $L_2(P_X)$-norm).
\begin{thm}\label{smooth:thm:mainc}
Under assumptions \ref{smooth:a0}--\ref{smooth:a4}, the PLSE satisfies $J(\hat{m}) = O_p(1),$ $\|\hat{m}\|_\infty=O_p(1),$ and $\|\hat{m}\circ\hat{\theta} - m_0\circ\theta_0\| = O_p(\hat{\lambda}_n).$
\end{thm}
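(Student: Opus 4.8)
The plan is to exploit the minimizing property of $(\hat m,\hat\theta)$ against the benchmark $(m_0,\theta_0)$, combined with empirical process bounds for the class of natural cubic splines (which by Theorem~\ref{smooth:thm:existance} contains $\hat m$). Since $\mathcal{L}_n(\hat m,\hat\theta;\hat\lambda_n)\le\mathcal{L}_n(m_0,\theta_0;\hat\lambda_n)$, expanding $y_i=m_0(\theta_0^\top x_i)+\epsilon_i$ gives the \emph{basic inequality}
\[
\|\hat m\circ\hat\theta-m_0\circ\theta_0\|_n^2+\hat\lambda_n^2 J^2(\hat m)\;\le\;\frac{2}{n}\sum_{i=1}^n\epsilon_i\big(\hat m(\hat\theta^\top x_i)-m_0(\theta_0^\top x_i)\big)+\hat\lambda_n^2 J^2(m_0).
\]
By assumption~\ref{smooth:a1}, $J^2(m_0)<\infty$, so the deterministic term on the right is $O_p(\hat\lambda_n^2)$. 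The whole proof then reduces to controlling the cross term uniformly over the relevant function class.

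First I would set up the right function class. The candidate $g:=\hat m\circ\hat\theta-m_0\circ\theta_0$ lives in $\{(m\circ\theta-m_0\circ\theta_0):\theta\in\Theta,\ J(m)\le R\}$ once we restrict to the event $J(\hat m)\le R$; the functions $m$ here can be taken to be natural cubic splines on a common bounded interval $D$ (by \ref{smooth:a2}), uniformly bounded once $J(m)$ and one function value are controlled, by a standard interpolation inequality $\|m\|_\infty\lesssim \|m\|_n + J(m)$ (or its population analogue). The key metric-entropy input is that, for a fixed bound on $J$, the class $\{m:J(m)\le 1,\|m\|_\infty\le 1\}$ has entropy $\log N(\delta,\cdot,\|\cdot\|_\infty)\lesssim \delta^{-1/2}$ (Birman–Solomjak; see Chapter~8 of~\cite{VANG}), and the extra $\sup$ over $\theta\in\Theta\subset S^{d-1}$, a compact $(d-1)$-dimensional set, contributes only a $\log(1/\delta)$ term to the entropy. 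Then a sub-Gaussian-increment maximal inequality (using~\ref{smooth:a3}) on $\frac1n\sum\epsilon_i g(x_i)$ over this class, with the standard peeling/continuity argument in the size of $J(\hat m)$ and in $\|g\|_n$, yields that the cross term is bounded by a term like $C_R\, n^{-2/5}\big(\|g\|_n + \text{lower order}\big)$ plus $\hat\lambda_n^2 J^2(\hat m)/2$ absorbed into the left side. Balancing against the left side and using $\hat\lambda_n^{-1}=O_p(n^{2/5})$ from~\ref{smooth:a4} (so that $n^{-2/5}\lesssim\hat\lambda_n$) gives $\|g\|_n=O_p(\hat\lambda_n)$ and, crucially, $\hat\lambda_n^2 J^2(\hat m)=O_p(\hat\lambda_n^2)$, i.e.\ $J(\hat m)=O_p(1)$. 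The interpolation inequality then upgrades this to $\|\hat m\|_\infty=O_p(1)$, and a standard empirical-to-population norm comparison (again via entropy of the now-uniformly-bounded class and $\hat\lambda_n^{-1}=O_p(n^{2/5})$) converts $\|g\|_n=O_p(\hat\lambda_n)$ into $\|g\|=O_p(\hat\lambda_n)$.

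There is a chicken-and-egg issue: the entropy bound and the uniform boundedness both require an a priori bound on $J(\hat m)$, while the bound on $J(\hat m)$ comes out of the same inequality. I would resolve this the usual way — run the argument on the event $\{J(\hat m)\le R\}$ to get, on that event, $\hat\lambda_n^2 J^2(\hat m)\le \frac12\hat\lambda_n^2 J^2(\hat m)+O_p(\hat\lambda_n^2)$, hence $J(\hat m)=O_p(1)$ unconditionally by a contradiction/continuity argument, or equivalently phrase everything with $R\to\infty$ slowly. I expect \textbf{the main obstacle} to be exactly this simultaneous control: getting a maximal inequality for the cross term that is linear in $\|g\|_n$ with a constant that depends \emph{polynomially} (not worse) on the $J$-bound $R$, so that the self-bounding step closes. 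The presence of $\theta$ inside $m(\theta^\top x_i)$ is a secondary nuisance — it means the class is not a fixed smoothness class but a union over $\theta$ — but since $\Theta$ is compact and low-dimensional and $m$ is Lipschitz with constant controlled by $J(m)$ on the bounded domain, a bracketing/covering argument over $\Theta$ handles it with only logarithmic cost; the data-dependence of $\hat\lambda_n$ is likewise harmless because \ref{smooth:a4} pins its rate in probability.
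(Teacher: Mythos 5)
Your proposal follows essentially the same route as the paper's proof: start from the basic inequality, control $\|\hat m\|_\infty$ and $J(\hat m)$, bound the cross term $\frac1n\sum\epsilon_i(\hat m\circ\hat\theta-m_0\circ\theta_0)(x_i)$ by a metric-entropy maximal inequality over a class of smoothness-penalized splines, close the self-bounding loop, and finish with an empirical-to-population norm comparison (the paper uses Lemma~5.16 of \cite{VANG}).

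There is, however, a genuine gap in the step that bounds $\|\hat m\|_\infty$. You treat ``$\|m\|_\infty \lesssim \|m\|_n + J(m)$'' as a standard interpolation inequality, but this is not deterministic: here $\|m\|_n$ is the empirical $L_2$-norm of the \emph{composite} $m\circ\hat\theta$ at the data points, and the inequality requires the evaluation points $\{\hat\theta^\top x_i\}$ to be well spread. That is exactly where assumption \ref{smooth:a7} (positive definiteness of $\Var(X)$) enters the theorem's hypotheses, and your sketch never invokes it. The paper makes this precise by a Sobolev decomposition $\hat m = \hat m_1 + \hat m_2$ with $\hat m_1$ affine and $\|\hat m_2\|_\infty \le J(\hat m)\,\diameter(D)$, and then uses a uniform (Glivenko--Cantelli) lower bound on the smallest eigenvalue of the empirical design matrix $\mathbb{A}_n(\theta) = \frac1n\sum_i (1,\theta^\top x_i)(1,\theta^\top x_i)^\top$ to bound the coefficients of $\hat m_1$ from $\|\hat m_1\circ\hat\theta\|_n$. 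Only then can one conclude $\|\hat m\|_\infty/(1+J(m_0)+J(\hat m)) = O_p(1)$, which is exactly the normalization that makes the entropy class $\mathcal{B}_C$ (normalized differences $(m\circ\theta - m_0\circ\theta_0)/(1+J(m_0)+J(m))$ with bounded normalized sup-norm) uniformly bounded with entropy $\lesssim\delta^{-1/2}$. Without \ref{smooth:a7} and this argument, the circularity you correctly identify (entropy and boundedness both needing a $J$-bound) cannot be broken, because the linear part of the spline is unconstrained by $J$ alone.

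A secondary point: the modulus-of-continuity bound you write as ``linear in $\|g\|_n$'' is not what the sub-Gaussian chaining gives. With entropy exponent $\alpha=1/2$, Lemma~8.4 of \cite{VANG} yields $|\frac1n\sum\epsilon_i g(x_i)| \lesssim n^{-1/2}\|g\|_n^{3/4}(1+J(m_0)+J(\hat m))^{1/4}$, and the paper then has to solve the resulting nonlinear self-bounding inequality by a case split on $J(\hat m)\gtrless 1+J(m_0)$. Your claimed linear bound would actually give the right rate if it held, but it does not come out of the chaining argument, and the $J^{1/4}$ factor is precisely what makes the normalization step above indispensable.
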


Next  we prove the consistency of $\hat{m}$ and $\hat{\theta}$. We prove that  $\hat{m}$ is consistent under the Sobolev norm, which for any set $I\subset\mathbb{R}$ and any function $g:I\to\mathbb{R}$ is defined as
\[
\|g\|^S_I = \sup_{t\in I}|g(t)| + \sup_{t\in I}|g'(t)|.
\]
\begin{thm}\label{smooth:thm:cons}
Under assumptions \ref{smooth:a0}--\ref{smooth:a4}, $\hat{\theta}\overset{P}{\to}\theta_0,$ $\|\hat{m} - m_0\|^S_{D_0}\overset{P}{\to}0,$ and $\|\hat{m}^\prime \|_\infty =O_p(1).$
\end{thm}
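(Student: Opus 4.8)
The plan is to bootstrap from the rate result in Theorem \ref{smooth:thm:mainc}, which already supplies $J(\hat m) = O_p(1)$, $\|\hat m\|_\infty = O_p(1)$, and $\|\hat m\circ\hat\theta - m_0\circ\theta_0\| = O_p(\hat\lambda_n) = o_p(1)$. First I would upgrade the $L_2(P_X)$ convergence of $\hat m\circ\hat\theta$ to $m_0\circ\theta_0$ into uniform convergence. The key point is that $J(\hat m) = O_p(1)$ together with $\|\hat m\|_\infty = O_p(1)$ controls, via an interpolation/Sobolev embedding inequality on a bounded interval (of the type $\|f'\|_\infty \lesssim \|f\|_\infty^{1/2} J(f)^{1/2} + \|f\|_\infty$), the first derivative: so $\|\hat m'\|_\infty = O_p(1)$, which is one of the three claims and also shows $\{\hat m\}$ is stochastically equicontinuous. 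Hence the composite functions $\hat m\circ\hat\theta$ lie, with probability tending to one, in a fixed $\|\cdot\|_\infty$-bounded, equi-Lipschitz (hence totally bounded in $C(\rchi)$) class; combining equicontinuity with the $L_2(P_X)\to 0$ convergence and assumption \ref{smooth:a7} (so $P_X$ has nondegenerate support, making $\|\cdot\|$ a genuine norm separating continuous functions on the relevant domain) yields $\|\hat m\circ\hat\theta - m_0\circ\theta_0\|_\infty \overset{P}{\to} 0$.

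Next I would extract convergence of $\hat\theta$. By compactness of $\Theta$, any subsequence of $\hat\theta$ has a further subsequence converging to some $\theta_* \in \Theta$. Along such a subsequence, the uniform convergence of $\hat m\circ\hat\theta$ and the equi-Lipschitz bound force $\hat m$ to converge (uniformly on $D_0$, after passing through the change of index) to a function $m_*\in\Ss$ with $m_*(\theta_*^\top x) = m_0(\theta_0^\top x)$ for a.e.\ $x$. Here the identifiability hypothesis \ref{smooth:a0} enters: since $m_0$ is non-constant (guaranteed implicitly by \ref{smooth:a6}) and differentiable, \ref{smooth:a0} forces $\theta_* = \theta_0$ and $m_* = m_0$ on $D_0$. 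As every subsequence has a further subsequence with the same limit, $\hat\theta\overset{P}{\to}\theta_0$.

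Finally, for $\|\hat m - m_0\|^S_{D_0}\overset{P}{\to}0$: the sup-norm part follows by writing $\hat m(t) - m_0(t)$ for $t\in D_0$ via points $\hat\theta^\top x$ versus $\theta_0^\top x$, using the established $\|\hat m\circ\hat\theta - m_0\circ\theta_0\|_\infty\to 0$, the uniform Lipschitz bound on $\hat m$, and $|\hat\theta - \theta_0|\to 0$ to move between the two index scales (and $\|X\|\le T$ from \ref{smooth:a2} to bound the index displacement). The derivative part $\sup_{t\in D_0}|\hat m'(t) - m_0'(t)|\to 0$ is the most delicate step and I expect it to be the main obstacle: sup-norm convergence of $\hat m$ plus a uniform bound on $J(\hat m) = \|\hat m''\|_2$ does not automatically give convergence of $\hat m'$ in sup-norm, only relative compactness. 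The resolution is again an interpolation inequality on subintervals — $\|\hat m' - m_0'\|_{I} \lesssim \delta^{-1}\|\hat m - m_0\|_\infty + \delta\,(J(\hat m)+J(m_0))$ for any window of length $\delta$ — optimizing over $\delta$ to get a bound of order $\|\hat m - m_0\|_\infty^{1/2}(J(\hat m)+J(m_0))^{1/2}$, which tends to zero in probability since the first factor does and the second is $O_p(1)$. Care is needed at the endpoints of $D_0$ (one-sided windows) but $D_0$ is a compact interval so this is routine. This also re-derives $\|\hat m'\|_\infty = O_p(1)$ cleanly.
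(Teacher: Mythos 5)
Your proof is correct and rests on the same two pillars as the paper's (Arzel\`a--Ascoli compactness plus the identifiability condition \ref{smooth:a0}), but the way you extract derivative convergence is genuinely different. The paper proves (Lemma~\ref{smooth:pcomp}) that $\{m\in\Ss: \|m\|_\infty\le M,\ J(m)\le M\}$ is precompact in the \emph{Sobolev} norm $\|\cdot\|^S_D$ (applying Arzel\`a--Ascoli once to $\{m'_k\}$, which is H\"older--$1/2$ with uniform constant by Lemma~\ref{smooth:lem:bounds}, and once more to $\{m_k\}$), so the subsequence argument delivers $\|\hat m - m_0\|^S_{D_0}\to 0$ in a single stroke, and $\|\hat m'\|_\infty=O_p(1)$ is then obtained by extending from a point of $D_0$ to all of $D$ via $|\hat m'(s)-\hat m'(a)|\le J(\hat m)|s-a|^{1/2}$. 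You instead apply Arzel\`a--Ascoli only at the $C^0$ level to identify the limit and pin down $\hat\theta\to\theta_0$, then upgrade $\|\hat m-m_0\|_{\infty,D_0}\to 0$ to derivative convergence by a windowed interpolation bound. That route is quantitative (it gives a modulus), which the paper's pure compactness argument does not, and it is arguably more self-contained; the paper's version is shorter because the Sobolev precompactness does all the work at once.

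Two small things to fix. First, the interpolation inequality you wrote has the wrong exponent for $J=\|f''\|_{L_2}$: on a window of length $\delta$ the correct bound is $\|f'\|_{\infty,I}\lesssim \delta^{-1}\|f\|_\infty+\delta^{1/2}J(f)$ (the $\delta^{1/2}$ coming from Cauchy--Schwarz on $\int_I|f''|$), so the optimized bound is $\|f\|_\infty^{1/3}J(f)^{2/3}$ rather than $\|f\|_\infty^{1/2}J(f)^{1/2}$; the conclusion ($\to 0$) is unaffected. Second, your step ``the uniform convergence of $\hat m\circ\hat\theta$ and the equi-Lipschitz bound force $\hat m$ to converge\ldots after passing through the change of index'' is slightly circular as stated, since the index set $D_{\hat\theta}$ migrates with $\hat\theta$; the clean ordering is to first apply Arzel\`a--Ascoli to $\hat m$ on all of $D$ to extract $m_*$, then observe $m_*\circ\theta_*=m_0\circ\theta_0$ on $\rchi$ (using the equi-Lipschitz bound to transfer from $\hat\theta$ to $\theta_*$), and only then invoke \ref{smooth:a0} to conclude $\theta_*=\theta_0$ and $m_*=m_0$ on $D_0$. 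With these adjustments the argument is sound. Also note that the justification for upgrading $L_2(P_X)$ convergence to uniform convergence needs only that $\operatorname{supp}(P_X)=\rchi$ (which holds by definition of $\rchi$, together with \ref{smooth:a2}), not assumption \ref{smooth:a7}.
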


The above result shows that not only is $\hat m$ consistent but its derivative $\hat m'$ also converges uniformly to $m_0'$. Proof of Theorem \ref{smooth:thm:mainc} is in Appendix~\ref{smooth:app:thm:mainc_proof} and proof of  Theorem \ref{smooth:thm:cons} is given in Section~\ref{smooth:sec:proof_cons} the supplementary material. We next introduce further notation  and provide upper bounds on the rates of convergence of $\hat{\theta} $ and $\hat{m}$ separately.  

 Recall  that $\Theta$ is a closed subset of $\R^d$ and the interior of $\Theta$  in $\R^d$ is the null set. Thus we will define a ``local parameterization matrix'' that will help us create linear perturbations of $\theta_0$ that lie in $\Theta.$  For every real matrix  $G \in \R^{m\times n}$, we define $\|G\|_2:= \max_{ x\in S^{n-1}} |G x|$. This is sometimes called the operator or matrix 2-norm; see e.g., page 281 of \cite{Meyer01}.  The following lemma proved\footnote{Our proof is constructive.} in Section~\ref{smooth:sec:lem:h_lip_proof} of the supplementary file shows that  the ``local parameterization matrix" as a function of $\theta$ is Lipschitz at $\theta_0$ with respect to the operator norm.

\begin{lemma}\label{smooth:lem:H_lip}
There exists a  set of matrices $\{H_\theta \in\mathbb{R}^{d\times(d-1)} : \theta \in \Theta\}$ satisfying the following properties:
\begin{enumerate}[label=(\alph*)]
\item $\xi\mapsto H_\theta\xi$ are bijections from $\mathbb{R}^{d-1}$ to the hyperplanes $\{x\in\mathbb{R}^d: \theta^\top  x=0\}$.
\item The columns of $H_\theta$ form an orthonormal  basis for $\{x\in\mathbb{R}^d: \theta^\top  x=0\}$.
\item   $\|H_\theta - H_{\theta_0}\|_2\le|\theta- \theta_0 |.$
\item For all  distinct $\eta,\beta \in \Theta\setminus\theta_0$, such that $|\eta-\theta_0|\le1/2$ and $|\beta-\theta_0|\le1/2,$
\begin{equation}\label{eq:H_lip_gen}
\|H_{\eta}^\top-H_{\beta}^\top\|_2 \le 8(1+8/\sqrt{15})\frac{ |\eta-\beta| }{ |\eta-\theta_0|+|\beta-\theta_0|}.
\end{equation}
\end{enumerate}%
\end{lemma}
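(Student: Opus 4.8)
The plan is to construct $H_\theta$ explicitly by a single-step Gram–Schmidt-type formula so that all four properties can be read off and the Lipschitz bounds become concrete matrix-norm estimates. Fix a reference vector, say $e_1 \in \R^d$ (the first standard basis vector); since $\theta_1 = \theta^\top e_1 \ge 0$ for $\theta \in \Theta$, the vector $e_1$ is never antipodal to $\theta$ on the sphere, which will keep denominators bounded away from $0$. Set
\[
 v_\theta := e_1 - (\theta^\top e_1)\,\theta, \qquad u_\theta := v_\theta / |v_\theta|,
\]
the projection of $e_1$ onto the hyperplane $\theta^\perp := \{x : \theta^\top x = 0\}$, normalized. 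Then $\{\theta, u_\theta\}$ spans a $2$-plane, and I would complete $u_\theta$ to an orthonormal basis $(u_\theta, w_2(\theta), \dots, w_{d-1}(\theta))$ of $\theta^\perp$; the remaining columns can be obtained by a fixed rotation taking a $\theta$-independent orthonormal basis of $e_1^\perp$ (which is also $\theta$-independent) into $\theta^\perp$. Concretely, let $R_\theta$ be the Householder reflection (or the canonical minimal rotation) that maps $\theta$ to $e_1$; it is an explicit rational-plus-one-square-root function of $\theta$, equal to the identity at $\theta = e_1$ and well defined on all of $\Theta$ since $\theta^\top e_1 \ge 0$. Define $H_\theta := R_\theta^\top \, [\,e_2 \mid \dots \mid e_d\,]$, i.e. the last $d-1$ columns of $R_\theta^\top$. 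Properties (a) and (b) are then immediate: $R_\theta^\top$ is orthogonal and sends $e_1 \mapsto \theta$, so its remaining columns form an orthonormal basis of $\theta^\perp$, and $\xi \mapsto H_\theta \xi$ is the corresponding linear isometry onto $\theta^\perp$.

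For (c), the point is that $\theta \mapsto R_\theta$ (hence $\theta \mapsto H_\theta$) is smooth on a neighborhood of $\theta_0$ with controlled derivative. Writing $R_\theta = I + \text{(terms linear in } \theta - e_1) + \dots$ near any base point, I would show $\|R_\theta - R_{\theta_0}\|_2 \le C |\theta - \theta_0|$ by direct differentiation of the Householder formula, and then chase the constant: the claim is that with the canonical choice the constant is exactly $1$. This requires being careful — I expect one needs to either exploit that $R_\theta$ is a rotation and $|\theta|=|\theta_0|=1$ (so the relevant arc-length/chord estimate gives a clean constant) or, if the naive Householder constant is worse than $1$, replace $R_\theta$ by the minimal rotation in the $\Span(\theta,\theta_0)$-plane, for which $\|R_\theta - R_{\theta_0}\|_2$ is literally $|\theta - \theta_0|$ up to the obvious $2\sin(\alpha/2)$ vs. $\alpha$ comparison. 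Since $|\theta - \theta_0| = 2\sin(\alpha/2)$ where $\alpha$ is the angle, and the operator-norm distance between the two rotations is also $2\sin(\alpha/2)$, property (c) should come out with constant $1$.

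For (d), which I expect to be the main obstacle, the left side is $\|H_\eta^\top - H_\beta^\top\|_2 = \|H_\eta - H_\beta\|_2$, so by (c) it is at most $|\eta - \beta|$; but the asserted bound has an extra factor $8(1+8/\sqrt{15})/(|\eta-\theta_0|+|\beta-\theta_0|)$ in the denominator, which is only a genuine improvement over $|\eta-\beta|$ when $\eta,\beta$ are both far from $\theta_0$ (within the $1/2$-ball). So the real content is a bound of the form $\|H_\eta - H_\beta\|_2 \lesssim |\eta-\beta| / (|\eta - \theta_0| + |\beta - \theta_0|)$, i.e. the map $\theta \mapsto H_\theta$, restricted to the punctured half-ball $\{0 < |\theta - \theta_0| \le 1/2\}$, has Lipschitz-like behavior that degrades only like $1/|\theta - \theta_0|$ — this is exactly the kind of estimate one gets for a function like $\theta \mapsto (\theta - \theta_0)/|\theta - \theta_0|$ appearing inside the construction. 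Here I would trace through which pieces of $H_\theta$ actually blow up: in the explicit formula the problematic term is the normalization by $|v_\theta|$ or equivalently the direction $(\theta - \theta_0)/|\theta - \theta_0|$ of the "new" basis vector relative to $\theta_0$. Writing $a := \eta - \theta_0$, $b := \beta - \theta_0$, the key inequality is the standard
\[
 \left| \frac{a}{|a|} - \frac{b}{|b|} \right| \;\le\; \frac{2\,|a - b|}{\max(|a|,|b|)} \;\le\; \frac{4\,|a-b|}{|a| + |b|},
\]
and I would combine this with the triangle inequality, the Lipschitz bound from (c) for the "smooth part", and the crude bounds $|a|, |b| \le 1/2$ to control cross terms, collecting constants until they fit under $8(1 + 8/\sqrt{15})$. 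The $\sqrt{15}$ presumably enters from bounding $|v_\theta| = \sqrt{1 - (\theta^\top e_1)^2}$ from below on the relevant region (e.g. $|v_\theta| \ge \sqrt{15}/4$ when $\theta$ ranges over an appropriate set), which is where that particular numerical constant would be pinned down. The bookkeeping of constants is the tedious part, but no step is conceptually deep beyond the unit-vector-difference inequality above.
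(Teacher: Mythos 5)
Your construction is anchored at the wrong point, and this breaks both (c) and (d). You fix a reference direction $e_1$ (the first standard basis vector), independent of $\theta_0$, and build $H_\theta$ by rotating from $e_1^\perp$ to $\theta^\perp$. The paper instead fixes an orthonormal basis $\{e_1,\ldots,e_d\}$ of $\R^d$ with $e_1 := \theta_0$, sets $H_{\theta_0} := [e_2,\ldots,e_d]$, and defines $H_\theta := T^p_\theta T^d_\theta H_{\theta_0}$, where the composite $T^p_\theta T^d_\theta$ of two Householder reflections is the rotation in $\Span\{\theta_0,\theta\}$ carrying $\theta_0$ to $\theta$ and fixing $\Span\{\theta_0,\theta\}^\perp$. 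With the reference anchored at $\theta_0$ one has $H_\theta - H_{\theta_0} = (T^p_\theta T^d_\theta - \mathbb{I}_d)H_{\theta_0}$, and the rotation-minus-identity, acting in a single $2$-plane, has operator norm exactly $|\theta-\theta_0|$, giving (c) with constant $1$. With your reference anchored at a fixed $e_1 \neq \theta_0$, you must compare $R_\theta^\top$ and $R_{\theta_0}^\top$, two rotations living in generally \emph{different} planes ($\Span\{e_1,\theta\}$ versus $\Span\{e_1,\theta_0\}$), and the constant is strictly worse than $1$: in $\R^3$ with $e_1=(1,0,0)$, $\theta_0=(0,1,0)$, $\theta=(0,\cos\alpha,\sin\alpha)$, a direct computation gives $\|H_\theta - H_{\theta_0}\|_2 \ge \sqrt{2}\,|\sin\alpha|$ while $|\theta-\theta_0| = 2\sin(\alpha/2)$, so the Lipschitz ratio tends to $\sqrt{2}$ as $\alpha\to 0$, contradicting (c).

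The misplaced anchor is fatal for (d) as well. The right-hand side of (d) blows up only as $\eta$ or $\beta$ approaches $\theta_0$, and this is by design: in the paper's construction the only singularity of $\theta\mapsto H_\theta$ is at $\theta=\theta_0$, where $\theta_d=(\theta_0-\theta)/|\theta_0-\theta|$ and $\theta_p=(\theta_0-\theta\theta_0^\top\theta)/|\theta_0-\theta\theta_0^\top\theta|$ are undefined, and the unit-vector-difference inequality is applied with $a=\eta-\theta_0$, $b=\beta-\theta_0$. In your construction the singularity is at $\theta=e_1$ instead: $v_\theta=e_1-(\theta^\top e_1)\theta$ vanishes there, so $u_\theta=v_\theta/|v_\theta|$ is undefined at $\theta=e_1$ and changes direction arbitrarily fast nearby. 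Since $e_1\in\Theta$ and $\theta_0$ may equal or lie near $e_1$, the left side of (d) can be made arbitrarily large for $\eta,\beta$ near $e_1$ with $|\eta-\beta|$ small, while the right side stays $\lesssim |\eta-\beta|$, and the inequality fails. You actually name the correct problematic quantity, $(\theta-\theta_0)/|\theta-\theta_0|$, but that term never appears in your formula; it arises only once the reference is $\theta_0$. Taking $e_1:=\theta_0$ repairs the anchor and turns your rotation idea, your unit-vector lemma, and your constant-chase (the $\sqrt{15}$ comes from $1+\theta_0^\top\eta\ge 15/8$ on the half-ball, used to lower-bound $|\theta_0-\eta\theta_0^\top\eta|$, not from $|v_\theta|$) into essentially the paper's proof.
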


Note that for each $\theta \in \Theta,$ $H_{\theta}^\top $ is the Moore-Penrose pseudo-inverse of $H_\theta$, e.g., $H_\theta^\top  H_\theta = \mathbb{I}_{d-1}$ where $\mathbb{I}_{d-1}$ is the identity matrix of order $d-1$; see Section 5.2 of \cite{bootstrap} for a similar construction.

The following distributional assumption on $X$ is used to find the  upper bounds on the rates of convergence of $\hat{\theta} $ and $\hat{m}$ separately. 

\begin{enumerate}[label=\bfseries (A\arabic*)]
\setcounter{enumi}{4}
\item $ H_{\theta_0}^\top\E\big[\Var(X|\theta_0^\top X) \{m_0'(\theta_0^{\top}X)\}^2 \big]H_{\theta_0}$ is a positive definite matrix.\label{smooth:a6}
\end{enumerate}
\todo[inline]{This assumptions makes sure that the problem is regular. The dimension of $\Var(X|\theta_0^\top X)$ and ``inherent'' dimension of $\beta$ must match to maintain finite information.}
If one of the continuous covariates with a nonzero index parameter has a density  (with respect to the Lebesgue measure) that is bounded away from zero (on its support) then assumption \ref{smooth:a6} is satisfied. Note that \ref{smooth:a6} fails if $m_0$ is a constant function; however a single index model is not identifiable if $m_0$ is constant (see \ref{smooth:a0}).
 The following bounds  (proved in Section~\ref{smooth:sec:proof_ratest} of the supplementary file)  will help us compute the asymptotic distribution of $\hat{\theta}$ in Section \ref{smooth:sec:SemiInf}.

\begin{thm}\label{smooth:thm:ratest}
 Under \ref{smooth:a0}--\ref{smooth:a6}, $\hat{m}$ and $\hat{\theta}$ satisfy
\[
|\hat{\theta} - \theta_0| = O_p(\hat{\lambda}_n) \quad \text{and}\quad \|\hat{m}\circ\theta_0 - m_0\circ\theta_0\| = O_p(\hat{\lambda}_n).
\]
\end{thm}

\section{Semiparametric inference} \label{smooth:sec:SemiInf}
In this section we show that $\hat{\theta}$ is asymptotically normal and is a semiparametrically  efficient estimator of $\theta_0$ under homoscedastic errors. Before going into the derivation of the limit law of $\hat{\theta}$, we need to introduce some further notation and some regularity assumptions. For every $\theta\in\Theta$, let us define $D_\theta:= \{\theta^\top x: x\in \rchi\}$. Assumption~\ref{smooth:a0} implies that  there exists $r>0$ such that for all $\theta \in S^{d-1}\cap B_{\theta_0}(r)$ we have
\begin{equation}\label{eq:D_r}
  D_\theta \subsetneq D^{(r)}:= \bigcup_{\theta \in S^{d-1}\cap B_{\theta_0}(r)} D_\theta.
\end{equation}
See Section~\ref{rem:StrictSubset} of the supplementary file for a proof of this. 
For the rest of the paper we redefine $D:= D^{(r)}$. For every $\theta\in\Theta$, define $h_\theta: D \rightarrow \R^{ d}$ as
 \begin{align}
h_\theta(u) &:= \E[X |\theta^\top X=u].\label{smooth:eq:h_beta}
\end{align}
We use the following additional assumptions in the proof of asymptotic normality of $\hat\theta.$ 
 \begin{enumerate}[label=\bfseries (B\arabic*)]
\item $ h_{\theta}(\cdot)$ is twice continuously differentiable except possibly at a finite number of points, and for every $\theta_1$ and $\theta_2$ in $\Theta$,
\be \label{smooth:eq:lip_h_beta}
\|h_{\theta_1} -h_{\theta_2}\|_\infty \le \bar{M} |\theta_1-\theta_2|,
\ee
where $\bar{M}$ is a fixed finite constant.\label{smooth:b3}
\end{enumerate}
Let $p_{\epsilon,X}$ denote the  joint density (with respect to some dominating measure $\mu$ on $\R \times \rchi$) of $(\epsilon, X)$. Let $p_{\epsilon|X} (e,x)$ and $p_X(x)$ denote the corresponding conditional probability density of $\epsilon$ given $X$ and the marginal density of $X$, respectively. We define $\sigma : \rchi \rightarrow \R$ by  $\sigma^2( x):=\E(\epsilon^2|X=x).$
%
 \begin{enumerate}[label=\bfseries (B\arabic*)]
\setcounter{enumi}{1}
\item  $p_{\epsilon|X} (e,x)$ is differentiable with respect to $e$, $\|\sigma^2(\cdot)\|_\infty < \infty$ and $\|1/\sigma^2(\cdot)\|_\infty <\infty$.\label{smooth:dens}
\end{enumerate}
The assumptions \ref{smooth:b3} and \ref{smooth:dens} deserve comments.    The function $h_\theta$ plays a crucial role in the construction of ``least favorable" paths; see Section \ref{smooth:sec:App_least_fav_model}.   For the functions in the path to be in $\Ss$, we use the smoothness assumptions on $h_\theta$. In a way we need smoothness of $m_0$ or the distribution of $X$ to be smooth to be able to establish semiparametric efficiency.   \ref{smooth:dens} gives lower and upper bounds on the variance of $\epsilon$ as we are using a un-weighted least squares method to estimate parameters in a (possibly) heteroscedastic model.

In the sequel we will use standard empirical process theory notation. For any function $f: \R\times \rchi\to \R$ and  $(m, \theta) \in \mathcal{S}\times\Theta$, we define \[P_{\theta, m} f = \int f dP_{\theta, m}.\] Note that   $P_{\theta, m} f$ can be a random variable if $\theta$ (or $m$) is random. Moreover, for any function $f: \R\times \rchi\to \R$, we define \[\p_nf := \frac{1}{n}\sum_{i=1}^n f(y_i, x_i) \quad \text{and}\quad \g_n f:= \frac{1}{\sqrt{n}} \sum_{i=1}^n \big[f(y_i, x_i) -P_{\theta_0,m_0} f\big].\]

%

\subsection{Efficient score}\label{smooth:sec:eff_score}
As a first step in showing that $\hat{\theta}$ is an efficient estimator, in the following we find the efficiency bound for $\theta_0$ in model~\eqref{smooth:simsl}. To compute the score for the model, we will first consider parametric paths on $\Theta$. For any $\eta\in\mathbb{R}^{d-1}$ and $\theta\in\Theta$,  we  now define a path $s\mapsto \zeta_s(\theta,\eta),$ for $s\in \R $ and $|s| \le |\eta|^{-1},$ as
\be\label{smooth:eq:path_para}
\zeta_s(\theta,\eta):=\sqrt{1-s^2|\eta|^2}\, \theta + s H_\theta \eta.
\ee
Note that $\theta^\top H_\theta =0_{d-1}$ and $|H_\theta \eta| =|\eta|$ for all $\eta \in \R^{d-1}$. When $|s| \le 1/|\eta|$ we have  $\zeta_s(\theta,\eta)\in S^{d-1}$. For every fixed $s \ne 0$, as $\eta$ varies in $B_0^{d-1}(|s|^{-1})$, $\zeta_s(\theta,\eta)$ takes all values in the set $\{\beta \in S^{d-1}: \theta^\top \beta > 0\}$ and $s H_\theta \eta$ is the orthogonal projection of $\zeta_s(\theta,\eta)$ onto the hyperplane $\{x\in\mathbb{R}^d: \theta^\top  x=0\}$.

We now attempt to calculate the efficient score for
\begin{equation}\label{smooth:eq:score_general_model}
     Y=m(\theta^\top X)+\epsilon
 \end{equation} for some $(m,\theta)\in \Ss\times\Theta$ under assumptions \ref{smooth:a3} and \ref{smooth:dens}. The log-likelihood  of the model is
\begin{equation}\label{smooth:eq:loglik}
l_{\theta, m}(y,x)= \log\left[ p_{\epsilon|X} \big(y-m(\theta^\top x),x\big) p_X(x)\right].
\end{equation}
\begin{remark}\label{rem:Notations}
Note that under \eqref{smooth:eq:score_general_model}, we have $\epsilon=Y-m(\theta^\top X).$  For every function $b(e,x): \R\times \rchi \to \R$ in $L_2(P_{\epsilon, X}),$ there exists an ``equivalent'' function $\tilde b (y,x): \R\times \rchi\to \R$ in $L_2(P_{\theta,m})$  defined as $\tilde b(y,x):= b(y-m(\theta^\top x), x)\in L_2(P_{\theta,m})$. In this section, we use the function arguments $(e,x)$ $(L_2(P_{\epsilon, X}))$ and $(y,x)$ $(L_2(P_{\theta,m}))$ interchangeably.
\end{remark}
For  $\eta \in S^{d-2} \subset \R^{d-1}$, consider the path defined in~\eqref{smooth:eq:path_para}. Note that this is  a valid path through $\theta$ as $\zeta_0(\theta,\eta)=\theta$.   The score function for this submodel (the parametric score) is
\begin{equation}
\left. \frac{\partial l_{\zeta_s(\theta,\eta), m} (y,x) }{\partial s}\right\vert_{s=0}= \eta^\top S_{\theta,m}(y,x), \text{ where }S_{\theta,m}(y,x):=   -\frac{p'_{\epsilon|X} \big(y-m(\theta^\top x),x\big)}{p_{\epsilon|X} \big(y-m(\theta^\top x),x\big)}m^\prime(\theta^\top x) H_\theta ^\top x. \label{smooth:eq:paraScore}
\end{equation} We now define a parametric submodel for  the unknown nonparametric components: 
\begin{align} \label{smooth:eq:nonp_path}
\begin{split}
m_{s,a}(t)&=m(t) - s a(t),\\
p_{\epsilon|X;s, b}(e, x) &= p_{\epsilon|X}(e, x) (1 + s b(e, x)),\\
p_{X; s,q}(x) &=p_X(x)(1+s q(x)),
\end{split}
\end{align} where $s\in \R$, $b: \R \times \rchi\rightarrow\R$ is a bounded function such that $\E(b(\epsilon, X) |X)=0$ and $\E(\epsilon b(\epsilon, X) |X)=0$,  $a \in \mathcal{S}$ such that $J(a)< \infty$ and $q: \rchi \rightarrow \R$ is a bounded function such that $\E(q(X))=0.$ Consider the following  parametric submodel of \eqref{smooth:simsl},
\begin{equation} \label{smooth:eq:semi_path}
s \mapsto ( \zeta_s(\theta,\eta), \,m_{s,a}, \,p_{\epsilon|X;s, b}, \,  p_{X; s,q}(x) )
\end{equation}
where $\eta \in  S^{d-2}$. Differentiating the log-likelihood of the submodel in \eqref{smooth:eq:semi_path} with respect to $s$,  we get that the score along the submodel in \eqref{smooth:eq:semi_path} is
$$  \eta^\top S_{\theta,m} (y,x) + \frac{p'_{\epsilon|X} \big(y-m(\theta^\top x),x\big)}{p_{\epsilon|X} \big(y-m(\theta^\top x),x\big)} a(\theta^\top x) + b(y-m(\theta^\top x), x) +q(x).$$
It is now  easy to see that the nuisance tangent space, denoted by $\Lambda$, of the model is
\begin{align} \label{smooth:eq:nuisance_socre}
\begin{split}
\Lambda:= \overline{\mathrm{lin}}\,& \Big\lbrace f \in L_2(P_{\epsilon,X}): f(e,x) = \frac{p'_{\epsilon|X} \big(e,x\big)}{p_{\epsilon|X} \big(e,x\big)} a(\theta^\top x) +b(e, x) + q(x), \text{ where} \\
& \quad a\in \Ss, J(a)<\infty, b: \R \times \rchi\rightarrow \R \text{ and } q: \rchi\rightarrow\R \text{ are bounded functions},\\ & \hspace{.5in} \E(\epsilon b(\epsilon,X)|X)=0,\, \E(b(\epsilon,X)|X)=0, \text{ and } \E(q(X))=0 \Big\rbrace,
 \end{split}
\end{align}
where for any set $A\subset L_2(P_{\theta,m}),$ $\overline{\mathrm{lin}}\,A$ denotes the closure in $L^2(P_{\theta,m})$ of the linear span of functions in $A$; see \cite{Newey90} for a review of the construction of the nonparametric tangent set as a closure of scores of parametric submodels of the nuisance parameter.
By Corollary A.1 of \cite{Gyorfi02}, we have that the class of infinitely often differentiable functions on $D$ is dense in $L_2(\mathbf{m})$, where $\mathbf{m}$ denotes the Lebesgue measure on $D$. Thus we have that
\begin{align*}
\overline{\mathrm{lin}}\{ a\in \Ss : J(a)<\infty\}&= \{a: D \rightarrow \R|\, a\in L_2(\mathbf{m})\},\\
\overline{\mathrm{lin}}\{ q: \rchi \rightarrow \R|\, q \text{ is a bounded function and }  \E(q(X))=0\}&= \{ q \in L_2(P_X)|\,  \E(q(X))=0\},
\end{align*}
and
\begin{align*}
\overline{\mathrm{lin}}\{ b: \R \times\rchi \rightarrow \R|\, b \text{ is a }&\text{bounded function, } \E(\epsilon b(\epsilon,X)|X)=\E(b(\epsilon,X)|X)=0\}\\
&=\{  b \in L_2(P_{\epsilon,X})  |\, \E(\epsilon b(\epsilon,X)|X)=\E(b(\epsilon,X)|X)=0\}.
\end{align*}
 Thus, it is easy to see that under assumptions \ref{smooth:a0}--\ref{smooth:a6}, \ref{smooth:b3}, and \ref{smooth:dens},  the nuisance tangent space of \eqref{smooth:simsl} is
\begin{align*}
\Lambda= \Big\lbrace & f\in L_2(P_{\epsilon,X}) :\, f(e,x) = \frac{p'_{\epsilon|X} \big(e,x\big)}{p_{\epsilon|X} \big(e,x\big)} a(\theta^\top x) +b(e, x) + q(x), \text{ where }\\
& \hspace{.3in}   a\in L_2(\mathbf{m}), b \in L_2(P_{\epsilon,X}), q\in L_2(P_X), \E(\epsilon b(\epsilon,X)|X)=0,\\
&\hspace{1.5in} \E(b(\epsilon,X)|X)=0, \text{ and } \E(q(X))=0 \Big\rbrace;
\end{align*}
  see Theorem 4.1 in \cite{NeweyStroker93} and Proposition 1 of \cite{MaZhu13} for a similar nuisance tangent space. Observe that the efficient score  is  the $L_2(P_{\epsilon,X})$ projection of $S_{\theta,m}(y,x)$ onto $\Lambda^{\perp}$, where $\Lambda^\perp$ is the orthogonal complement of $\Lambda$ in $L_2(P_{\epsilon,X})$. \cite{NeweyStroker93} and \cite{MaZhu13} show that
  \begin{align}   \label{smooth:eq:Lamperp_char}
  \Lambda^\perp = \Big\{& f\in L_2(P_{\epsilon,X}):\, f(e,x)= \big[g(x)- \E\big(g(X)|\theta^\top X=\theta^\top x\big)\big] e, \text{ for some } g:\rchi \rightarrow\R\Big\}.
  \end{align}

 Using calculations similar those in Proposition 1 in  \cite{MaZhu13}, it can be shown that
{ \be \label{smooth:eq:trueEffScore}
\Pi(S_{\theta,m}  |\Lambda^\perp)(y,x)=  \frac{(y-m(\theta^\top x))}{\sigma^2(x)} m^\prime(\theta^\top x)  H_\theta^\top \left\lbrace x -  \frac{\E(\sigma^{-2}(X)X|\theta^\top X=\theta^\top x)}{\E(\sigma^{-2}(X)|\theta^\top X=\theta^\top x)}\right\rbrace,
\ee}
where for any $f\in L_2(P_{\epsilon,X})$,  $\Pi(f|\Lambda^\perp)$ denotes the $L_2(P_{\epsilon,X})$ projection of $f $ onto the space $\Lambda^\perp$. $\Pi(S_{\theta,m}|\Lambda^\perp)$ is sometimes denoted by $S^{eff}_{\theta,m}$.
It is important to note that the optimal estimating equation depends on $\sigma^2(\cdot)$. Since in the semiparametric model $\sigma^2(\cdot)$ is left unspecified, it is unknown. Without additional assumptions, nonparametric estimators of $\sigma^2(\cdot)$  have a slow rate of convergence to $\sigma^2(\cdot)$, especially if $d$ is large. Thus if we substitute $\hat{\sigma}(x)$ in the efficient score equation, the solution of the modified score equation would lead to poor finite sample performance; see \cite{Tsiatis06}.

To focus our presentation on the main concepts, briefly consider the case when  $\sigma^2(\cdot)\equiv\sigma^2$. In this case the efficient score $\Pi(S_{\theta,m}  |\Lambda^\perp)(y,x)$ is
\begin{equation}\label{eq:eff_score_0}
 \frac{1}{\sigma^{2}}(y-m(\theta^\top x)) m^\prime(\theta^\top x)  H_\theta^\top \left\lbrace x - h_\theta(\theta^\top x)\right\rbrace,
\end{equation}
where $h_\theta(\theta^\top x)$ is defined in \eqref{smooth:eq:h_beta}. Asymptotic normality and efficiency of $\hat{\theta}$ would follow if we can show that $(\hat{m},\hat{\theta})$  satisfies the efficient score equation \textit{approximately}, i.e.,
\begin{equation}\label{eq:eff_app_11}
\p_n \left[  \frac{1}{\sigma^{2}}(Y-\hat{m}(\hat\theta^\top X)) \hat{m}^\prime(\hat{\theta}^\top X)  H_{\hat{\theta}}^\top \big\{X - h_{\hat{\theta}}(\hat{\theta}^\top X)\big\}\right]= o_p(n^{-1/2})
\end{equation}
and a class of functions formed by the efficient score indexed by  $(\theta, m)$ in a ``neighborhood'' of $(\theta_0,m_0)$ satisfies some ``uniformity'' conditions, e.g., it is a Donsker class. We formalize this notion of efficiency in Theorem \ref{smooth:thm:Main_rate} below.
\subsection{Efficiency of  \texorpdfstring{$\hat{\theta}$}{Lg}}\label{smooth:sec:eff_theta}
 \begin{thm} \label{smooth:thm:Main_rate}
Assume that $(Y,X)$ satisfies \eqref{smooth:simsl} and assumptions \ref{smooth:a0}--\ref{smooth:a6}, \ref{smooth:b3}, and \ref{smooth:dens} hold. Define
\be \label{smooth:eq:EffScore}
\tilde{\ell}_{\theta,m}(y,x):=\big(y-m(\theta^\top x)\big) m^\prime(\theta^\top x)  H_\theta^\top \left\lbrace x - h_\theta(\theta^\top x)\right\rbrace.
\ee
If $V_{\theta_0,m_0}:=P_{\theta_0,m_0}(\tilde{\ell}_{\theta_0,m_0} S^{\top}_{\theta_0,m_0})$ is a nonsingular matrix in $\R^{(d-1) \times (d-1)}$, then
\begin{equation}\label{smooth:eq:globalEff}
\sqrt{n} (\hat{\theta}- \theta_0)\stackrel{d}{\rightarrow}   N(0,H_{\theta_0} V_{\theta_0,m_0}^{-1}  \tilde{I}_{\theta_0,m_0}(H_{\theta_0} V_{\theta_0,m_0}^{-1})^\top),
\end{equation}
where  $\tilde{I}_{\theta_0,m_0} := P_{\theta_0,m_0} (\tilde{\ell}_{\theta_0,m_0}\tilde{\ell}^\top_{\theta_0,m_0})$. If we further assume that $\sigma^2(\cdot)\equiv\sigma^2$ and  if the efficient information matrix, $\tilde{I}_{\theta_0,m_0}$, is nonsingular, then $\hat{\theta}$ is an efficient estimator  of $\theta_0$, i.e.,
\begin{equation}\label{smooth:eq:localeffestim}
\sqrt{n} (\hat{\theta}-\theta_0) \stackrel{d}{\rightarrow} N(0, \sigma^4 H_{\theta_0}\tilde{I}^{-1}_{\theta_0,m_0} H_{\theta_0}^\top).
\end{equation}
 \end{thm}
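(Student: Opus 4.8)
The plan is to apply the standard theory for semiparametric $Z$-estimators (as in Chapter 25 of \cite{VdV98} or \cite{BickelEtAl93}), adapted to the curved parameter space $\Theta$ via the local parameterization $s\mapsto\zeta_s(\theta,\eta)$ from \eqref{smooth:eq:path_para}. The first step is to establish the approximate efficient score equation for the reparameterized problem. Working at $\hat\theta$, I would show that the PLSE satisfies
\[
\p_n\,\tilde{\ell}_{\hat\theta,\hat m}(Y,X) = o_p(n^{-1/2}),
\]
where $\tilde{\ell}_{\theta,m}$ is as in \eqref{smooth:eq:EffScore}. This is where the smoothing-spline structure and the rates from Theorems \ref{smooth:thm:cons} and \ref{smooth:thm:ratest} enter: since $\hat m$ minimizes $\mathcal{L}_n(\cdot,\hat\theta;\hat\lambda_n)$ over $\Ss$, perturbing $\hat m$ along the path $\hat m - s\,a$ with $a = m_0'(h_{\hat\theta})\circ\hat\theta$-type directions (and using that $\hat m$ is a natural cubic spline, Theorem \ref{smooth:thm:existance}) yields a first-order stationarity condition whose leading term is exactly $\p_n$ applied to the $x - h_{\hat\theta}(\hat\theta^\top x)$-projected residual score, with the roughness-penalty contribution controlled by $\hat\lambda_n^2 J(\hat m) J(a) = o_p(n^{-1/2})$ thanks to $J(\hat m)=O_p(1)$ (Theorem \ref{smooth:thm:mainc}) and $\hat\lambda_n = o_p(n^{-1/4})$ (assumption \ref{smooth:a4}). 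Simultaneously, perturbing $\hat\theta$ along $\zeta_s(\hat\theta,\eta)$ gives the component in the $H_{\hat\theta}^\top$ directions.

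The second step is the master Donsker/equicontinuity argument. I would show that the class $\{(y,x)\mapsto\tilde{\ell}_{\theta,m}(y,x): \theta\in B_{\theta_0}(\delta)\cap S^{d-1},\ \|m-m_0\|^S_{D}\le\delta,\ J(m)\le M\}$ is $P_{\theta_0,m_0}$-Donsker for small $\delta$ and fixed $M$. This follows from: smoothness of $h_\theta$ and its Lipschitz dependence on $\theta$ (assumptions \ref{smooth:b3}), the Lipschitz property of $H_\theta$ (Lemma \ref{smooth:lem:H_lip}(c)), the uniform bound on $\hat m$ and $\hat m'$ and the $J$-ball's finite bracketing entropy (a standard Sobolev/Birman--Solomjak estimate), the sub-Gaussian tail of $\epsilon$ (assumption \ref{smooth:a3}), and compactness of $\rchi$ (assumption \ref{smooth:a2}). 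Combined with the consistency rates $|\hat\theta-\theta_0|=O_p(\hat\lambda_n)$, $\|\hat m\circ\theta_0-m_0\circ\theta_0\|=O_p(\hat\lambda_n)$, $\|\hat m-m_0\|^S_{D_0}\xrightarrow{P}0$, one gets
\[
\g_n\big(\tilde{\ell}_{\hat\theta,\hat m} - \tilde{\ell}_{\theta_0,m_0}\big) = o_p(1).
\]

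The third step is the Taylor expansion of the population map. Writing $\hat\theta=\zeta_{\hat s}(\theta_0,\hat\eta)$ with $\hat s\hat\eta = H_{\theta_0}^\top(\hat\theta-\theta_0) + o_p(|\hat\theta-\theta_0|)$, I expand $\hat s\,\hat\eta\mapsto P_{\theta_0,m_0}\,\tilde{\ell}_{\zeta_{\hat s}(\theta_0,\hat\eta),\hat m}$ around $0$. The derivative in the parametric direction is $-V_{\theta_0,m_0}$ (this is precisely the no-bias / information-identity content of $V_{\theta_0,m_0} = P_{\theta_0,m_0}(\tilde{\ell}_{\theta_0,m_0}S_{\theta_0,m_0}^\top)$, using that $\tilde{\ell}_{\theta_0,m_0}$ is orthogonal to the nuisance tangent space $\Lambda$ so the $\hat m$-perturbation contributes only a higher-order term bounded by $\|\hat m-m_0\|^S_{D_0}\cdot|\hat\theta-\theta_0| = o_p(n^{-1/2})$ after also using $\|\hat m\circ\theta_0 - m_0\circ\theta_0\| = O_p(\hat\lambda_n)$ and $\hat\lambda_n^2 = o_p(n^{-1/2})$). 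Assembling the three steps gives
\[
0 = \sqrt{n}\,\p_n\tilde{\ell}_{\hat\theta,\hat m} = \g_n\tilde{\ell}_{\theta_0,m_0} - V_{\theta_0,m_0}\sqrt{n}\,H_{\theta_0}^\top(\hat\theta-\theta_0) + o_p(1),
\]
so $\sqrt{n}\,H_{\theta_0}^\top(\hat\theta-\theta_0) \xrightarrow{d} N(0, V_{\theta_0,m_0}^{-1}\tilde I_{\theta_0,m_0}(V_{\theta_0,m_0}^{-1})^\top)$ by the CLT applied to $\g_n\tilde{\ell}_{\theta_0,m_0}$. Finally, since $\hat\theta,\theta_0\in S^{d-1}$ with $|\hat\theta-\theta_0|=o_p(1)$, the map $\hat\theta\mapsto H_{\theta_0}^\top(\hat\theta-\theta_0)$ is a local diffeomorphism onto a neighborhood of $0$ in $\R^{d-1}$ with $\hat\theta-\theta_0 = H_{\theta_0}H_{\theta_0}^\top(\hat\theta-\theta_0) + o_p(|\hat\theta-\theta_0|)$ (because the radial component is second order on the sphere), which yields \eqref{smooth:eq:globalEff}. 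For \eqref{smooth:eq:localeffestim}, substituting $\sigma^2(\cdot)\equiv\sigma^2$ makes $\tilde{\ell}_{\theta_0,m_0} = \sigma^2 S^{eff}_{\theta_0,m_0}$ by \eqref{eq:eff_score_0}, so $V_{\theta_0,m_0} = \tilde I_{\theta_0,m_0}/\sigma^2 \cdot \sigma^2 \cdot \ldots$ — more precisely $V_{\theta_0,m_0} = \sigma^{-2}\tilde I_{\theta_0,m_0}$ and the sandwich collapses to $\sigma^4 H_{\theta_0}\tilde I_{\theta_0,m_0}^{-1}H_{\theta_0}^\top$, matching the efficiency bound from the projection computation \eqref{smooth:eq:trueEffScore}.

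I expect the main obstacle to be the first step — rigorously deriving the approximate score equation $\p_n\tilde{\ell}_{\hat\theta,\hat m} = o_p(n^{-1/2})$. Unlike a pure $M$-estimation problem where one differentiates a fixed criterion, here $\hat m$ is an infinite-dimensional minimizer living in $\Ss$, and one must carefully choose an approximately least-favorable perturbation direction that stays inside $\Ss$, control the roughness-penalty cross term at rate $o_p(n^{-1/2})$, and handle the fact that the knots of the natural cubic spline $\hat m$ themselves depend on $\hat\theta$. The interplay between the two rate conditions in \eqref{smooth:eq:conl} — $\hat\lambda_n^{-1} = O_p(n^{2/5})$ for controlling the bias/entropy and $\hat\lambda_n = o_p(n^{-1/4})$ for killing the penalty cross term — is delicate and is where most of the real work lies.
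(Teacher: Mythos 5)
Your overall plan coincides with the paper's: (i) establish $\p_n\tilde{\ell}_{\hat\theta,\hat m} = o_p(n^{-1/2})$ via a least-favorable perturbation inside $\Ss\times\Theta$; (ii) prove asymptotic equicontinuity $\g_n(\tilde{\ell}_{\hat\theta,\hat m}-\tilde{\ell}_{\theta_0,m_0})=o_p(1)$ via bracketing/Donsker bounds; (iii) linearize the population bias using the information identity and the sphere chart $H_{\theta_0}$; (iv) read off the sandwich variance, and the efficient variance when $\sigma^2(\cdot)\equiv\sigma^2$ via $\tilde{\ell}_{\theta_0,m_0}=\sigma^2 S^{\mathrm{eff}}_{\theta_0,m_0}$. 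Step 1 (Theorem \ref{smooth:thm:eff_equation}), Step 2 (Theorem \ref{smooth:thm:ConsistencyofG_n}), and the final sphere-chart algebra all match what the paper does, and you correctly identify that the real work is constructing a perturbation of $\hat m$ that stays in $\Ss$ and controlling the penalty cross-term with $J(\hat m)=O_p(1)$ and $\hat\lambda_n^2=o_p(n^{-1/2})$.

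The soft spot is your Step 3. You claim the nuisance-induced remainder in the population expansion is $o_p(n^{-1/2})$ by invoking the bound $\|\hat m-m_0\|^S_{D_0}\cdot|\hat\theta-\theta_0|$, but Theorem \ref{smooth:thm:cons} only gives $\|\hat m-m_0\|^S_{D_0}=o_p(1)$ with no rate, while $|\hat\theta-\theta_0|=O_p(\hat\lambda_n)$ can be as large as $O_p(n^{-2/5})$; the product is then only $o_p(n^{-2/5})$, which is not $o_p(n^{-1/2})$. Your fallback — the $L^2$ rate $O_p(\hat\lambda_n)$ for $\|\hat m\circ\theta_0-m_0\circ\theta_0\|$ — would only help if the cross-term were controlled by that $L^2$ norm, but the $\theta$-derivative of $P_{\theta_0,m_0}\tilde{\ell}_{\theta,m}$ depends on $m$ through $m'$, for which no rate is available. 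The paper avoids this difficulty in two ways you should adopt. First, the ``$\hat m$-perturbation contributes nothing'' statement is an \emph{exact} identity, not an asymptotic-orthogonality argument: Theorem \ref{smooth:thm:nobias} shows $P_{\theta,m_0}\tilde{\ell}_{\theta,m}=0$ for all $\theta$ and all $m\in\Ss$ with $J(m)<\infty$, simply because $\E[K_1(X;\theta)\mid\theta^\top X]=0$. Second, the linearization \eqref{smooth:eq:ParaScore_approx} is proved with remainder $o_p(\sqrt{n}\,|\hat\theta-\theta_0|)$ — not $o_p(1)$ — using differentiability in quadratic mean of $\theta\mapsto p_{\theta,m_0}$ and the $L^2$-consistency $P_{\theta_0,m_0}|\tilde{\ell}_{\hat\theta,\hat m}-\tilde{\ell}_{\theta_0,m_0}|^2=o_p(1)$ (Lemma \ref{smooth:lem:Consistencyofell}), which needs no rates for $\hat m$ in strong norms; the self-referential bound is then closed by the invertibility of $V_{\theta_0,m_0}$ as in \eqref{smooth:eq:final_thm}. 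If you target the $o_p(\sqrt{n}\,|\hat\theta-\theta_0|)$ remainder and invoke the exact unbiasedness identity, your Step 3 becomes both correct and provable under the stated assumptions.
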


\begin{remark} Note that  even if $\E(\epsilon^2|X)\not\equiv \sigma^2$, $\hat{\theta}$ is a consistent and asymptotically normal estimator of $\theta$. When the constant variance assumption provides a good approximation to the truth, estimators similar to $\hat{\theta}$ have been known to have high relative efficiency with respect to the optimal semiparametric efficiency bound; see Page 94 of \cite{Tsiatis06} for a discussion. When $\sigma^2(x)=V^2(\theta_0^\top x)$ for some unknown real-valued function $V,$ we can define a weighted PLSE as
$$(\tilde{m},\tilde{\theta}):=\argmin_{(m,\theta)\in\mathcal{S}\times\Theta} \frac{1}{n}\sum_{i=1}^n \hat{w}(x_i) \big(y_i-m( \theta^\top x_i)\big)^2 + \hat\lambda_n^2 J^2(m),$$
where $\hat{w}(x)$ is a consistent estimator of $V^{-2}(\theta_0^\top x)$. Theorem \ref{smooth:thm:Main_rate} can be easily generalized to show that $\tilde{\theta}$ is an efficient estimator of $\theta_0$ under this specific heteroscedastic structure.
\end{remark}

\begin{remark}
The asymptotic variance of $\sqrt{n} (\hat{\theta}-\theta_0)$ is the same as that obtained in Section 2.4 of \cite{HardleEtAl93} and~\cite{chang2010asymptotically} (under assumption~\ref{smooth:a4}). However both require stronger smoothness assumptions on $m_0$ for their estimators. 
\end{remark}

\begin{remark}\label{rem:Degenracy}
Observe that the variance of the limiting distribution (for both the heteroscedastic and homoscedastic models) is singular. This can be attributed to the fact that $\Theta$ is a Stiefel manifold of dimension $\R^{d-1}$ and has an empty interior in $\R^d.$
 \end{remark}
\subsubsection{Proof of Theorem \ref{smooth:thm:Main_rate}}\label{sec:proof_main_thm} 

In the following we give a sketch of the proof of \eqref{smooth:eq:globalEff}.  Some of the steps are proved in the following sections.

\begin{enumerate}[label=\bfseries Step \arabic*]
\item In Theorem~\ref{smooth:thm:eff_equation} we will show that $(\hat{m},\hat{\theta})$  satisfy the efficient score equation \textit{approximately}, i.e.,\label{smooth:item:step0}
\begin{equation}\label{smooth:eq:app_score}
\sqrt{n} \p_n \tilde{\ell}_{\hat{\theta}, \hat{m}}= o_p(1).
\end{equation}
 \item In Section  \ref{smooth:thm:nobias_proof} of the supplementary material, we prove that $\tilde{\ell}_{\hat{\theta},\hat{m}}$ is unbiased in the sense of \cite{VdV02}, i.e.,
\begin{equation} \label{smooth:eq:nobias_main_th}
 P_{\hat{\theta}, m_0} \tilde{\ell}_{\hat{\theta},\hat{m}} =0.
\end{equation}  \label{smooth:item:step1}
Similar conditions have appeared before in proofs of asymptotic normality of the MLE (e.g., see \cite{MR1394975}) and the construction of efficient one-step estimators (see~\cite{MR913573}); see~Section 3 of \cite{MR1803168} for further discussion.
 \item We prove \label{smooth:item:step2}
\begin{equation} \label{smooth:eq:Emp_proc_zero}
 \g_n ( \tilde\ell_{\hat{\theta},\hat{m}}- \tilde\ell_{\theta_0,m_0}) =o_p(1)
\end{equation}
in Theorem \ref{smooth:thm:ConsistencyofG_n}. In  view of  \eqref{smooth:eq:app_score} and \eqref{smooth:eq:nobias_main_th} an equivalent formulation of \eqref{smooth:eq:Emp_proc_zero} is
\begin{equation}\label{smooth:eq:Final_eqpart1}
 \sqrt{n} (P_{\hat{\theta}, m_0} - P_{\theta_0, m_0}) \tilde \ell_{\hat{\theta},\hat{m}} =\g_n \tilde{\ell}_{\theta_0,m_0} + o_p(1).
\end{equation}
\item To complete the proof of \eqref{smooth:eq:globalEff}, it is enough to show that \label{smooth:item:step3}
\begin{equation} \label{smooth:eq:ParaScore_approx}
\sqrt{n} (P_{\hat{\theta}, m_0} - P_{\theta_0, m_0}) \tilde\ell_{\hat{\theta},\hat{m}} = \sqrt{n}V_{\theta_0,m_0}  H_{\theta_0}^\top (\hat{\theta}- \theta_0) + o_p(\sqrt{n} |\hat{\theta} -\theta_0|).
\end{equation}
A proof of slightly simplified version of \eqref{smooth:eq:ParaScore_approx} can be found in the proof of Theorem 6.20 of \cite{VdV02}. However, for the sake of completeness we give a proof  of \eqref{smooth:eq:ParaScore_approx} in Section \ref{smooth:app:proof_step3} of the supplementary material.

 \end{enumerate}
 \noindent Observe that  \eqref{smooth:eq:Final_eqpart1} and \eqref{smooth:eq:ParaScore_approx} imply
 \begin{align} \label{smooth:eq:final_thm}
  \begin{split}
\sqrt{n}V_{\theta_0,m_0}  H_{\theta_0}^\top(\hat{\theta}- \theta_0)  ={}&\g_n \tilde\ell_{\theta_0,m_0} + o_p(1+\sqrt{n} |\hat{\theta}-\theta_0|),\\
 \Rightarrow  \sqrt{n} H_{\theta_0}^\top (\hat{\theta}- \theta_0) ={}& V_{\theta_0,m_0}^{-1}\g_n \tilde\ell_{\theta_0,m_0} + o_p(1) \stackrel{d}{\rightarrow} V_{\theta_0,m_0}^{-1} N(0, \tilde{I}_{\theta_0,m_0}).
 \end{split}
 \end{align}
The proof of the theorem will be complete  if we can show that  $$\sqrt{n}(\hat{\theta}- \theta_0)=H_{\theta_0} \sqrt{n} H_{\theta_0}^\top(\hat{\theta}- \theta_0) +o_p(1).$$
 Let $\hat{\eta}$ be the unique vector in $\R^{d-1}$ that satisfies the following  equation:
\begin{equation} \label{smooth:eq:local_hat}
\hat{\theta} =\sqrt{1-|\hat{\eta}|^2}\, \theta_0+ H_{\theta_0} \hat{\eta},
\end{equation}
note that such an $\hat{\eta}$ will always exists as $\hat{\theta} \stackrel{P}{\rightarrow} \theta_0.$ As $H_{\theta_0}^\top \theta_0=0$ and $H_{\theta_0}^\top H_{\theta_0}= \mathbb{I}_{d-1}$, pre-multiplying both sides of the previous equation  by $H_{\theta_0}^\top$ we get
\begin{equation}\label{smooth:eq:eta_hat}
\hat{\eta}= H_{\theta_0}^\top(\hat{\theta}- \theta_0).
\end{equation}
Substituting the above expression of $\hat\eta$ in  \eqref{smooth:eq:local_hat} and subtracting $\theta_0$ from both sides of \eqref{smooth:eq:local_hat} we get
\begin{equation}
\hat{\theta}-\theta_0= \Big[ \sqrt{1-|H_{\theta_0}^\top(\hat{\theta}- \theta_0)|^2}-1\Big] \theta_0+ H_{\theta_0} H_{\theta_0}^\top(\hat{\theta}- \theta_0).
\end{equation}
By  \eqref{smooth:eq:final_thm} we have that $\sqrt{n} H_{\theta_0}^\top (\hat{\theta}- \theta_0)=O_p(1).$ Moreover, note that $\sqrt{1-x^2}-1 =O(x^2),$ as $x\rightarrow0.$ Combining the above facts, we get
\begin{align*}
\sqrt{n}(\hat{\theta}- \theta_0)&=\sqrt{n} O_p(|H_{\theta_0}^\top(\hat{\theta}- \theta_0)|^2) + \sqrt{n} H_{\theta_0}H_{\theta_0}^\top(\hat{\theta}- \theta_0)\\
&= H_{\theta_0} \sqrt{n} H_{\theta_0}^\top(\hat{\theta}- \theta_0)+ O_p(n^{-1/2}).
\end{align*}
Now we prove \eqref{smooth:eq:localeffestim}. Assume that $\sigma^2(\cdot)\equiv \sigma^2$. Observe that, by \eqref{smooth:eq:trueEffScore} and \eqref{smooth:eq:EffScore}, we have
\begin{align*}
 S_{\theta_0,m_0}&= \Pi(S_{\theta_0,m_0} |\Lambda^\perp) + \big(  S_{\theta_0,m_0}-\Pi(S_{\theta_0,m_0} |\Lambda^\perp)\big)\\
 &= \frac{1}{\sigma^2} \tilde{\ell}_ {\theta_0,m_0} + \big(  S_{\theta_0,m_0}-\Pi(S_{\theta_0,m_0}  |\Lambda^\perp)\big).
\end{align*}
 Thus \eqref{smooth:eq:localeffestim} follows from \eqref{smooth:eq:globalEff} by observing that
\begin{align*}
V_{\theta_0,m_0}&= P_{\theta_0,m_0} \big( \tilde{\ell}_{\theta_0,m_0} S^{\top}_{\theta_0,m_0}\big)= \frac{1}{\sigma^2} \tilde{I}_{\theta_0,m_0}.
\end{align*}

\subsubsection{``Least favorable'' path for \texorpdfstring{$m$}{Lg}}
\label{smooth:sec:App_least_fav_model}
We will now show that \eqref{smooth:eq:app_score} holds. Recall the definition \eqref{smooth:eq:path_para}. For any $(\theta,m)\in \Theta\times\{ m\in \Ss| J(m)<\infty\}$ and $\eta \in S^{d-2}$, let  $t \mapsto (\zeta_t(\theta,\eta) , \xi_t(\cdot;\theta,\eta,m))$ denote a path in $\Theta \times \{ m\in\Ss| J(m)<\infty\}$ that goes through $(\theta,m)$, i.e., $(\zeta_0(\theta,\eta), \xi_0(\cdot;\theta,\eta,m))= (\theta,m)$; see \eqref{smooth:eq:LeastFavmodel} below for definition.  Recall that $(\hat{\theta},\hat{m})$ minimizes $\mathcal{L}_n(m,\theta,\hat{\lambda}_n).$ Hence, for every  $\eta \in S^{d-2}$, the function $t \mapsto \mathcal{L}_n( \xi_t(\cdot;\hat\theta,\eta,\hat{m}),\zeta_t(\hat\theta,\eta) ,\hat{\lambda}_n)$ is minimized at $t=0$. In particular, if the above function is differentiable in a neighborhood of $0$, then
\begin{equation}\label{smooth:eq:lf_1}
\frac{\partial}{\partial t} \mathcal{L}_n( \xi_t(\cdot;\hat\theta,\eta,\hat{m}),\zeta_t(\hat\theta,\eta) ,\hat{\lambda}_n) \bigg\vert_{t=0}=  0.
\end{equation}
Moreover if  $(\zeta_t(\hat\theta,\eta) , \xi_t(\cdot;\hat\theta,\eta,\hat{m}))$ satisfies
\begin{align}\label{smooth:eq:lf_2}
\begin{split}
\frac{\partial}{\partial t} \big(y-\xi_t(\zeta_t(\hat\theta,\eta)^\top x;\hat{\theta},\eta,\hat{m})\big)^2 \bigg\vert_{t=0} &=\eta^\top \tilde{\ell}_{\hat{\theta},\hat{m}}(y,x), \\
  \left.\frac{\partial }{\partial t}J^2(\xi_t(\cdot;\hat\theta,\eta,\hat{m})) \right\vert_{t =0} &=O_p(1).
 \end{split}
\end{align}
for all $\eta\in S^{d-2}$, then we get \eqref{smooth:eq:app_score} as $\hat\lambda_n^2 =o_p(n^{-1/2})$; see assumption \ref{smooth:a4}.

Observe that $\hat{\theta}$ is a consistent estimator of $\theta_0.$ As we are concerned with the path $t \mapsto \mathcal{L}_n( \xi_t(\cdot;\hat\theta,\eta,\hat{m}),\zeta_t(\hat\theta,\eta) ,\hat{\lambda}_n)$, we will try to construct a path for any  $(\theta,m)\in \{\Theta \cap B_{\theta_0}(r)\}\times\{ m\in \Ss| J(m)<\infty\}$ that satisfies the above requirements. For any set $A\subset \R$ and any $\nu>0$ let us define $A^\nu :=\cup_{a\in A} B_a(\nu)$ and  let $\partial A$ denote the boundary of $A$.  Fix $\nu>0$. By~\eqref{eq:D_r}, for every $\theta   \in \Theta \cap B_{\theta_0}(r),$ $\eta \in S^{d-2},$ and $t \in \R$  sufficiently close to zero, there exists a strictly increasing  function $\phi_{\theta,\eta, t} : D^\nu \rightarrow \R$ with
\begin{align}\label{smooth:eq:BdryCond}
\begin{split}
\phi_{\theta,\eta, t}(u) ={}& u, \quad u\in  D_\theta \\
 \phi_{\theta,\eta, t}(u+  (\theta- \zeta_t(\theta,\eta))^\top h_{\theta}(u)) ={}& u, \quad u\in \partial D,
 \end{split}
 \end{align}
 where  $h_\theta(u)$ and $\zeta_t( \theta,\eta)$ are defined in  \eqref{smooth:eq:h_beta} and \eqref{smooth:eq:path_para}, respectively. Furthermore, we can ensure that $\phi_{\theta,\eta, t}(u)$ is infinitely differentiable for $u \in D$ and that $\left. \frac{\partial}{\partial t} \phi_{\theta,\eta, t}\right|_{t=0}$ exists. Note that $\phi_{\theta,\eta, t}(D)=D$. Moreover, $\phi_{\theta,\eta,t}$ cannot be the identity function for $t\neq 0$ if $(\theta- \zeta_t( \theta,\eta))^\top h_{\theta}(u) \neq 0$ for $u \in\partial D.$
 Now, we can  define the following path through $m$:
 \begin{equation} \label{smooth:eq:LeastFavmodel}
 \xi_t(u;\theta,\eta,m):= m\circ \phi_{\theta,\eta,t}(u+ (\theta- \sqrt{1-t^2 |\eta|^2}\, \theta - t H_\theta\eta)^\top h_{\theta}(u)).
 \end{equation}
 The function $\phi_{\theta,\eta,t}$ helps us control the  partial derivative in the second equation of  \eqref{smooth:eq:lf_2}.  In the following theorem (proved in Appendix~\ref{smooth:app:proof_eff_equation}), we show that $( \zeta_t( \hat{\theta},\eta), \xi_t(\cdot; \hat{\theta},\eta,\hat{m}))$ is a path through $(\hat{\theta}, \hat{m})$ and satisfies \eqref{smooth:eq:lf_1} and \eqref{smooth:eq:lf_2}. Here $\eta$ is  the ``direction" for the path $ t \mapsto \zeta_t( \theta,\eta)$ and $(\eta,h_\theta(u)$) defines the ``direction" for the path $ t \mapsto \xi_t(\cdot; \theta,\eta,m)$.

 \begin{thm} \label{smooth:thm:eff_equation} Under assumptions \ref{smooth:a0},\ref{smooth:a1}, \ref{smooth:a4}, and \ref{smooth:b3}, $( \zeta_t( \hat{\theta},\eta), \xi_t(\cdot; \hat{\theta},\eta,\hat{m}))$ is a valid parametric submodel, i.e., $(\zeta_t( \hat{\theta},\eta), \xi_t(\cdot; \hat{\theta},\eta,\hat{m}))  \in \Theta\times\{m\in\Ss| J(m)<\infty\}$ for all  $t$ in some neighborhood of $0$. Moreover $(\zeta_t( \hat{\theta},\eta), \xi_t(\cdot; \hat{\theta},\eta,\hat{m}))$ satisfies \eqref{smooth:eq:lf_2} and   $\mathcal{L}_n( \xi_t(\cdot;\hat\theta,\eta,\hat{m}),\zeta_t(\hat\theta,\eta) ,\hat{\lambda}_n)$, as function of $t,$ is differentiable at $0$ and $\sqrt{n} \p_n \tilde{\ell}_{\hat{\theta}, \hat{m}}= o_p(1).$
 \end{thm}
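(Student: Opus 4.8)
The plan is to establish, in order, the three assertions of the theorem: (i) the curve $t\mapsto\big(\zeta_t(\hat\theta,\eta),\xi_t(\cdot;\hat\theta,\eta,\hat m)\big)$ lies in $\Theta\times\{m\in\Ss:J(m)<\infty\}$ for every $t$ in a (random) neighborhood of $0$; (ii) the two identities in \eqref{smooth:eq:lf_2}; and (iii) that $t\mapsto\mathcal{L}_n\big(\xi_t(\cdot;\hat\theta,\eta,\hat m),\zeta_t(\hat\theta,\eta);\hat\lambda_n\big)$ is differentiable at $0$. Granting these, \eqref{smooth:eq:app_score} will follow as in the discussion around \eqref{smooth:eq:lf_1}: since $(\hat m,\hat\theta)$ minimizes $\mathcal{L}_n$ over $\Ss\times\Theta$ and the curve stays in $\Ss\times\Theta$ and passes through $(\hat m,\hat\theta)$ at $t=0$, the $t$-derivative of $\mathcal{L}_n$ along the curve vanishes at $0$, i.e.\ $0=\p_n\!\left[\partial_t(y-\xi_t(\zeta_t^\top x))^2\big|_{t=0}\right]+\hat\lambda_n^2\,\partial_tJ^2(\xi_t)\big|_{t=0}$; by the first identity of \eqref{smooth:eq:lf_2} the first term is a nonzero constant times $\eta^\top\p_n\tilde{\ell}_{\hat\theta,\hat m}$, and by the second it is $O_p(1)\cdot o_p(n^{-1/2})=o_p(n^{-1/2})$ since $\hat\lambda_n^2=o_p(n^{-1/2})$ by \ref{smooth:a4}; letting $\eta$ range over an orthonormal basis of $\R^{d-1}$ (each such $\eta$ is in $S^{d-2}$) then gives $\sqrt{n}\,\p_n\tilde{\ell}_{\hat\theta,\hat m}=o_p(1)$.

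For (i), I would first note that $\zeta_t(\hat\theta,\eta)\in S^{d-1}$ for $|t|\le|\eta|^{-1}=1$ by construction \eqref{smooth:eq:path_para}, and that its first coordinate is continuous in $t$ and equals $\hat\theta_1>0$ at $t=0$ with probability tending to one, since $\hat\theta\overset{P}{\to}\theta_0$ (Theorem \ref{smooth:thm:cons}) and $\theta_{0,1}>0$; hence $\zeta_t(\hat\theta,\eta)\in\Theta$ near $t=0$. Writing $\psi_t(u):=\phi_{\hat\theta,\eta,t}\big(u+(\hat\theta-\zeta_t(\hat\theta,\eta))^\top h_{\hat\theta}(u)\big)$, we have $\xi_t(\cdot;\hat\theta,\eta,\hat m)=\hat m\circ\psi_t$ with $\psi_0=\mathrm{id}$. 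The boundary conditions \eqref{smooth:eq:BdryCond} force $\psi_t$ to fix $\partial D$, so for small $t$ it is a strictly increasing self-map of $D$; moreover it inherits from the $C^\infty$-smoothness of $\phi_{\hat\theta,\eta,t}$ on $D$ and from \ref{smooth:b3} that it is $C^1$ on $D$ with $\psi_t'$ absolutely continuous. Composing with the natural cubic spline $\hat m$, which belongs to $\Ss$ by Theorem \ref{smooth:thm:existance}, then gives $\xi_t\in\Ss$, since products and compositions of Lipschitz and absolutely continuous functions on the bounded set $D$ are absolutely continuous; finiteness of $J(\xi_t)$ is part of the estimate in the last paragraph.

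For the first identity of \eqref{smooth:eq:lf_2}, set $W(t):=\zeta_t(\hat\theta,\eta)^\top x+\big(\hat\theta-\zeta_t(\hat\theta,\eta)\big)^\top h_{\hat\theta}\big(\zeta_t(\hat\theta,\eta)^\top x\big)$, so that $\xi_t(\zeta_t(\hat\theta,\eta)^\top x)=\hat m\big(\phi_{\hat\theta,\eta,t}(W(t))\big)$. Using $\tfrac{d}{dt}\zeta_t(\hat\theta,\eta)\big|_{t=0}=H_{\hat\theta}\eta$ (the derivative of $\sqrt{1-t^2|\eta|^2}$ vanishes at $0$) and, crucially, that the $h_{\hat\theta}$-term is premultiplied by $\hat\theta-\zeta_t(\hat\theta,\eta)=O(t)$ --- so that only continuity, not differentiability, of $h_{\hat\theta}$ is needed, which is exactly why the path is constructed this way and what lets the argument survive the finitely many non-smooth points of $h_{\hat\theta}$ --- one obtains $W(0)=\hat\theta^\top x$ and $W'(0)=\eta^\top H_{\hat\theta}^\top\{x-h_{\hat\theta}(\hat\theta^\top x)\}$. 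Combined with $\phi_{\hat\theta,\eta,0}=\mathrm{id}$ and $\partial_t\phi_{\hat\theta,\eta,t}\big|_{t=0}\equiv0$ on $D_{\hat\theta}\ni\hat\theta^\top x$ (first line of \eqref{smooth:eq:BdryCond}), the chain rule gives $\tfrac{d}{dt}\xi_t(\zeta_t^\top x)\big|_{t=0}=\hat m'(\hat\theta^\top x)\,\eta^\top H_{\hat\theta}^\top\{x-h_{\hat\theta}(\hat\theta^\top x)\}$ and hence $\partial_t(y-\xi_t(\zeta_t^\top x))^2\big|_{t=0}=-2\,\eta^\top\tilde{\ell}_{\hat\theta,\hat m}(y,x)$, which is \eqref{smooth:eq:lf_2} up to the harmless factor $-2$. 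In particular $Q_n(\xi_t,\zeta_t)=\tfrac1n\sum_i(y_i-\xi_t(\zeta_t(\hat\theta,\eta)^\top x_i))^2$, being a finite sum of such maps, is differentiable at $0$.

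I expect the main obstacle to be the second identity of \eqref{smooth:eq:lf_2}. From $\xi_t=\hat m\circ\psi_t$ one has $\xi_t''(u)=\hat m''(\psi_t(u))\psi_t'(u)^2+\hat m'(\psi_t(u))\psi_t''(u)$, so the substitution $v=\psi_t(u)$ gives $J^2(\xi_t)=\int_D\big|\hat m''(v)A_t(v)+\hat m'(v)B_t(v)\big|^2C_t(v)\,dv$, where $A_t,B_t,C_t$ are built from $\psi_t'\circ\psi_t^{-1}$, $\psi_t''\circ\psi_t^{-1}$ and $(\psi_t^{-1})'$. Since $\psi_0=\mathrm{id}$ and $\psi_t$ depends smoothly on $t$ (through the $C^\infty$-dependence of $\phi_{\hat\theta,\eta,t}$ and \ref{smooth:b3}), $A_t,B_t,C_t$ and their $t$-derivatives are continuous in $(t,v)$ off a finite set and bounded uniformly for $t$ near $0$ and $v\in D$. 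One may then differentiate under the integral sign (dominated convergence, with dominating function $\lesssim\hat m''(v)^2+|\hat m''(v)\hat m'(v)|+\hat m'(v)^2\in L^1(D)$) and bound $\partial_tJ^2(\xi_t)\big|_{t=0}$, using Cauchy--Schwarz, by a constant multiple of $J^2(\hat m)+\|\hat m'\|_\infty J(\hat m)+\|\hat m'\|_\infty^2$; this is $O_p(1)$ because $J(\hat m)=O_p(1)$ and $\|\hat m'\|_\infty=O_p(1)$ by Theorems \ref{smooth:thm:mainc} and \ref{smooth:thm:cons}. The same computation at general small $t$ shows $J(\xi_t)<\infty$, completing (i), and shows $\hat\lambda_n^2 J^2(\xi_t)$ is differentiable at $0$, which together with the previous paragraph gives (iii). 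The conclusion $\sqrt{n}\,\p_n\tilde{\ell}_{\hat\theta,\hat m}=o_p(1)$ then follows as explained at the outset.
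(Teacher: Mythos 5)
Your proposal is correct and takes essentially the same route as the paper: define the path $\xi_t=\hat m\circ\psi_t$, evaluate the chain-rule derivative at $t=0$ using $\dot\phi_{\hat\theta,\eta,0}\equiv 0$ on $D_{\hat\theta}$ and $\dot\zeta_0=H_{\hat\theta}\eta$, change variables $v=\psi_t(u)$ to bound $\partial_t J^2(\xi_t)|_{t=0}$, and conclude via the first-order condition and $\hat\lambda_n^2=o_p(n^{-1/2})$. Your write-up is in places more explicit than the paper's: you verify that $\zeta_t(\hat\theta,\eta)$ actually lands in $\Theta$ (positive first coordinate) near $t=0$, you flag that the $h_{\hat\theta}$ term sits behind the $O(t)$ factor $\hat\theta-\zeta_t$ so that the finitely many non-smooth points of $h_{\hat\theta}$ do no harm, and you keep the cross term $m'm''$ in the bound for $\partial_tJ^2(\xi_t)|_{t=0}$ (bounded via Cauchy--Schwarz by $\|\hat m'\|_\infty J(\hat m)$), which the paper suppresses under its ``$\lesssim J^2(m)$'' shorthand --- either way the bound is $O_p(1)$ by Theorems \ref{smooth:thm:mainc} and \ref{smooth:thm:cons}, which is all that is needed.
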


 \subsubsection{Asymptotic equicontinuity of \texorpdfstring{$\tilde{\ell}_{\theta,m}$ at $(\theta_0,m_0)$}{Lg}}
For notational  convenience we define
\begin{equation}
K_1(x;\theta) :=H_\theta^\top (x- h_{\theta} (\theta^\top x)).\label{smooth:eq:K_def}
\end{equation}
With the  above notation, from \eqref{smooth:eq:EffScore} we have
\bee
\tilde{\ell}_{\theta,m} (y,x)= (y-m(\theta^\top x)) m^\prime(\theta^\top x)  K_1(x;\theta).
\eee

\begin{thm} \label{smooth:thm:ConsistencyofG_n}
Under assumptions \ref{smooth:a0}--\ref{smooth:a6}, \ref{smooth:b3}, and \ref{smooth:dens}, $\g_n (\tilde{\ell}_{\hat{\theta},\hat{m}} -\tilde{\ell}_{\theta_0,m_0})=o_p(1).$
\end{thm}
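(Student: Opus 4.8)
The statement is a stochastic equicontinuity claim, and the plan is the usual one: exhibit a $P_{\theta_0,m_0}$-Donsker class $\mathcal{F}$ containing $\tilde{\ell}_{\theta_0,m_0}$ and, with probability tending to one, also $\tilde{\ell}_{\hat{\theta},\hat{m}}$; show that $\tilde{\ell}_{\hat{\theta},\hat{m}}$ converges to $\tilde{\ell}_{\theta_0,m_0}$ in $L_2(P_{\theta_0,m_0})$; and invoke the asymptotic equicontinuity of the empirical process indexed by a Donsker class (e.g.\ Lemma~19.24 of \cite{VdV98}, or Corollary~2.3.12 of \cite{VdV02}). Concretely, with $r>0$ as in assumption~\ref{smooth:b1}, fix a large $M<\infty$ and a small $\delta\in(0,r]$ and set
\[
\mathcal{F}_{M,\delta}:=\big\{\tilde{\ell}_{\theta,m}:\ \theta\in\Theta\cap B_{\theta_0}(\delta),\ m\in\Ss,\ J(m)\le M,\ \|m\|_\infty\le M,\ \|m'\|_\infty\le M,\ \|m-m_0\|^S_{D_0}\le\delta\big\}.
\]
By Theorems~\ref{smooth:thm:mainc} and~\ref{smooth:thm:cons}, $J(\hat{m})=O_p(1)$, $\|\hat{m}\|_\infty=O_p(1)$, $\|\hat{m}'\|_\infty=O_p(1)$, $\hat{\theta}\xrightarrow{P}\theta_0$, and $\|\hat{m}-m_0\|^S_{D_0}\xrightarrow{P}0$, so for any $\varepsilon>0$ one can first choose $M$ large and then $\delta$ small enough that $\p(\tilde{\ell}_{\hat{\theta},\hat{m}}\in\mathcal{F}_{M,\delta})\ge 1-\varepsilon$ for all large $n$, while $\tilde{\ell}_{\theta_0,m_0}\in\mathcal{F}_{M,\delta}$. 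It thus suffices to prove that $\mathcal{F}_{M,\delta}$ is $P_{\theta_0,m_0}$-Donsker and that $\tilde{\ell}_{\hat{\theta},\hat{m}}\to\tilde{\ell}_{\theta_0,m_0}$ in $L_2(P_{\theta_0,m_0})$.

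For the Donsker property, recall $\tilde{\ell}_{\theta,m}(y,x)=(y-m(\theta^\top x))\,m'(\theta^\top x)\,K_1(x;\theta)$ with $K_1(x;\theta)=H_\theta^\top(x-h_\theta(\theta^\top x))$, and write, on the support of $P_{\theta_0,m_0}$, $\tilde{\ell}_{\theta,m}=\epsilon\,g_{\theta,m}+h_{\theta,m}$ where $g_{\theta,m}(x):=m'(\theta^\top x)K_1(x;\theta)$ and $h_{\theta,m}(x):=(m_0(\theta_0^\top x)-m(\theta^\top x))\,g_{\theta,m}(x)$. Since the columns of $H_\theta$ are orthonormal and $|x|\le T$, $|h_\theta|\le T$ (assumption~\ref{smooth:a2}), we get $|K_1(x;\theta)|\le 2T$, so $\{g_{\theta,m}\}$ and $\{h_{\theta,m}\}$ are uniformly bounded and $\mathcal{F}_{M,\delta}$ has the $P_{\theta_0,m_0}$-square-integrable envelope $2MT(|\epsilon|+M_1+M)$ (using sub-Gaussianity of $\epsilon$, assumption~\ref{smooth:a3}). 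For the entropy: $x\mapsto\theta^\top x$ ranges over a finite-dimensional, hence VC-type, class on the compact $\rchi$; $\{u\mapsto m(u)\}$ and $\{u\mapsto m'(u)\}$ subject to $J(m)\le M,\|m\|_\infty\le M,\|m'\|_\infty\le M$ are bounded subsets of the Sobolev spaces $W^{2,2}(D)$ and $W^{1,2}(D)$, with sup-norm metric entropies $O(\varepsilon^{-1/2})$ and $O(\varepsilon^{-1})$ (Birman--Solomjak; see Section~2.7 of \cite{VdV02}), hence the same order of $L_2(P_{\theta_0,m_0})$-bracketing entropy; and $x\mapsto K_1(x;\theta)$ is uniformly bounded and Lipschitz in $\theta$ over $\Theta\cap B_{\theta_0}(\delta)$ by Lemma~\ref{smooth:lem:H_lip}(c) and assumption~\ref{smooth:b3}. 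Since composition with the index class and formation of products and sums preserve a bracketing entropy of polynomial order with exponent strictly below $2$, the uniformly bounded classes $\{g_{\theta,m}\}$ and $\{h_{\theta,m}\}$ are $P_{\theta_0,m_0}$-Donsker. A truncation on $\{|\epsilon|>K\}$, controlled by the sub-Gaussian tails (assumption~\ref{smooth:a3}; cf.\ Chapter~8 of \cite{VANG}), transfers the bracketing-entropy bound from $\{g_{\theta,m}\}$ to the weighted class $\{\epsilon\,g_{\theta,m}\}$ and keeps it Donsker. Since every $\tilde{\ell}_{\theta,m}$ is the sum of a member of $\{\epsilon\,g_{\theta,m}\}$ and a member of $\{h_{\theta,m}\}$, $\mathcal{F}_{M,\delta}$ lies in the (Donsker) Minkowski sum of these two classes and is therefore $P_{\theta_0,m_0}$-Donsker.

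For the $L_2$-convergence, decompose
\[
\tilde{\ell}_{\hat{\theta},\hat{m}}-\tilde{\ell}_{\theta_0,m_0}=\big(m_0(\theta_0^\top x)-\hat{m}(\hat{\theta}^\top x)\big)\hat{m}'(\hat{\theta}^\top x)K_1(x;\hat{\theta})+\epsilon\big[\hat{m}'(\hat{\theta}^\top x)K_1(x;\hat{\theta})-m_0'(\theta_0^\top x)K_1(x;\theta_0)\big].
\]
Since $\E(\epsilon^2\mid X)=\sigma^2(X)\le\|\sigma^2\|_\infty<\infty$ (assumption~\ref{smooth:dens}) and $\hat{m}'$, $m_0'$, $K_1(\cdot;\hat{\theta})$, $K_1(\cdot;\theta_0)$ are $O_p(1)$ in sup-norm, the convergence $P_{\theta_0,m_0}|\tilde{\ell}_{\hat{\theta},\hat{m}}-\tilde{\ell}_{\theta_0,m_0}|^2\xrightarrow{P}0$ reduces to showing the two bracketed factors vanish uniformly on $\rchi$ in probability. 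For the first, $\sup_{x\in\rchi}|\hat{\theta}^\top x-\theta_0^\top x|\le T|\hat{\theta}-\theta_0|\xrightarrow{P}0$ (assumption~\ref{smooth:a2}), so $\sup_{x}|\hat{m}(\hat{\theta}^\top x)-m_0(\theta_0^\top x)|\le\|\hat{m}'\|_\infty\,T|\hat{\theta}-\theta_0|+\|\hat{m}-m_0\|_{D_0}\xrightarrow{P}0$ by Theorem~\ref{smooth:thm:cons}. For the second, Cauchy--Schwarz gives $|\hat{m}'(u)-\hat{m}'(v)|\le J(\hat{m})|u-v|^{1/2}$, whence $\sup_{x}|\hat{m}'(\hat{\theta}^\top x)-m_0'(\theta_0^\top x)|\le J(\hat{m})(T|\hat{\theta}-\theta_0|)^{1/2}+\|\hat{m}'-m_0'\|_{D_0}\xrightarrow{P}0$, and $\sup_{x\in\rchi}|K_1(x;\hat{\theta})-K_1(x;\theta_0)|\xrightarrow{P}0$ follows from the splitting $K_1(x;\hat{\theta})-K_1(x;\theta_0)=(H_{\hat{\theta}}^\top-H_{\theta_0}^\top)(x-h_{\hat{\theta}}(\hat{\theta}^\top x))+H_{\theta_0}^\top\big(h_{\theta_0}(\theta_0^\top x)-h_{\hat{\theta}}(\hat{\theta}^\top x)\big)$ together with Lemma~\ref{smooth:lem:H_lip}(c), the Lipschitz bound on $h_\theta$ in assumption~\ref{smooth:b3}, and the local Lipschitz continuity of $h_{\theta_0}$ (also from assumption~\ref{smooth:b3}). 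Combining the Donsker property with this $L_2$-consistency through the equicontinuity lemma gives $\g_n(\tilde{\ell}_{\hat{\theta},\hat{m}}-\tilde{\ell}_{\theta_0,m_0})=o_p(1)$.

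I expect the Donsker verification to be the main obstacle. The genuinely infinite-dimensional ingredient is the class of \emph{derivatives} $\{m':J(m)\le M,\|m'\|_\infty\le M\}$, for which one must use the sharp Sobolev-ball entropy $O(\varepsilon^{-1})$ rather than the cruder bound obtained by embedding $W^{1,2}$ into a H\"older class (which would give only the borderline entropy exponent $2$); and this class must then be tensored with the unbounded sub-Gaussian error $\epsilon$ while preserving the Donsker property, which is precisely where assumption~\ref{smooth:a3} and the weighted-empirical-process estimates of Chapter~8 of \cite{VANG} are needed. The remaining ingredients --- preservation of entropy under composition with the finite-dimensional index family, the Lipschitz dependence of $H_\theta$ and $h_\theta$ on $\theta$ (Lemma~\ref{smooth:lem:H_lip} and assumption~\ref{smooth:b3}), and the uniform-convergence bookkeeping in the $L_2$ step --- are routine given Theorems~\ref{smooth:thm:mainc}--\ref{smooth:thm:ratest}.
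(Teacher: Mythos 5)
Your proposal is correct and reaches the result by the classical asymptotic-equicontinuity route (fixed Donsker class $\mathcal{F}_{M,\delta}$ containing $\tilde{\ell}_{\hat{\theta},\hat{m}}$ with high probability, plus $L_2(P_{\theta_0,m_0})$-convergence of $\tilde{\ell}_{\hat{\theta},\hat{m}}$ to $\tilde{\ell}_{\theta_0,m_0}$, then Lemma 19.24 of \cite{VdV98}). The paper packages the same technical ingredients differently: it uses the decomposition in \eqref{smooth:eq:split_eqcont_l} and, for each piece, defines \emph{shrinking} function classes $\mathcal{C}_{M_1,M_2,M_3}(n)$ indexed by $a_n\to\infty$ and $\hat{\lambda}_n\to 0$ (see \eqref{smooth:eq:c_n}), computes a sup-norm envelope $D_{M_1,M_2,M_3}(n)$ resp.\ $W_{M_1,M_2,M_3}(n)$ that shrinks to zero, and bounds $\E\sup_{f}|\g_n f|$ by $J_{[\,]}(\|\text{envelope}\|,\cdot)$ via Corollary~19.35 of \cite{VdV98} (Lemmas~\ref{smooth:lem:first_term} and \ref{smooth:lem:second_term}). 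Your approach instead verifies that the \emph{fixed} classes $\{g_{\theta,m}\}$, $\{\epsilon g_{\theta,m}\}$, $\{h_{\theta,m}\}$ are Donsker (using the same bracketing-entropy calculus — the $O(\varepsilon^{-1})$ Sobolev-ball entropy for the $m'$ part dominating, composition with the $\theta$-index via $\varepsilon^2$-nets because $m'$ is only H\"older-$1/2$, and the $\epsilon^\pm$ bracket lift with $\|\sigma^2\|_\infty<\infty$ from \ref{smooth:dens}) and separately establishes $L_2$-convergence of the specific sequence. The $L_2$ step you outline is in fact essentially the same computation the paper carries out as Lemma~\ref{smooth:lem:Consistencyofell}, though the paper proves that lemma in service of Step~3 (equation~\eqref{smooth:eq:ParaScore_approx}) rather than Step~2. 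Your route trades the sup-norm envelope control that the paper's shrinking classes supply for the weaker $L_2$-consistency requirement, at the cost of having to invoke Donskerness of a fixed class rather than a direct maximal inequality; since the entropy bound is the same either way (the shrinking classes $\mathcal{D}_{M_1,M_2,M_3}(n)$ are contained in the fixed $\mathcal{D}^*_{M_1,M_2,M_3}$), neither route is materially harder, and both are standard. One small inaccuracy worth fixing when you write this up: your display for $\mathcal{F}_{M,\delta}$ imposes $\|m-m_0\|^S_{D_0}\le\delta$, a condition on $\hat{m}$, inside the definition of the fixed class; you do not actually need it for the Donsker verification — dropping it makes the class genuinely fixed (independent of the consistency of $\hat{m}$), which is cleaner and keeps the logic of ``fixed Donsker class $+$ $L_2$-convergence'' uncluttered.
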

\begin{proof}

We divide  the proof Theorem~\ref{smooth:thm:ConsistencyofG_n} into two lemmas. First observe that
\begin{align}\label{smooth:eq:split_eqcont_l}
&\g_n (\tilde{\ell}_{\hat{\theta},\hat{m}} -\tilde{\ell}_{\theta_0,m_0})\nonumber\\
={}& \g_n \big[  \big(Y-\hat{m}(\hat{\theta}  ^\top X)\big) \hat{m}^\prime (\hat{\theta}^\top X) K_1(X;\hat{\theta}) - \big(Y-m_0(\theta_0^\top X)\big) m_0^\prime (\theta_0^\top X) K_1(X;\theta_0)\big]\nonumber\\
={}& \g_n \big[  \big(\epsilon +m_0(\theta_0^\top X) -\hat{m}(\hat{\theta}  ^\top X)\big) \hat{m}^\prime (\hat{\theta}^\top X) K_1(X;\hat{\theta})- \epsilon m_0^\prime (\theta_0^\top X) K_1(X;\theta_0)\big]\nonumber\\
\begin{split}
={}& \g_n \big[  \big(m_0(\theta_0^\top X) -\hat{m}(\hat{\theta}  ^\top X)\big) \hat{m}^\prime (\hat{\theta}^\top X) K_1(X;\hat{\theta})\big]\\
&\qquad +\g_n \big[ \epsilon \big( \hat{m}^\prime (\hat{\theta}^\top X) K_1(X;\hat{\theta})-m_0^\prime (\theta_0^\top X) K_1(X;\theta_0)\big)\big].
\end{split}
\end{align}

 The proof of Theorem \ref{smooth:thm:ConsistencyofG_n} will be complete, if we can show that both the terms in  \eqref{smooth:eq:split_eqcont_l} converge to 0 in probability. We begin with some definitions. Let $a_n$ be a sequence of real numbers such that $a_n\rightarrow \infty$ as $n\rightarrow \infty$  and $a_n\|\hat{m}-m_0\|_{D_0}^S =o_p(1).$ We can always find such a sequence $a_n,$ as  we have $\|\hat{m}-m_0\|_{D_{0}}^S =o_p(1)$ (see Theorem \ref{smooth:thm:cons}). For all $n \in \mathbb{N}$, define \footnote{The notations with $*$ denote the classes  of functions that do not depend on $n$ while the ones with $n$ denote shrinking neighborhoods around $(m_0, \theta_0)$.}
\begin{align}\label{smooth:eq:c_n}
\begin{split}
\mathcal{C}^{m*}_{M_1,M_2, M_3} &:=\Big\{ m \in \Ss: \|m\|_\infty< M_1,\;\|m^\prime\|_\infty<M_2, \text{ and } J(m)<M_3\Big\}, \\
\mathcal{C}^m_{M_1,M_2, M_3}(n) &:=\Big\{ m \in \mathcal{C}^{m*}_{M_1,M_2, M_3} : a_n \|m- m_0\|_{D_0}^S\le 1\Big\},\\
\mathcal{C}^\theta(n)&:=\Big\{\theta \in \Theta\cap B_{\theta_0}(1/2):  \hat{\lambda}_n^{-1/2} |\theta_0-\theta| \le1 \Big\}, \\
\mathcal{C}_{M_1,M_2, M_3}(n)&:= \Big\{ (m,\theta):\theta \in \mathcal{C}^\theta(n) \text{ and } m\in \mathcal{C}^m_{M_1,M_2, M_3}(n)\Big\},\\
\mathcal{C}^*_{M_1,M_2, M_3}&:= \Big\{ (m,\theta): \theta \in \Theta\cap B_{\theta_0}(1/2)  \text{ and } m\in \mathcal{C}^{m*}_{M_1,M_2, M_3}\Big\}.
\end{split}
\end{align}
 Let us  consider the first term of \eqref{smooth:eq:split_eqcont_l}. Fix $\delta>0$. For every fixed $M_1, M_2,$ and  $M_3$,
\begin{align}  \label{smooth:eq:insetoutset}
\begin{split}
&\p\Big( \big|\g_n \big[  {\hat{m}}^\prime\circ{\hat{\theta}} \,(m_0 \circ\theta_0  -{\hat{m}} \circ\hat{\theta} )  K_1(\cdot;\hat{\theta})\big]\big|> \delta\Big)  \\
\le{}& \p\Big(   \big|\g_n \big[   {\hat{m}}^\prime\circ{\hat{\theta}}\,  (m_0 \circ\theta_0 -{\hat{m}}\circ\hat{\theta})  K_1(\cdot;\hat{\theta})\big]\big|> \delta, (\hat{m},\hat{\theta}) \in \mathcal{C}_{M_1,M_2, M_3}(n)\Big)\\
& \quad + \p\big( (\hat{m},\hat{\theta}) \notin \mathcal{C}_{M_1,M_2, M_3}(n)\big) \\
\le{}& \p\Big( \sup_{(m,\theta) \in \mathcal{C}_{M_1,M_2, M_3}(n)} \big|\g_n \big[  m^\prime \circ\theta \,  (m_0 \circ\theta_0  -m \circ\theta )  K_1(\cdot;\theta)\big]\big|> \delta\Big)\\
&\quad + \p\big((\hat{m},\hat{\theta})\notin \mathcal{C}_{M_1,M_2, M_3}(n)\big).
\end{split}
\end{align}
Recall that $(\hat{m}, \hat{\theta})$ is a consistent estimator of  $(m_0, \theta_0)$  and $\|\hat{m}^\prime\|_\infty$ is $O_p(1)$; see Theorem \ref{smooth:thm:cons}. Furthermore, we have that  both $\|\hat{m}\|_\infty$ and $J(\hat{m})$ are  $O_p(1)$ (see Theorem \ref{smooth:thm:mainc})  and $\hat{\lambda}_n^{-1/2}|\hat{\theta}-\theta_0|=o_p(1)$ (see Theorem \ref{smooth:thm:ratest}).  Thus  for any $\varepsilon >0,$ there exists $M_1,M_2,$ and $M_3$ (depending on $\varepsilon$) such that
$$\p\left((\hat{m},\hat{\theta}) \notin\mathcal{C}_{M_1,M_2, M_3}(n)\right) \le \varepsilon,$$ for all sufficiently large $n.$
Hence, it is enough to show that for the above choice of $M_1,M_2,$ and  $M_3$,  we have
$$ \p\Big( \sup_{(m,\theta) \in \mathcal{C}_{M_1,M_2, M_3}(n)} \big|\g_n \big[  m^\prime \circ\theta\, (m_0\circ\theta_0  -m\circ\theta )  K_1(\cdot;\theta)\big]\big|> \delta\Big) \le \varepsilon$$ for sufficiently large $n.$ Lemma~\ref{smooth:lem:first_term} (proved in Section \ref{smooth:app:proof_first_term} of the supplementary material) shows this. Moreover, Lemma~\ref{smooth:lem:second_term} (proved in Section \ref{smooth:app:proof_second_term} of the supplementary material) shows that the second term  on the right hand side of \eqref{smooth:eq:split_eqcont_l} converges to zero in probability. Thus our proof is complete.
\end{proof}

\begin{lemma}\label{smooth:lem:first_term}
Fix $M_1,M_2,M_3,$ and $ \delta>0.$ For $n\in \mathbb{N},$ let us define two classes of functions from $\rchi$ to $\R^{d}$
\begin{align}\label{smooth:eq:g_beta0_U}
\mathcal{D}_{M_1, M_2,M_3}(n) &:=\left\lbrace m^\prime \circ\theta (m_0 \circ\theta_0 -m\circ\theta)  K_1(\cdot;\theta): (m,\theta) \in \mathcal{C}_{M_1,M_2, M_3}(n) \right\rbrace,\nonumber\\
\mathcal{D}^*_{M_1, M_2,M_3} &:= \left\lbrace  m^\prime \circ\theta (m_0 \circ\theta_0 -m\circ\theta)  K_1(\cdot;\theta): (m,\theta) \in \mathcal{C}^*_{M_1,M_2, M_3} \right\rbrace.
\end{align}
$\mathcal{D}_{M_1, M_2,M_3}(n)$ is a Donsker class and
\begin{equation}\label{smooth:eq:g_beta0_U_envelope}
\sup_{f \in \mathcal{D}_{M_1, M_2,M_3}(n)} \|f\|_{2,\infty}  \le 2TM_2 (a_n^{-1}+T M_2 \hat\lambda_n^{1/2})=:D_{M_1, M_2,M_3}(n).
\end{equation}
Moreover, $J_{[\,]}(\gamma,\mathcal{D}_{M_1, M_2,M_3}(n),\|\cdot\|_{2,2}) \lesssim \gamma^{1/2}$, where for any class of functions $\mathcal{F},$ $J_{[\,]}$ is the entropy integral (see e.g., Page 270, \cite{VdV98}) defined as \be \label{smooth:eq:def_J}
J_{[\,]}(\delta,\mathcal{F},\|\cdot\|_{2,2}):=\int_{0}^\delta  \sqrt{\log N_{[\,]}(t, \mathcal{F}, \|\cdot\|_{2,2})} dt.
\ee  Finally, we have
\[ \p\bigg( \sup_{f \in \mathcal{D}_{M_1, M_2,M_3}(n) }|\g_n f| > \delta \bigg) \rightarrow 0 \qquad \text{as } n\rightarrow \infty.\]
\end{lemma}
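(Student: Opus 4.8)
The plan is to verify the three structural claims in the order stated — the envelope bound, the bracketing-entropy bound, and the Donsker property — and then to deduce the final probability statement from a maximal inequality for empirical processes. First I would establish \eqref{smooth:eq:g_beta0_U_envelope}: for $(m,\theta)\in\mathcal{C}_{M_1,M_2,M_3}(n)$ the factor $m'\circ\theta$ is bounded by $M_2$, the factor $K_1(\cdot;\theta)=H_\theta^\top(x-h_\theta(\theta^\top x))$ is bounded uniformly (using \ref{smooth:a2}, $\|H_\theta\|_2=1$, and boundedness of $h_\theta$ from \ref{smooth:b3}) by something of order $T$, and the crucial factor $m_0\circ\theta_0-m\circ\theta$ is bounded by $\|m_0\circ\theta_0-m\circ\theta_0\|_\infty+\|m\circ\theta_0-m\circ\theta\|_\infty\le \|m-m_0\|^S_{D_0}+\|m'\|_\infty|\theta-\theta_0|\,T$, which on $\mathcal{C}_{M_1,M_2,M_3}(n)$ is at most $a_n^{-1}+M_2 T\hat\lambda_n^{1/2}$ by the defining constraints of $\mathcal{C}^m(n)$ and $\mathcal{C}^\theta(n)$. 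Collecting the constants gives the stated envelope $D_{M_1,M_2,M_3}(n)$.

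Next I would handle the bracketing entropy. The class $\mathcal{D}_{M_1,M_2,M_3}(n)$ is a subset of the (non-shrinking) class $\mathcal{D}^*_{M_1,M_2,M_3}$, and the key point is that the latter is built from finitely many ``nice'' operations on low-complexity pieces: the maps $\theta\mapsto H_\theta$ and $\theta\mapsto h_\theta(\theta^\top\cdot)$ are Lipschitz in $\theta$ (Lemma \ref{smooth:lem:H_lip}(c) and \ref{smooth:b3}), the set of functions $m$ with $\|m\|_\infty,\|m'\|_\infty,J(m)$ bounded is a uniformly bounded subset of a Sobolev-type ball and hence has bracketing entropy $\log N_{[\,]}(\varepsilon)\lesssim \varepsilon^{-1/2}$ (this is the standard smoothing-spline entropy bound, e.g. via Birman–Solomjak; the composition $m\mapsto m\circ\theta$ preserves this up to constants since $\theta$ ranges over a fixed-dimensional compact set), and products and differences of uniformly bounded classes with controlled entropy again have controlled entropy. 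This yields $\log N_{[\,]}(\varepsilon,\mathcal{D}^*_{M_1,M_2,M_3},\|\cdot\|_{2,P_{\theta_0,m_0}})\lesssim \varepsilon^{-1/2}$, so $\log N_{[\,]}(\varepsilon,\mathcal{D}_{M_1,M_2,M_3}(n),\|\cdot\|)\lesssim \varepsilon^{-1/2}$ uniformly in $n$, and integrating gives $J_{[\,]}(\gamma,\mathcal{D}_{M_1,M_2,M_3}(n),\|\cdot\|_{2,P_{\theta_0,m_0}})\lesssim\int_0^\gamma \varepsilon^{-1/4}d\varepsilon\lesssim\gamma^{3/4}\lesssim\gamma^{1/2}$ for $\gamma$ bounded. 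Since a class with finite bracketing entropy integral is Donsker (Theorem 19.5 of \cite{VdV98}), this also gives the Donsker claim.

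Finally, for the probability statement I would apply the maximal inequality $\E\sup_{f\in\mathcal{F}}|\g_n f|\lesssim J_{[\,]}(\|F\|_{2,P},\mathcal{F},\|\cdot\|_{2,P})$ (where $F$ is an envelope), valid for classes with square-integrable envelope; see \cite{VdV98}, Section 19.2, or the localized version where the bound is in terms of $J_{[\,]}$ evaluated at the envelope size. Here the envelope size is $D_{M_1,M_2,M_3}(n)=2TM_2(a_n^{-1}+TM_2\hat\lambda_n^{1/2})\to 0$ (deterministically, after conditioning on the event where $\hat\lambda_n$ obeys its rate — or more carefully, since $a_n\to\infty$ and $\hat\lambda_n=o_p(1)$, one restricts to an event of probability $\ge 1-\varepsilon$ on which $D_{M_1,M_2,M_3}(n)\le\delta_n\to 0$). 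Therefore $\E\sup_{f\in\mathcal{D}_{M_1,M_2,M_3}(n)}|\g_n f|\lesssim J_{[\,]}(\delta_n,\cdot,\cdot)\lesssim\delta_n^{1/2}\to 0$, and Markov's inequality gives $\p(\sup_{f}|\g_n f|>\delta)\to0$.

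The main obstacle I anticipate is the bracketing-entropy estimate for $\mathcal{D}^*_{M_1,M_2,M_3}$: one must carefully argue that composition with the index $\theta$ does not blow up the entropy (a $\theta$-net of size $\varepsilon^d$ in the $(d-1)$-dimensional sphere combined with the Lipschitz continuity in $\theta$ of all the $\theta$-dependent factors, together with the $\varepsilon^{-1/2}$ entropy of the spline ball for the $m$-factor), and that the product structure $m'\circ\theta\cdot(m_0\circ\theta_0-m\circ\theta)\cdot K_1(\cdot;\theta)$ — all factors uniformly bounded — yields brackets whose $L_2(P_{\theta_0,m_0})$-size is controlled by the sum of the sizes of brackets for each factor. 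Handling the $m'$ factor requires knowing $\|m'\|_\infty$ is controlled, which is exactly the content of the constraint defining $\mathcal{C}^{m*}$, and its entropy contribution is again of order $\varepsilon^{-1/2}$ since $m'$ ranges over a class with one fewer degree of smoothness; keeping all exponents $\le 2$ so the integral converges is the delicate bookkeeping.
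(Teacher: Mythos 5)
Your overall strategy matches the paper's: envelope bound first, then a uniform bracketing-entropy bound for the non-shrinking class $\mathcal{D}^*_{M_1,M_2,M_3}$, then a maximal inequality for both the Donsker property and the vanishing-probability statement. The envelope calculation is correct. However, the entropy step contains a concrete error. You claim $\log N_{[\,]}(\varepsilon,\mathcal{D}^*_{M_1,M_2,M_3},\|\cdot\|_{2,P_{\theta_0,m_0}})\lesssim\varepsilon^{-1/2}$, and you justify the $m'\circ\theta$ factor's contribution by asserting that it is ``again of order $\varepsilon^{-1/2}$ since $m'$ ranges over a class with one fewer degree of smoothness.'' This reasoning is backwards: one fewer degree of smoothness makes bracketing entropy \emph{worse}, not the same. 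By Lemma~\ref{smooth:lem:entropy}, a class with $k$ derivatives controlled has $\log N_{[\,]}(\varepsilon)\lesssim\varepsilon^{-1/k}$; so $\{m:\|m\|_\infty\le M_1,J(m)\le M_3\}$ (where $m''\in L_2$, $k=2$) has $\log N\lesssim\varepsilon^{-1/2}$, but $\{m':(m')'\in L_2,\|m'\|_\infty\le M_2\}$ ($k=1$) has $\log N\lesssim\varepsilon^{-1}$. Since $\mathcal{D}^*_{M_1,M_2,M_3}$ contains the factor $m'\circ\theta$, its entropy is dominated by this $\varepsilon^{-1}$ term, and the paper's own bound is $N(\varepsilon,\mathcal{D}^*_{M_1,M_2,M_3},\|\cdot\|_{2,\infty})\le c\exp(c/\varepsilon+c/\sqrt{\varepsilon})\varepsilon^{-2(d-1)}$. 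Integrating $\sqrt{\log N_{[\,]}}\lesssim t^{-1/2}$ then gives $J_{[\,]}(\gamma)\lesssim\gamma^{1/2}$ on the nose; your intermediate $\gamma^{3/4}$ bound is not actually obtainable, and your conclusion survives only because $\gamma^{3/4}\lesssim\gamma^{1/2}$ for bounded $\gamma$ happens to coincide with the truth.

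A secondary imprecision: you say the $\theta$-dependent factors are all Lipschitz in $\theta$. In fact $m'$ is only $1/2$-H\"older in the index value (Lemma~\ref{smooth:lem:bounds} gives $|m'(s)-m'(s_0)|\le J(m)|s-s_0|^{1/2}$), and $\theta\mapsto H_\theta$ is Lipschitz only at $\theta_0$, with a blow-up factor $\sim(|\eta-\theta_0|+|\beta-\theta_0|)^{-1}$ elsewhere (Lemma~\ref{smooth:lem:H_lip}(c),(d)). The paper addresses the first by taking an $\varepsilon^2$-scale net in $\theta$ so that the H\"older-$1/2$ factor closes at scale $\varepsilon$, and the second by the dedicated covering Lemma~\ref{lem:H_ent}. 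Neither changes the $\varepsilon^{-1}$ exponent — the $\theta$-contribution to the log-entropy is only $O(d\log(1/\varepsilon))$ — but as written your ``Lipschitz in $\theta$'' step does not establish the $\theta$-bracketing and would need these repairs.
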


\begin{lemma}\label{smooth:lem:second_term}
Let us define $U_{\theta,m} : \rchi \to \R^{d-1}$, $U_{\theta,m}(x):= m^\prime (\theta^\top x) K_1(x;\theta)$.
Fix $M_1,M_2,M_3,$ and $ \delta>0.$ For $n\in \mathbb{N},$ let us define
\begin{align}\label{smooth:eq:W_beta0_U}
\w_{M_1, M_2,M_3}(n) &:=\left\lbrace    U_{\theta,m}-U_{\theta_0,m_0}: (m,\theta) \in \mathcal{C}_{M_1,M_2, M_3}(n) \right\rbrace,\nonumber\\
\w^*_{M_1, M_2,M_3} &:= \left\lbrace    U_{\theta,m}-U_{\theta_0,m_0}: (m,\theta) \in \mathcal{C}^*_{M_1,M_2, M_3} \right\rbrace.
\end{align}
Then $\w_{M_1, M_2,M_3}(n)$ is a Donsker class such that
\begin{equation}\label{smooth:eq:W_beta0_U_envelope}
 \sup_{f\in \w_{M_1, M_2,M_3}(n)} \|f\|_{2,\infty}\le \big[ 2 T^{3/2} M_3 \hat{\lambda}_n^{1/4}+  2 T a_n^{-1} +M_2 (2T+\bar{M}) \hat{\lambda}_n^{1/2}\big]=: W_{M_1, M_2,M_3}(n).
\end{equation}
Moreover,
$J_{[\,]}(\gamma,\w_{M_1, M_2,M_3}(n),\|\cdot\|_{2,2}) \lesssim \gamma^{1/2}$.  Hence, as $n\rightarrow \infty,$ we have
\be\label{smooth:eq:second_part_main}
 \p\Big(\Big| \g_n \big[ \epsilon \big( U_{\hat{\theta},\hat{m}}-U_{\theta_0,m_0}\big)\big]\Big| > \delta \Big) \rightarrow 0.
\ee
\end{lemma}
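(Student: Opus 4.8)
## Proof proposal for Lemma \ref{smooth:lem:second_term}

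The plan is to verify the hypotheses of a Donsker-type maximal inequality (e.g.~Lemma 3.4.2 of \cite{VdV98}, or the bracketing CLT) for the multiplier process $\g_n[\epsilon(U_{\theta,m} - U_{\theta_0,m_0})]$ indexed by $(m,\theta) \in \mathcal{C}_{M_1,M_2,M_3}(n)$, exploiting the shrinking-neighborhood structure so that both the envelope $W_{M_1,M_2,M_3}(n)$ and the bracketing integral tend to zero. First I would fix the larger, $n$-free class $\w^*_{M_1,M_2,M_3}$ and show it is Donsker; since $\epsilon$ has a uniformly bounded second moment (via \ref{smooth:a3}, which gives sub-Gaussianity, and hence $\|\sigma^2\|_\infty < \infty$ is automatic, but we also have it directly from \ref{smooth:dens}), the multiplier class $\{\epsilon f : f \in \w^*_{M_1,M_2,M_3}\}$ is then Donsker as well. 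The key structural input is that $U_{\theta,m}(x) = m'(\theta^\top x) H_\theta^\top(x - h_\theta(\theta^\top x))$ is a product/composition of (i) uniformly bounded, uniformly Lipschitz functions $m'$ ranging over a class with $J(m') $ controlled — so that $\{m'\circ\theta\}$ has a bracketing entropy of order $\gamma^{-1}$ coming from the classical Sobolev-ball entropy bound combined with the Lipschitz dependence on $\theta$ — (ii) the matrix-valued map $\theta \mapsto H_\theta^\top$, which is Lipschitz by Lemma \ref{smooth:lem:H_lip}(c), and (iii) the conditional-mean map $\theta \mapsto h_\theta$, which is Lipschitz and $C^2$ (off finitely many points) by \ref{smooth:b3}. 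Products and compositions of such classes have bracketing integrals $J_{[\,]}(\gamma, \cdot) \lesssim \gamma^{1/2}$ (the square root coming from the $\gamma^{-1}$ entropy of the one-dimensional smooth class), which is the stated bound.

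Next I would compute the envelope. Writing
\[
U_{\theta,m} - U_{\theta_0,m_0} = (m'\circ\theta - m_0'\circ\theta_0)\,K_1(\cdot;\theta) + m_0'\circ\theta_0\,(K_1(\cdot;\theta) - K_1(\cdot;\theta_0)),
\]
I would bound the first term using $\|m'\circ\theta - m_0'\circ\theta_0\|_\infty \le \|m' - m_0'\|_{D_0}^S + \|m_0'\|_{\mathrm{Lip}}|\theta - \theta_0|\,T$ — wait, more carefully: on the shrinking set $\mathcal{C}^m_{M_1,M_2,M_3}(n)$ we have $a_n\|m - m_0\|_{D_0}^S \le 1$ so $\|m' - m_0'\|_{D_0} \le a_n^{-1}$, and the change-of-index error is $O(|\theta - \theta_0|) = O(\hat\lambda_n^{1/2})$ on $\mathcal{C}^\theta(n)$; meanwhile $K_1(\cdot;\theta)$ is uniformly bounded by roughly $2T$ using \ref{smooth:a2} and the fact that $h_\theta$ maps into the convex hull of $\rchi$. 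For the second term, $\|K_1(\cdot;\theta) - K_1(\cdot;\theta_0)\|_\infty$ splits into $\|(H_\theta^\top - H_{\theta_0}^\top)(x - h_\theta)\|$, bounded by $2T|\theta - \theta_0|$ via Lemma \ref{smooth:lem:H_lip}(c), and $\|H_{\theta_0}^\top(h_\theta - h_{\theta_0})\|$, bounded by $\bar M|\theta - \theta_0|$ via \ref{smooth:b3}; there is also the Lipschitz-in-$u$ contribution of $h_\theta$ under the shift of its argument by $(\theta - \theta_0)^\top h_\theta$, again $O(|\theta - \theta_0|)$. Collecting terms and using $|\theta - \theta_0| \le \hat\lambda_n^{1/2}$ and the leftover $J(m) < M_3$ to control the crude term $T^{3/2}M_3\hat\lambda_n^{1/4}$ (this arises from bounding $\|m' \circ \theta - m_0' \circ \theta_0\|$ in $L_2(P_X)$ rather than sup-norm, using interpolation between $J(m)$ and $\|m\|$) yields exactly the claimed envelope $W_{M_1,M_2,M_3}(n) = 2T^{3/2}M_3\hat\lambda_n^{1/4} + 2Ta_n^{-1} + M_2(2T + \bar M)\hat\lambda_n^{1/2}$.

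Finally, with both $W_{M_1,M_2,M_3}(n) \to 0$ and $J_{[\,]}(\gamma, \w_{M_1,M_2,M_3}(n), \|\cdot\|_{2,P_{\theta_0,m_0}}) \lesssim \gamma^{1/2}$ (uniformly in $n$), I would invoke the multiplier inequality: $\E\sup_{f \in \w_{M_1,M_2,M_3}(n)}|\g_n[\epsilon f]| \lesssim J_{[\,]}(W_{M_1,M_2,M_3}(n), \w^*_{M_1,M_2,M_3}, \|\cdot\|_{2,P_{\theta_0,m_0}}) \lesssim W_{M_1,M_2,M_3}(n)^{1/2} \to 0$, where the presence of $\epsilon$ only costs a constant factor since the bracketing numbers of the multiplier class are controlled by those of $\w^*$ times the $L_2$-norm of $\epsilon$ (bounded). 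Markov's inequality then gives $\p(|\g_n[\epsilon(U_{\hat\theta,\hat m} - U_{\theta_0,m_0})]| > \delta) \to 0$, since $(\hat m, \hat\theta) \in \mathcal{C}_{M_1,M_2,M_3}(n)$ with probability tending to one for suitable $M_i$ by Theorems \ref{smooth:thm:mainc}, \ref{smooth:thm:cons}, \ref{smooth:thm:ratest}. The main obstacle I anticipate is the entropy bookkeeping for the class $\{m'\circ\theta\}$: one must carefully combine the classical bracketing-entropy estimate for a Sobolev ball (which governs $m'$ as a function of its scalar argument) with the Lipschitz dependence on the $d$-dimensional index $\theta$, and then propagate this through a product with the Lipschitz matrix/vector factor $K_1(\cdot;\theta)$ while keeping the bracketing $L_2$-norms anchored at $(\theta_0,m_0)$ — ensuring the resulting integral is genuinely $O(\gamma^{1/2})$ and not merely finite requires that the one-dimensional smooth factor contribute entropy $O(t^{-1})$ and every other factor contribute only $O(\log(1/t))$.
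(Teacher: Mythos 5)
Your overall roadmap is the same as the paper's: bound the envelope of $\w_{M_1,M_2,M_3}(n)$ as $n\to\infty$, bound its bracketing entropy integral by $\gamma^{1/2}$ (via a Sobolev-ball estimate for the one-dimensional factor times polynomial entropy for $\theta$), then invoke a maximal inequality with shrinking envelope and Markov. However, you repeatedly misstate the regularity of $m'$, and this propagates into a concrete error in your envelope computation. Under $J(m)\le M_3$ the derivative $m'$ is \emph{not} uniformly Lipschitz; by Lemma~\ref{smooth:lem:bounds} it is only uniformly H\"older-$1/2$: $|m'(s)-m'(s_0)|\le J(m)|s-s_0|^{1/2}$. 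Consequently the change-of-index error for $m'$ is
\[
|m'(\theta^\top x)-m'(\theta_0^\top x)|\le J(m)\,|(\theta-\theta_0)^\top x|^{1/2}\le M_3\,T^{1/2}\,|\theta-\theta_0|^{1/2}\le M_3\,T^{1/2}\,\hat\lambda_n^{1/4},
\]
i.e.\ $O(\hat\lambda_n^{1/4})$, not the $O(\hat\lambda_n^{1/2})$ you claim. Multiplying by $\|K_1\|_{2,\infty}\le 2T$ gives precisely the $2T^{3/2}M_3\hat\lambda_n^{1/4}$ term in $W_{M_1,M_2,M_3}(n)$; it is a pointwise sup-norm bound from H\"older-$1/2$ continuity, not an ``$L_2(P_X)$-interpolation between $J(m)$ and $\|m\|$'' as you write --- no such interpolation is used or needed. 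Your explanation of that term is therefore incorrect, and without the correct H\"older-$1/2$ bookkeeping the envelope rate you would derive would be too fast.

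The same issue affects your bracketing-entropy step. Because $m'$ is only H\"older-$1/2$, passing from $\theta$ to a nearby grid point $\theta_1$ incurs a cost $J(m)|\theta-\theta_1|^{1/2}T^{1/2}$; to make this $O(\varepsilon)$ you must take an $\varepsilon^2$-net of $\Theta\cap B_{\theta_0}(1/2)$, not an $\varepsilon$-net. The paper does exactly this (using Lemma~\ref{lem:H_ent} with radius $\varepsilon^2$), which still contributes only a polynomial factor $\varepsilon^{-O(d)}$ and hence $O(\log(1/\varepsilon))$ to $\log N_{[\,]}$, so the final estimate $J_{[\,]}(\gamma,\cdot)\lesssim\gamma^{1/2}$ survives. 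Your plan will go through once you replace ``Lipschitz in the argument'' by ``H\"older-$1/2$ in the argument'' for $m'$, track the $\varepsilon^2$-net for $\theta$, and derive the $\hat\lambda_n^{1/4}$ envelope term directly from Lemma~\ref{smooth:lem:bounds} rather than from interpolation.
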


\section{Simulation study}\label{smooth:sec:simul}
To investigate the finite sample performance of $(\hat{m},\hat{\theta}),$ we carry out several simulation experiments. We also compare the finite sample performance of the proposed estimator with the EFM estimator (estimating function method \cite{cuietal11}), the EDR estimator (effective dimension reduction \cite{Hristacheetal01}), and the estimator proposed in~\cite{MR2514187} (denoted by WY). \cite{cuietal11} compares the performance of the EFM estimator to existing estimators such as the refined minimum average variance estimator (rMAVE) \cite{Xiaetal02} and the EDR estimator and argues that EFM has improved overall performance compared to existing estimators. Thus we do not include the rMAVE estimator in our simulation study. The code to compute the EDR estimates can be found in the R package \texttt{EDR}. Moreover, the authors of \cite{cuietal11} and~\cite{MR2514187} kindly provided us with the R codes to evaluate the EFM and the WY estimators, respectively.  The codes used to implement our procedure are available in the \texttt{simest} package in R; see \citet{simest}. In what follows, we chose the penalty parameter $\hat{\lambda}_n$  for the PLSE through generalized cross validation (GCV), i.e., choose $\hat\lambda_n$ by minimizing $\text{GCV}: \R \rightarrow\R$
\[\text{GCV}(\lambda):= \frac{Q_n(\hat{m}_\lambda,\hat{\theta}_\lambda)}{1-n^{-1} \text{trace}(A(\lambda))},\] where $(\hat{m}_\lambda,\hat{\theta}_\lambda):= \argmin_{(m,\theta)\in\mathcal{S}\times\Theta}\mathcal{L}_n(m,\theta; \lambda)$  and $A(\lambda)$ is the {\it hat} matrix for $\hat{m}_\lambda$ (see e.g., Sections 3.2 and 3.3  of~\cite{greensilverman94} for a detailed description of $A(\lambda)$ and its connection to the GCV); see  \cite{RuppertBook} for an extensive discussion on why the GCV is an attractive choice for choosing the penalty parameter in the single index model. We choose $\hat{\lambda}_n$ by minimizing the GCV score over a grid of values that satisfy assumption~\ref{smooth:a4}.  For all the other methods considered in the paper we have used the suggested values of tuning parameters. In the following, we consider three different data generating mechanisms. The codes used for the simulation examples can be found at~\url{http://stat.ufl.edu/~rohitpatra/research}.
 \begin{figure}[h!]
\centering
\includegraphics[width= .5\textwidth]{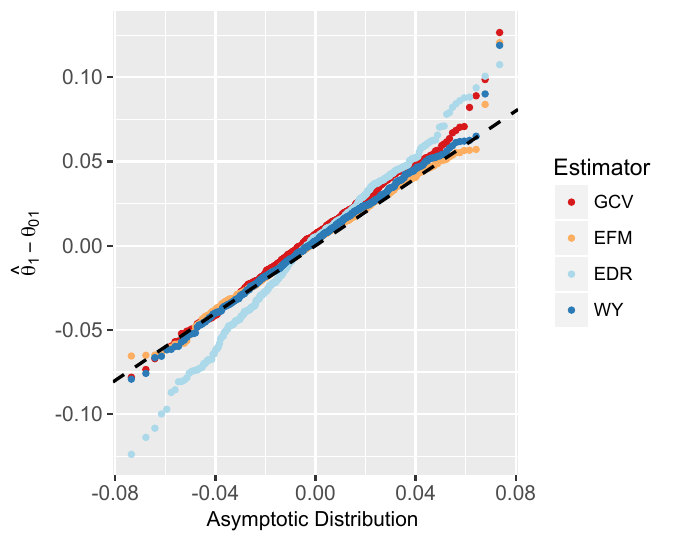}
\caption[QQ-plot.]{Quantile Quantile plot plot $\hat{\theta}_1 -\theta_0$ from 500 replications  with the true asymptotic distribution  of the $\hat{\theta}_{1}-\theta_{0,1}$ on the X-axis when we have 500 i.i.d.~samples from \eqref{eq:ex_1}.}
\label{smooth:fig:ex1} 
\end{figure}

\subsection{A simple model}
We start with a simple model.  Assume that $(X_1, X_2) \in \R^2$, $X_1 \sim \text{Uniform}[-2,2]$,  $X_2\sim \text{Uniform}[0,1]$, $\epsilon\sim N(0, .5^2)$, and
\begin{equation}\label{eq:ex_1}
Y=( \theta_0^\top X)^2+\epsilon,  \text{ where } \theta_0=(1, -1)/\sqrt{2}.
\end{equation}

Observe that for this example,  $H_{\theta_0}^\top=[1,1]/\sqrt{2}$ (see Section~\ref{smooth:sec:lem:h_lip_proof} of the supplementary file) and the analytic expression of the efficient information is
\[ \tilde{I}_{\theta_0,m_0}=4 \Var(\epsilon) \mathbb{E}\Big( \theta_0^\top X H_{\theta_0}^\top \big[ X - \E\big(X| \theta_0^\top X\big)\big]\Big)^2=4 \Var(\epsilon) \mathbb{E}\left| (\theta_0^\top X)^2  \big[H_{\theta_0}^\top \Var(X | \theta_0^\top X\big) H_{\theta_0}\big]\right|. \] Using the above expression, we calculated the  asymptotic variance of $\sqrt{n} (\hat{\theta}_1 -\theta_{0,1})$ to be $0.328$. Figure~\ref{smooth:fig:ex1}  provides numerical evidence of asymptotic normality of the estimators discussed in this section. In this example, it is worth noting that the EFM and the WY have the better overall performance than the PLSE and the EDR.

\subsection{Dependent covariates}\label{sec:dep_cov}
We now consider a simulation scenario where covariates are dependent and the predictor $X \in \R^6$ contains discrete components. More precisely, $(X_1, \ldots, X_6)$ is generated according to the following law: $X_1 \sim \text{Uniform} [-1,1]$, $ X_2 \sim \text{Uniform}[-1,1]$, $ X_3:=0.2 X_1+ 0.2 (X_2+2)^2+0.2 Z_1$, $X_4:=0.1+ 0.1(X_1+X_2)+0.3(X_1+1.5)^2+  0.2 Z_2$, $X_5 \sim \text{Bernoulli}(\exp(X_1)/\{1+\exp(X_1)\})$, and  $ X_6 \sim \text{Bernoulli}(\exp(X_2)/\{1+\exp(X_2)\}).$ Here $Z_1$ and $Z_2$ are two Uniform$[-1,1]$ random variables independent of $X_1$ and $X_2$. Finally, we let 
\[ Y=  (\theta_0^\top X)^2+\epsilon, \]
 where $\theta_0$ is $(1.3, -1.3, 1, -0.5, -0.5, -0.5)/\sqrt{5.13}$.
In the following, we consider three different scenarios based on different error distributions:
\begin{enumerate}[label=\bfseries (2.\arabic*)]
    \item$\epsilon \sim N(0,1),$   \hfill (Homoscedastic, Gaussian Error)  \label{smooth:item:HomoGau}
    \item$\epsilon|X \sim N\left(0, \log(2+ (X^\top \theta_0)^2)\right),$ \hfill (Heteroscedastic, Gaussian Error) \label{smooth:item:HetereoGau}
    \item  $\epsilon| \xi \sim (-1)^\xi \text{Beta}(2,3),$ where $\xi\sim \text{Ber}(.5).$ \hfill (Homoscedastic, Non-Gaussian Error)  \label{smooth:item:HomoBeta}
\end{enumerate}
\begin{figure}[h!]
\vspace{0cm}
\centering
\includegraphics[width=\textwidth]{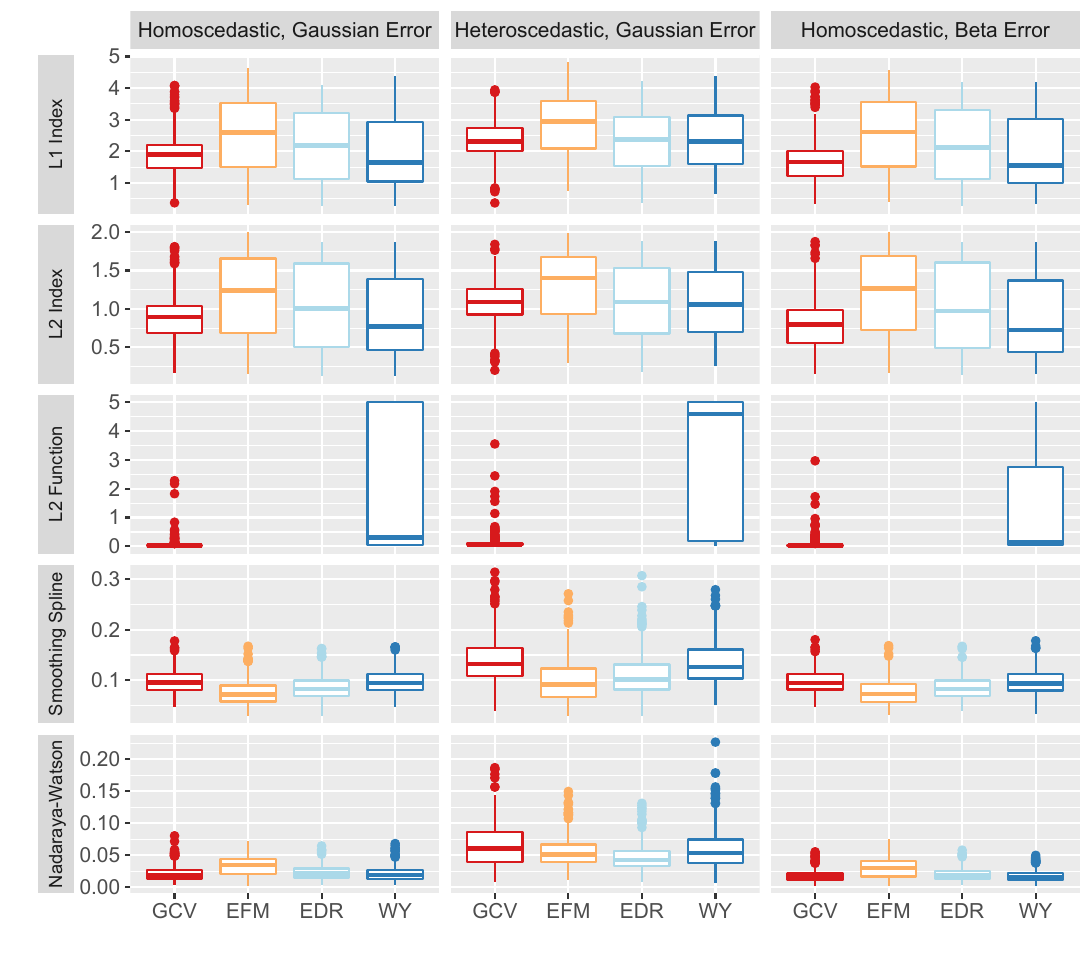}
\caption[Box plots of $L_1$ error of estimates of $\theta_0$ for Section~\ref{sec:dep_cov}.]{ Box plots (over 500 replications) of various errors based on $200$ observations from models \ref{smooth:item:HomoGau}, \ref{smooth:item:HetereoGau}, and \ref{smooth:item:HomoBeta} in the left, the middle, and the right columns, respectively. First two rows display $L_1$ and $L_2$ errors of estimates of $\theta_0$. The third row corresponds to $\|\hat{m}\circ\theta_0-m_0\circ\theta_0\|_{n}$ for the estimators proposed in~Section~\ref{smooth:sec:prelim} and~\cite{MR2514187}. The fourth and fifth rows corresponds to $\|\tilde{m}\circ\theta_0-m_0\circ\theta_0\|_{n}$ for one-dimensional smoothing splines and Nadaraya-Watson estimators based on the estimated index $\{(\hat{\theta}^{\top}X_i, Y_i), 1\le i\le n\}$, respectively. } 
\label{smooth:fig:ex2}
\vspace*{0cm}
\end{figure}

 Our proposed method and~\cite{MR2514187} provide estimators for both the link function and the index parameter.  In the third row of Figure~\ref{smooth:fig:ex2}, we display the box plot of the in-sample $L_2(\p_n)$ loss ($\|\hat{m}\circ\theta_0^\top -m_0\circ\theta_0^\top \|_{n}$) for the PLSE and the WY estimators. EFM and EDR are not included as they do not provide estimators for the link function. However,  results of~\cite{gu2015oracally} show that for any root-${n}$ consistent estimator $\hat\theta$ of the index estimator,  the kernel (or Nadaraya-Watson) regression estimator on the data $\{(\hat{\theta}^{\top}X_i, Y_i), 1\le i\le n\}$ is asymptotically indistinguishable from the kernel estimator based on $\{(\theta_0^{\top}X_i, Y_i), 1\le i\le n\}$. This oracle type property led us to compute the estimators  of the link function based the  data $\{(\hat{\theta}^{\top}X_i, Y_i), 1\le i\le n\}$ for  GCV, EFM, EDR, and WY. The plot of the error (see fifth row of Figure~\ref{smooth:fig:ex2}) in  the estimation of the link function based on the Nadaraya-Watson estimator\footnote{Here we used the~\texttt{np} package \cite{np} to compute the bandwidth choice for the nonparametric regression estimator.}  provides numerical confirmation of the oracle property proved in~\cite{gu2015oracally}.  We also estimate the one-dimensional link function based on smoothing splines\footnote{We used \texttt{smooth.spline} command in \texttt{R} and choose $\lambda$ by the GCV procedure proposed in~\cite{greensilverman94}.} applied to the data $\{(\hat{\theta}^{\top}X_i, Y_i), 1\le i\le n\}$; see fourth row of Figure~\ref{smooth:fig:ex2}. The results of~\cite{gu2015oracally} do not directly imply a similar oracle type phenomenon for smoothing splines based estimators. However, the fourth row of Figure~\ref{smooth:fig:ex2} provides some numerical evidence for this oracle type property for the smoothing splines estimators.  The proof of the oracle type property developed in~\cite{gu2015oracally} crucially uses the smoothness  of the Nadaraya-Watson estimator (as a function of the index)  and we have not  been able to extend it to the case of smoothing splines estimators in single index models\footnote{This is a very interesting research direction and we plan to study it in the near future.}.

 The relative poor performance of EDR, EFM, and WY in estimating $\theta_0$ can possibly be attributed to the dependency between covariates.     Scenarios \ref{smooth:item:HomoGau} and \ref{smooth:item:HetereoGau} are similar to simulation scenarios considered in  \cite{LiPatilea15} and \cite{MaZhu13}. The codes to compute the estimator proposed in~\cite{LiPatilea15} were not available to us.
\subsection{High dimensional covariates}\label{sec:HDC}
For the final simulation scenario, we consider a setting similar to that of Example 4 in \citet[Section 3.2]{cuietal11}. We consider $d$-variate covariates for $d=10, \, 50,$ and $ 100.$ For each $d$, we assume that $X \sim \text{Uniform}[0,5]^d,$ $\epsilon \sim N(0, 0.2^2),$ $\theta_0=(2,1, \textbf{0}_{d-2})^\top/\sqrt{5}$, and have $400$ observations from the following model:
\begin{equation}\label{eq:ex3}
Y=\sin(a X^\top \theta_0) +\epsilon, \text{ where } a= \pi/2, 3\pi/4, \text{ and }3\pi/2.
\end{equation}
Note that here a higher value of $a$ represents a more oscillating link function.
 Figure~\ref{fig:Ex:sin} summarizes the finite sample performance of the estimators considered in this section. The performance of all the estimators worsen as the $a$ increases. When $a$ is $\pi/2$ or $3\pi/4$, GCV significantly outperforms the estimators considered in the simulation study. The IQR bars for the GCV in the first two panels of~Figure~\ref{fig:Ex:sin} are not visible because they are very small (relative to the scale of the plot).

\begin{figure}[h!]
\vspace{0cm}
\centering
\includegraphics[width=\textwidth]{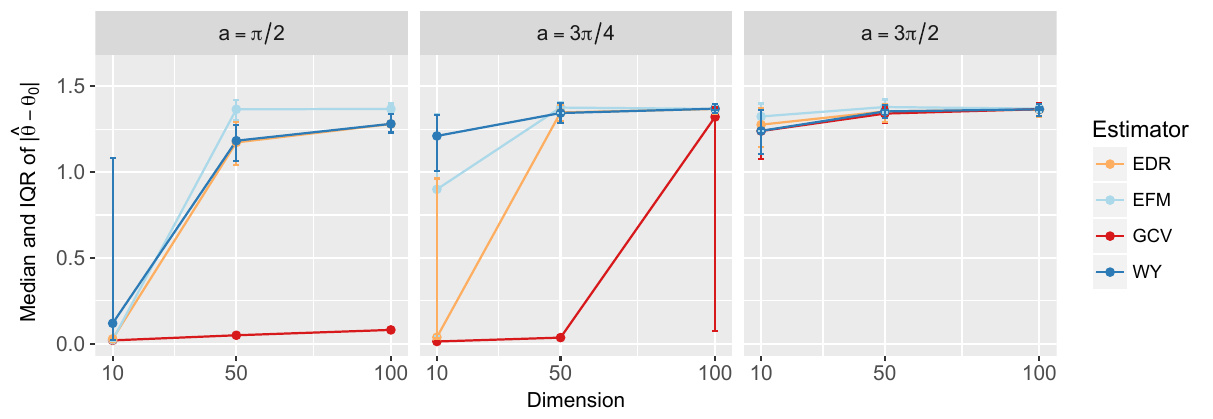}
\caption[Finite sample performance for example 3]{\label{tab:Ex:sin}  The quartiles of $ |\hat{\theta} -\theta_{0}|$ from 500 replications for $n=400$ from \eqref{eq:ex3}. }\label{fig:Ex:sin}
\vspace*{0cm}
\end{figure}

\section{Real data analysis} \label{sec:real_data}
\subsection{Car mileage data}\label{sec:car}
 In this sub-section, we model the mileages ($Y$) of $392$ cars using the covariates ($X$):  displacement (D),   weight (W),  acceleration (A),   and horsepower (H);  see \url{http://lib.stat.cmu.edu/datasets/cars.data} for the data set. For our data analysis, we have scaled and centered each  of covariates to have mean $0$ and variance $1.$ To compare the prediction capabilities of the linear model to that of the single index model for this data set, we randomly split the data set into a training set of size $260$ and a test set of size $132$ and  compute the prediction error for both the linear model fit and the single index model fit. The average prediction error over $1000$ such random splits was $4.3$ for the linear model fit and $3.8$ for the single index model fit. The results indicate that the single index model is a better fit.

  In the left panel of  Figure~\ref{fig:Ozone_fit}, we have the scatter plot of $ \{(\hat{\theta}^\top x_i, y_i)\}_{i=1}^{392}$ overlaid with the plot of $\hat{m}(\hat{\theta}^\top x)$. In Table~\ref{tab:real_dat}, we display the estimates of $\theta_0$  based on the methods considered in the paper. The MAVE, the EFM estimator, and the PLSE give similar estimates while the EDR gives a different estimate of the index parameter.
\begin{table}[h]
\caption[Estimates of $\theta_0$ for the data sets in Sections~\ref{sec:car} and \ref{sec:Ozone}. ]{Estimates of $\theta_0$ for the data sets in Sections~\ref{sec:car} and \ref{sec:Ozone}. }\label{tab:real_dat}
\centering
\begin{tabular}{*{9}{c}}
 \toprule
\multirow{2}{*}{Method}  &\multicolumn{4}{c}{Car mileage data}& &\multicolumn{3}{c}{Ozone data} \bigstrut \\
\cmidrule(rl){2-5} \cmidrule(rl){7-9}
  & D & W  &A&H& &  R & W & T\\
 \midrule
GCV & 0.48 & 0.18 & 0.11 & 0.85  && 0.32 & -0.62 & 0.71 \\
  EFM & 0.44 & 0.18 & 0.13 & 0.87  &  & 0.29 & -0.60 & 0.75 \\
  EDR & 0.33 & 0.11 & 0.15 & 0.93  & & 0.22 & -0.64 & 0.73 \\
  rMAVE & 0.48 & 0.17 & 0.17 & 0.84  && 0.31 & -0.58& 0.75\\
  \bottomrule
\end{tabular}
\end{table}

\subsection{Ozone concentration data}\label{sec:Ozone}
 For the second real data example, we study the relationship between Ozone concentration ($Y$) and three meteorological variables ($X$): radiation level ($R$), wind speed ($W$), and temperature ($T$). The data consists of $111$ days of complete measurements from May to September, $1973,$ in New York city. The data set can be found in the \texttt{EnvStats} package in R. \cite{YuRuppert02} fit a linear model, an additive model, and a fully nonparametric model and conclude that the single index model fits the data best. To fit a single index model to the data \cite{YuRuppert02} fix $10$ knots and fit cubic penalized splines to the data.
  The right panel of Figure~\ref{fig:Ozone_fit} shows the scatter plot of $ \hat{\theta}^\top X$ and $Y$ overlaid with the plot of $\hat{m}(\hat{\theta}^\top X)$. As in the previous example, we have scaled and centered each of the covariates such that they have mean $0$ and variance $1.$ We see that all the considered methods in the paper give similar estimates for $\theta_0$; see~Table~\ref{tab:real_dat}.

\begin{figure}[h!]
\vspace{0cm}
\centering
\includegraphics[width=\textwidth]{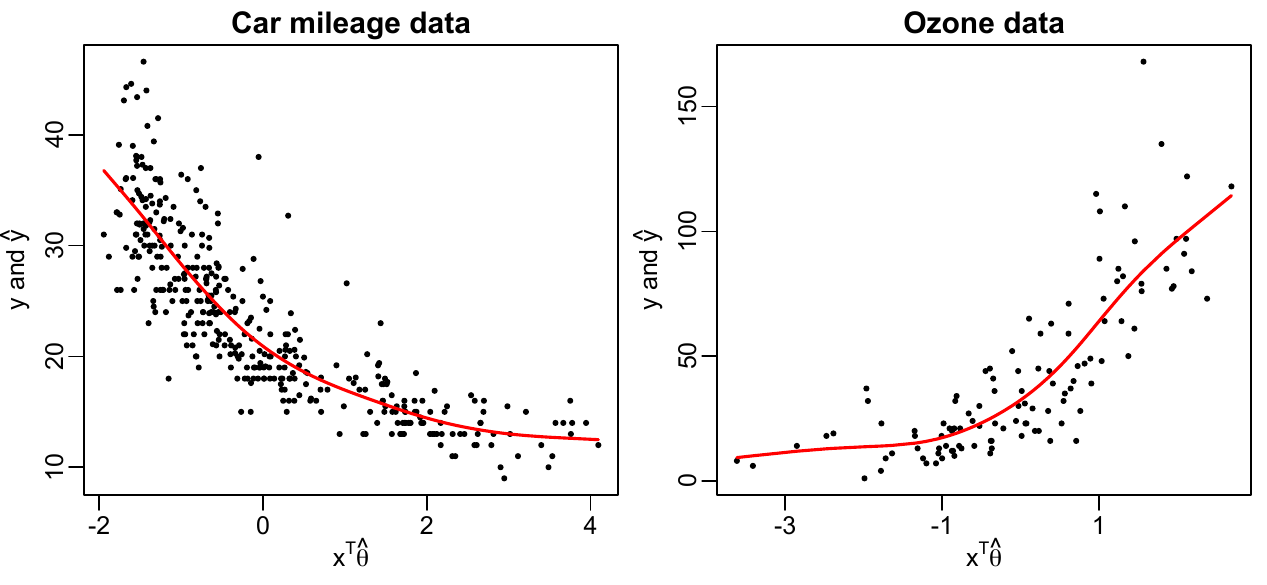}
\caption[Box plots of $L_1$ error of estimates of $\theta_0$ for Section~\ref{sec:dep_cov}.]{ Scatter plots of $\{(x_i^\top\hat{\theta}, y_i)\}_{i=1}^n$ overlaid with the plots of $\hat{m}$ (in solid red line)  for the two real data sets considered. Left panel: the car mileage data (Section~\ref{sec:car}); right panel: Ozone concentration data (Section~\ref{sec:Ozone}).}
\label{fig:Ozone_fit}
\vspace*{0cm}
\end{figure}

\section{Concluding remarks}\label{sec:summ}
In this paper we propose a simple penalized least squares based estimator $(\hat{m}, \hat\theta)$ for the unknown link function, $m_0,$ and the index parameter, $\theta_0,$ in the single index model under mild smoothness assumptions on $m_0.$ We prove that $\hat{m}$ is rate optimal (for the given smoothness) and $\hat{\theta}$ is $\sqrt{n}$-consistent and asymptotically normal. Moreover under homoscedastic errors, we show that $\hat\theta$ is efficient in the sense of \cite{BickelEtAl93}.  We have developed the R package~\texttt{simest} to compute the proposed estimators. We observe that the PLSE has superior finite sample performance compared to most competing methods.

 Several interesting future directions follow. Estimation and inference adapting to the smoothness of the link function is an interesting direction.~\cite{lepski2014adaptive} proposes an estimator for the single index model that adapts to the smoothness of the true link function, but the estimator depends on true (unknown) density of $X$ and requires independence between $\epsilon$ and $X$; also see \cite{lepski2013adaptive}. In the context of one dimensional smoothing splines~\citet[Chapter 21]{Gyorfi02} consider adaptation to smoothness using complexity regularization and extension of such a procedure to the case of single index model is an interesting direction of future research. In line with the recent literature on high-dimensional asymptotics in the single index model~\cite{li2017b,wang2014new,zhu2011variable}, it would be interesting to prove analogues of our results in a finite sample setting and under sparsity inducing regularization of the index parameter.~\cite{li2017b,peng2011penalized} consider variable selection in the single index model via a (additional) SCAD penalty (on the index parameter) on local linear and regression splines based estimation methods, respectively. They suggest that for the single index model, SCAD based variable selection methods have better performance when compared to LASSO based methods studied in~\cite{wang2014new,zhu2011variable}. Variable selection by incorporating a SCAD penalty on~\eqref{smooth:eq:L_n} is an exciting direction of research  and we plan to pursue this in the near future. 




 \section*{Acknowledgments} We would like to thank Bodhisattva Sen for many helpful discussions and for his help in writing this paper. We would also like to thank Promit Ghosal and  David A. Hirshberg for helpful discussions. Finally, we would like thank the anonymous referee, associate editor, and  Editor for their work in refereeing the paper. Their suggestions led to an improved paper. 



%
\begin{appendix}

\section{Proofs of results in Section \ref{smooth:sec:Assymplse}} \label{smooth:app:sec:Assymplse}

We start with two useful lemmas concerning the properties of functions in $\mathcal{S}.$
\begin{lemma}\label{smooth:lem:bounds}(Lemma 3.6 of~\cite{VANC})
Let $m \in \{g\in \Ss: J(g) < \infty\}$. Then $|m'(s) - m'(s_0)|\le J(m)|s-s_0|^{1/2}$ for every $s,s_0\in D$.
\end{lemma}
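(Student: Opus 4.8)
The plan is to obtain this as an immediate consequence of the fundamental theorem of calculus together with the Cauchy--Schwarz inequality, so the "proof" is essentially a two-line computation and I will only indicate where the hypotheses are used. Fix $s,s_0\in D$ and assume without loss of generality that $s_0\le s$; since $D$ is an interval, $[s_0,s]\subset D$. Because $m\in\mathcal{S}$, its first derivative $m'$ is absolutely continuous on $[s_0,s]$, and therefore is the indefinite integral of $m''$ on that interval, i.e.\ $m'(s)-m'(s_0)=\int_{s_0}^{s} m''(t)\,dt$. This is the only step where the defining property of $\mathcal{S}$ (absolute continuity of $m'$) is invoked.

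Next I would estimate the right-hand side. Applying the Cauchy--Schwarz inequality to the functions $t\mapsto 1$ and $t\mapsto m''(t)$ on $[s_0,s]$ gives
\[
\Bigl|\int_{s_0}^{s} m''(t)\,dt\Bigr|\;\le\;(s-s_0)^{1/2}\Bigl(\int_{s_0}^{s}|m''(t)|^2\,dt\Bigr)^{1/2}.
\]
Since $[s_0,s]\subset D$ and $|m''|^2\ge 0$, the last integral is bounded by $\int_{D}|m''(t)|^2\,dt=J^2(m)$, which is finite by hypothesis. Combining the displayed identity with this bound yields $|m'(s)-m'(s_0)|\le J(m)\,|s-s_0|^{1/2}$, which is the claim; the same argument with the roles of $s$ and $s_0$ interchanged covers the case $s<s_0$.

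There is no genuine obstacle here: the assumption $m\in\mathcal{S}$ is exactly what licenses writing $m'$ as an integral of $m''$, and the assumption $J(m)<\infty$ ensures the resulting Hölder-type bound is not vacuous. (If one does not wish to assume $D$ is an interval, the argument goes through verbatim after replacing $[s_0,s]$ by the segment joining $s_0$ and $s$, which is contained in $D$ whenever $s,s_0$ lie in the same connected component, the general statement then being immediate.) This lemma will later be used to control the modulus of continuity of $\hat m'$ via the bound $J(\hat m)=O_p(1)$ from Theorem~\ref{smooth:thm:mainc}.
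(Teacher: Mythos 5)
Your proof is correct and follows essentially the same route as the paper's: absolute continuity of $m'$ justifies $m'(s)-m'(s_0)=\int_{s_0}^{s}m''(t)\,dt$, and Cauchy--Schwarz then gives the claimed bound via $J(m)$. You merely spell out a couple of steps (wlog ordering of $s,s_0$, that $D$ is an interval) that the paper leaves implicit.
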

\begin{lemma}\label{smooth:lem:dbounds}
 Let $m \in \{g\in \Ss: J(g) < \infty \text{ and } \|g\|_\infty \le M\}$, where $M$ is a finite constant. Then $$\|m'\|_{\infty}\le 2M/\diameter(D)+ (1+ J(m)) \diameter(D)^{1/2},$$ where $\diameter(D)$ is the diameter of $D.$  Moreover if $\diameter(D) <\infty$, then $$\|m'\|_{\infty}\le C (1+J(m)),$$ where $C$ is a finite constant depending only on $M$ and $\diameter(D)$.
\end{lemma}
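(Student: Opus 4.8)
The plan is to produce a single point $s_0\in D$ at which $|m'(s_0)|$ is small — of order $M/\diameter(D)$ — and then to transfer that bound to all of $D$ using the $1/2$-Hölder estimate $|m'(s)-m'(s_0)|\le J(m)|s-s_0|^{1/2}$ furnished by Lemma~\ref{smooth:lem:bounds}. Since both asserted inequalities are vacuous (or trivially infinite on the right) when $\diameter(D)=\infty$, I would assume $\diameter(D)<\infty$ from the outset, so that $D$ is a bounded interval.

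To obtain the near-flat point I would apply the mean value theorem: given $a,b\in D$ with $|a-b|$ close to $\diameter(D)$, there is $\xi$ strictly between them with $m'(\xi)=\bigl(m(b)-m(a)\bigr)/(b-a)$, hence $|m'(\xi)|\le 2M/|a-b|$ because $\|m\|_\infty\le M$. Letting $|a-b|\uparrow\diameter(D)$ and invoking the continuity of $m'$ (it is $1/2$-Hölder with constant $J(m)$ by Lemma~\ref{smooth:lem:bounds}, hence extends continuously to $\overline D$) together with compactness of $\overline D$, I would extract a limit point $s_0$ with $|m'(s_0)|\le 2M/\diameter(D)$.

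Then for any $s\in D$, Lemma~\ref{smooth:lem:bounds} gives
\[
|m'(s)|\le |m'(s_0)|+J(m)\,|s-s_0|^{1/2}\le \frac{2M}{\diameter(D)}+J(m)\,\diameter(D)^{1/2},
\]
which is dominated by the claimed bound $2M/\diameter(D)+(1+J(m))\,\diameter(D)^{1/2}$; taking the supremum over $s\in D$ settles the first part. The second part is then immediate: with $\diameter(D)<\infty$, both $2M/\diameter(D)$ and $\diameter(D)^{1/2}$ are finite constants depending only on $M$ and $\diameter(D)$, so the bound collapses to $C(1+J(m))$ with, for instance, $C:=2\max\{\,2M/\diameter(D),\ \diameter(D)^{1/2}\,\}$.

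I expect no substantive obstacle here; the only point needing a little care is the existence of $s_0$ in the case where $\sup D$ or $\inf D$ is not attained, which is exactly where the continuity of $m'$ and the compactness of $\overline D$ enter rather than a bare application of the mean value theorem. The rest is routine bookkeeping, so this lemma should be short.
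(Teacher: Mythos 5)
Your argument is correct, but it takes a genuinely different route from the paper's. The paper bounds $|m'(s_0)|$ for an \emph{arbitrary} $s_0\in D$ directly: it integrates the $1/2$-H\"older bound of Lemma~\ref{smooth:lem:bounds} over $[s_0,s]$ to obtain the linear-approximation estimate $|m(s)-m(s_0)-m'(s_0)(s-s_0)|\le J(m)\diameter(D)^{3/2}$, adds $\|m\|_\infty\le M$, and then isolates $|m'(s_0)|$ by choosing $s$ at distance $\diameter(D)/2$ from $s_0$. You instead manufacture a single nearly-flat point via the mean value theorem (pairs $a,b$ with $|a-b|\uparrow\diameter(D)$, then a compactness/continuity extraction in $\overline D$) and propagate that bound to all of $D$ through the H\"older estimate. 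Both routes are sound. The paper's is slightly more self-contained — no limit extraction, no appeal to $\overline D$, and it bounds every $s_0$ at once — while yours produces the sharper constant that matches the lemma's stated inequality exactly; in fact the paper's own computation gives $4M/\diameter(D)+2J(m)\diameter(D)^{1/2}$, which does not literally dominate the claimed $2M/\diameter(D)+(1+J(m))\diameter(D)^{1/2}$, though either form suffices for the only consequence used downstream, namely $\|m'\|_\infty\le C(1+J(m))$.
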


\begin{proof}
Fix $s_0\in D$. Integrating the inequality
\[
-J(m)|t - s_0|^{1/2}\le m'(t) - m'(s_0) \le J(m)|t - s_0|^{1/2}
\]
with respect to $t$, we get
\[
|m(s) - m(s_0) - m'(s_0)(s - s_0)|\le J(m)\diameter(D)^{3/2},
\]
where $\diameter(D)$ is the diameter of $D.$  Since $\|m\|_{\infty}\le M$, we get that $$|m'(s_0)(s-s_0)| \le 2M+ J(m) \diameter(D)^{3/2}.$$  If we choose $s$ such that $|s-s_0| =\diameter(D)/2$, then we have
$$\|m'\|_{\infty} \le 2M/\diameter(D)+ (1+J(m)) \diameter(D)^{1/2}.$$ The rest of the lemma follows by choosing $C=2M/\diameter(D)  +\diameter(D)^{1/2}$.
\end{proof}

\subsection{Proof of Theorem \ref{smooth:thm:mainc}}\label{smooth:app:thm:mainc_proof}
Our proof of Theorem \ref{smooth:thm:mainc} is  along the lines of the proofs of Lemma 3.1 in \cite{MammenGeer97} and Theorem 10.2 in \cite{VANG}. Since $(\hat{m},\hat{\theta})$ minimizes $Q_n(m,\theta)+\hat{\lambda}_n^2 J^2(m)$, we have
\be\label{smooth:eq:proof_2}
Q_n(\hat{m},\hat{\theta})+  \hat{\lambda}_n^2 J^2(\hat{m})\leq   Q_n(m_0,\theta_0)  +\hat{\lambda}_n^2 J^2(m_0).
\ee
Observe that by definition of $Q_n(m, \theta)$, we have that \eqref{smooth:eq:proof_2} implies
\begin{align}
\|\hat{m} \circ \hat{\theta}-m_0\circ\theta_0\|_n^2 + \hat{\lambda}_n^2 J^2(\hat{m})
&= \frac{2}{n} \sum_{i=1}^n \epsilon_i (\hat{m}(\hat{\theta}^\top x_i)-m_0(\theta_0^\top x_i)) + \hat{\lambda}_n^2 J^2(m_0) \nonumber
\end{align}
To find  the rate of convergence  of $\|\hat{m} \circ \hat{\theta}-m_0\circ\theta_0\|_n$ we will try to find upper bounds for $\sum_{i=1}^n \epsilon_i (\hat{m}(\hat{\theta}^\top x_i)-m_0(\theta_0^\top x_i))$ in terms of $\|\hat{m} \circ \hat{\theta}-m_0\circ\theta_0\|_n$ (modulus of continuity); see Section 1 of \cite{VanDeGeer90} for a similar proof technique.  To be able to find such a bound, we first study the behavior of $\hat{m}\circ\hat{\theta}$. Observe that by  Cauchy-Schwarz inequality we have
\begin{align} \label{smooth:eq:consistency_1}
\begin{split}
&Q_n(m_0,\theta_0)- Q_n(\hat{m},\hat{\theta}) \\
={}& \frac{2}{n} \sum_{i=1}^n \epsilon_i (\hat{m}(\hat{\theta}^\top x_i)-m_0(\theta_0^\top x_i))  - \frac{1}{n}\sum_{i=1}^n (\hat{m}(\hat{\theta}^\top x_i)-m_0(\theta_0^\top x_i))^2 \\
\leq{}& \left(\frac{4}{n} \sum_{i=1}^n \epsilon_i^2\right)^{1/2} \|\hat{m} \circ \hat{\theta}-m_0\circ\theta_0\|_n - \|\hat{m} \circ \hat{\theta}-m_0\circ\theta_0\|_n^2.\\
\end{split}
\end{align}
Note that by \ref{smooth:a3}, $(1/n) \sum_{i=1}^n \epsilon_i^2 =O(1)$ almost surely. On the other hand, since $(\hat{m},\hat{\theta})$ minimizes $Q_n(m,\theta)+\hat{\lambda}_n^2 J^2(m)$, we have
\be \label{smooth:eq:consistency_2}
 Q_n(m_0,\theta_0)-Q_n(\hat{m},\hat{\theta}) \geq \hat{\lambda}_n^2( J^2(\hat{m})-J^2(m_0))\geq -\hat{\lambda}_n^2 J^2(m_0) \geq o_p(1),
\ee
as $\hat{\lambda}_n=o_p(1).$ Combining \eqref{smooth:eq:consistency_1} and \eqref{smooth:eq:consistency_2}, we have
\begin{align*}
\|\hat{m} \circ \hat{\theta}-m_0\circ\theta_0\|_n^2 \leq \|\hat{m} \circ \hat{\theta}-m_0\circ\theta_0\|_n O_p(1) + o_p(1).
\end{align*}
Thus we have $\|\hat{m} \circ \hat{\theta}-m_0\circ\theta_0\|_n=O_p(1).$ We also have $\|\hat{m} \circ \hat{\theta}\|_n=O_p(1)$ as $\|m_0\circ \theta_0\|_\infty <\infty.$

 We will now use the Sobolev embedding theorem to get a bound on $\|\hat{m}\|_\infty$ in terms of $J(\hat{m})$.
\begin{lemma}\label{smooth:lem:SobolevEmbedding} (Sobolev embedding theorem, Page 85, \cite{Oden12}) Let $m: I \to \R$ ($I \subset \R$ is an interval) be a function such that $J(m) < \infty$. We can write $$m(t)=m_1(t)+m_2(t),$$ with $m_1(t)= \beta_1+\beta_2 t$ and $\|m_2\|_{\infty} \leq J(m) \diameter(I).$
\end{lemma}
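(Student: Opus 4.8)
The plan is to obtain the decomposition from a second-order Taylor expansion of $m$ with integral remainder about a fixed base point of $I$. Fix any $t_0\in I$ (if $I$ is not open, take an endpoint). Since $m'$ is absolutely continuous and $J(m)<\infty$, for every $t\in I$ one has
\[
m(t)=m(t_0)+m'(t_0)(t-t_0)+\int_{t_0}^t (t-s)\,m''(s)\,ds .
\]
I would then set $m_1(t):=m(t_0)+m'(t_0)(t-t_0)$, which is affine, i.e.\ $m_1(t)=\beta_1+\beta_2 t$ with $\beta_1:=m(t_0)-m'(t_0)t_0$ and $\beta_2:=m'(t_0)$, and $m_2:=m-m_1$, so that $m_2(t)=\int_{t_0}^t (t-s)\,m''(s)\,ds$.

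Next I would bound $m_2$ uniformly on $I$. For $t\in I$ and $s$ between $t_0$ and $t$ we have $|t-s|\le\diameter(I)$, so by the Cauchy--Schwarz inequality
\[
|m_2(t)|\le \int_{t_0}^t |t-s|\,|m''(s)|\,ds\le \diameter(I)\Big(\int_{t_0}^t |m''(s)|^2\,ds\Big)^{1/2}|t-t_0|^{1/2}\le J(m)\,\diameter(I)^{3/2},
\]
which is the asserted estimate up to the factor $\diameter(I)^{1/2}$ that is immaterial in every application of the lemma (in particular in the use made of it in the proof of Theorem~\ref{smooth:thm:mainc}), and for which one may equally cite \cite{Oden12} for the precise constant. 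Two equivalent routes are worth noting: one can instead invoke Lemma~\ref{smooth:lem:bounds} to write $|m'(u)-m'(t_0)|\le J(m)|u-t_0|^{1/2}$ and integrate this from $t_0$ to $t$ to get the bound on $m_2=m-m_1$; or one can take $m_1$ to be the affine interpolant of $m$ at the two endpoints of $I$, so that $m_2$ vanishes there, Rolle's theorem supplies an interior zero of $m_2'$, and the same Cauchy--Schwarz step applies.

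There is no genuine obstacle here: the statement is a classical one-dimensional Sobolev embedding ($H^2(I)\hookrightarrow C(\bar I)$) and every step is elementary. The only points that need a moment's care are presentational — writing $m_1$ explicitly in the form $\beta_1+\beta_2 t$ (any affine function extends to all of $\R$, so the choice of $t_0$ is unconstrained) and tracking the harmless power of $\diameter(I)$ in the remainder estimate.
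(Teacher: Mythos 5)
Your proof is correct and self-contained; note that the paper itself offers no proof of this lemma at all (it is stated purely as a citation to \cite{Oden12}), so a direct elementary argument like yours is a genuine alternative route rather than a reprise. Your Taylor-with-integral-remainder decomposition, $m_1(t)=m(t_0)+m'(t_0)(t-t_0)$ and $m_2(t)=\int_{t_0}^t(t-s)m''(s)\,ds$, together with Cauchy--Schwarz, is exactly the standard proof of the one-dimensional embedding $H^2(I)\hookrightarrow C(\bar I)$. You are also right to flag that the bound this argument yields is $\|m_2\|_\infty\le J(m)\,\diameter(I)^{3/2}$, not the $J(m)\,\diameter(I)$ printed in the lemma: the printed exponent cannot be right on dimensional grounds (scaling $t\mapsto\lambda t$ scales $J(m)$ by $\lambda^{-3/2}$ and $\diameter(I)$ by $\lambda$, so only the $3/2$ power leaves $\|m_2\|_\infty$ invariant), and the paper's own Lemma~\ref{smooth:lem:dbounds} derives precisely the $\diameter^{3/2}$ bound by integrating the H\"older estimate of Lemma~\ref{smooth:lem:bounds} --- which is, verbatim, your second ``equivalent route.'' Since the lemma is used in the proof of Theorem~\ref{smooth:thm:mainc} only to conclude $\|\hat m_2\|_\infty\lesssim J(\hat m)$ with a constant depending on the fixed compact set $D$, the exponent has no downstream effect, exactly as you say. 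All three variants you sketch (integral Taylor remainder; integrating Lemma~\ref{smooth:lem:bounds}; endpoint interpolation plus Rolle and Cauchy--Schwarz) are valid and differ only in absolute constants; the second is the most economical given that the needed ingredient is already proved in the paper.
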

Thus, by the above lemma, we can find functions $\hat{m}_1$ and $\hat{m}_2$ such that
$$\hat{m}(t)=\hat{m}_1(t)+ \hat{m}_2(t),$$ where $ \hat{m}_1=\hat{\beta}_1+ \hat{\beta}_2 t,$ and $\|\hat{m}_2\|_\infty \leq J(\hat{m})\diameter(D).$ Then
\begin{align}
\begin{split} \label{smooth:eq:bound_m1}
\frac{\|\hat{m}_1 \circ \hat{\theta}\|_n}{1+J(m_0)+J(\hat{m})} &\leq \frac{\|\hat{m}\circ \hat{\theta}\|_n}{1+J(m_0)+J(\hat{m})} + \frac{\|\hat{m}_2\circ \hat{\theta}\|_n}{1+J(m_0)+J(\hat{m})}  =O_p(1).
\end{split}
\end{align}
Let us  define
\bee
\mathbb{A}_n(\theta):=  \frac{1}{n}\sum_{i=1}^n \varphi_\theta(X_i) \varphi_\theta^\top(X_i) \qquad \text{and} \qquad A(\theta) :=\int \varphi_\theta(x) \varphi_\theta(x)^\top dP_X(x),
\eee
where  $\varphi_\theta(x):=(1,\theta^\top x)^\top.$ Furthermore, we  denote the smallest eigenvalues of $\mathbb{A}_n(\theta)$ and $A(\theta)$ by $\vartheta_n(\theta)$ and $\vartheta(\theta)$ respectively.  Since $\Theta$ is a bounded subset of $\R^d$, by the Glivenko-Cantelli Theorem, we have
\be
\sup_{\theta\in\Theta} |\vartheta_n(\theta) -\vartheta(\theta)| =o_p(1).
\ee
Let $\vartheta_0:= \min_{\theta \in \Theta} \vartheta(\theta).$ { By assumption \ref{smooth:a0} and and the fact that $|\theta|=1$, we have $\text{det}(A(\theta))=\theta^\top \Var(X) \theta$ and $\inf_{\theta\in \Theta} \text{det}(A(\theta)) >0.$ It follows that $\vartheta_0>0$  and}
\begin{align*}
\|\hat{m}_1 \circ \hat{\theta}\|_n^2&= (\hat{\beta}_1,\hat{\beta}_2) \mathbb{A}_n(\theta)(\hat{\beta}_1,\hat{\beta}_2)^\top\\
  &\geq \vartheta_n(\hat{\theta}) (\hat{\beta}_1^2+ \hat{\beta}_2^2)\\
&= \big[\vartheta_n(\hat{\theta})- \vartheta(\hat{\theta}) \big](\hat{\beta}_1^2+ \hat{\beta}_2^2) + \vartheta(\hat{\theta}) (\hat{\beta}_1^2+ \hat{\beta}_2^2)\\
&\geq o_p(\hat{\beta}_1^2+ \hat{\beta}_2^2) +  \vartheta_0 (\hat{\beta}_1^2+ \hat{\beta}_2^2)\\
&\geq o_p(\hat{\beta}_1^2+ \hat{\beta}_2^2) +  \vartheta_0 \max(\hat{\beta}_1,\hat{\beta}_2)^2 \\
\end{align*}
Thus by \eqref{smooth:eq:bound_m1} we have
\be\label{smooth:eq:m1_coeffbnd}  \frac{\max(\hat{\beta}_1,\hat{\beta}_2) }{1+J(m_0)+J(\hat{m})}=O_p(1).\ee
Moreover, since $D$ is a bounded set,  by \eqref{smooth:eq:m1_coeffbnd}  we have $\|\hat{m}_1\|_\infty/ (1+J(m_0)+J(\hat{m}))= O_p(1).$ Combining this with Lemma~\ref{smooth:lem:SobolevEmbedding}, we get
\begin{equation} \label{smooth:eq:LinfBound}
\frac{\|\hat{m}\|_\infty}{1+J(m_0)+J(\hat{m})} \le \frac{\|\hat{m}_1\|_\infty}{1+J(m_0)+J(\hat{m})}+\frac{\|\hat{m}_2\|_\infty}{1+J(m_0)+J(\hat{m})}= O_p(1).
\end{equation}

Now define the class of functions
\begin{align*}
\mathcal{B}_C&:=\left\lbrace \frac{m\circ\theta-m_0\circ\theta_0}{1+J(m_0)+J(m)}: m\in \Ss,\; \theta \in \Theta,\; \text{and } \frac{\|m\|_\infty}{1+J(m_0)+J(m)} \le C\right\rbrace.
\end{align*}Observe that by \eqref{smooth:eq:LinfBound}, we can find a $C_\varepsilon$ such that

\be \label{smooth:eq:bc_prob}
\p\left(\frac{\hat{m}\circ\hat\theta-m_0\circ\theta_0}{1+J(m_0)+J(\hat{m})} \in \mathcal{B}_{C_\varepsilon}\right) \ge 1-\varepsilon, \qquad \forall n.
\ee

The following lemma  in \cite{VANG} gives a upper bound for $\sum_{i=1}^n \epsilon_i g(x_i)$, in terms of entropy of the class of functions $g$.

\begin{lemma} \label{smooth:lem:ContEmpProcess} (Lemma 8.4, \cite{VANG})
Suppose $\mathcal{G}$ be a class of functions. If $\log N_{[\;]} (\delta, \mathcal{G}, \| \cdot \|_\infty) \le A \delta^{-\alpha},$ $\sup_{g\in\mathcal{G}} \|g\|_n \le R,$ and $\epsilon$ satisfies assumption \ref{smooth:a3}, for some constants $0<\alpha<2,$ $A,$ and $R.$ Then for some constant $c,$ we have for all $T\ge c,$
\begin{equation}
\p\left( \sup_{g\in \mathcal{G}} \frac{ |\frac{1}{\sqrt{n}} \sum_{i=1}^n \epsilon_i g(x_i)| }{\|g\|_n^{1-\frac{\alpha}{2}}} \ge T\right) \le c \exp\left[\frac{-T^2}{c^2}\right]
\end{equation}
\end{lemma}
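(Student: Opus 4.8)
The statement is exactly Lemma~8.4 of \cite{VANG}, so in the paper it may simply be cited; for completeness, here is how the proof goes. The plan is a chaining‑plus‑peeling (slicing) argument carried out conditionally on $x_1,\dots,x_n$. First observe that, because $\epsilon$ is uniformly sub‑Gaussian with $\E(\epsilon|X)=0$ (assumption~\ref{smooth:a3}), for any fixed $h$ the variable $n^{-1/2}\sum_{i=1}^n \epsilon_i h(x_i)$ is, conditionally on the design, sub‑Gaussian with variance proxy a fixed multiple of $\|h\|_n^2$; hence $h\mapsto n^{-1/2}\sum_{i=1}^n\epsilon_i h(x_i)$ has sub‑Gaussian increments for the pseudometric $(h,h')\mapsto\|h-h'\|_n$. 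Since $\|\cdot\|_n\le\|\cdot\|_\infty$, the hypothesis $\log N_{[\,]}(\delta,\mathcal{G},\|\cdot\|_\infty)\le A\delta^{-\alpha}$ also bounds the $\|\cdot\|_n$‑metric entropy of $\mathcal{G}$; the sup‑norm brackets moreover let one truncate the unbounded errors and control the resulting increments by a Bernstein‑type estimate, which is where bracketing, rather than ordinary covering, enters.

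The first step is a maximal inequality over a single $\|\cdot\|_n$‑ball. Fix $\delta>0$ and set $\mathcal{G}_\delta:=\{g\in\mathcal{G}:\|g\|_n\le\delta\}$. Dudley's entropy bound combined with the entropy estimate gives
\[
\E\sup_{g\in\mathcal{G}_\delta}\Big|\tfrac{1}{\sqrt{n}}\sum_{i=1}^n\epsilon_i g(x_i)\Big|\ \lesssim\ \int_0^\delta\sqrt{\log N_{[\,]}(t,\mathcal{G},\|\cdot\|_\infty)}\,dt\ \lesssim\ \sqrt{A}\int_0^\delta t^{-\alpha/2}\,dt\ \lesssim\ \sqrt{A}\,\delta^{1-\alpha/2},
\]
the integral being finite precisely because $\alpha<2$. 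Since the process restricted to $\mathcal{G}_\delta$ has conditional variance proxy at most a constant times $\delta^2$, a sub‑Gaussian deviation bound for the supremum then produces a constant $c_0=c_0(A,\alpha,K_1)$ such that, for all $u\ge c_0\,\delta^{1-\alpha/2}$,
\[
\p\Big(\sup_{g\in\mathcal{G}_\delta}\Big|\tfrac{1}{\sqrt{n}}\sum_{i=1}^n\epsilon_i g(x_i)\Big|\ge u\Big)\ \le\ c\exp\!\Big(-\frac{u^2}{c^2\,\delta^{2}}\Big).
\]
The crucial point is that the variance scale $\delta^{2}$, rather than the entropy‑integral scale $\delta^{2-\alpha}$, sits in the exponent; this is what makes the subsequent peeling telescope.

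The second step is the peeling. The subset $\{g\in\mathcal{G}:\|g\|_n=0\}$ contributes nothing, so it suffices to control the shells $S_j:=\{g\in\mathcal{G}: 2^{-j-1}R<\|g\|_n\le 2^{-j}R\}$ for $j=0,1,2,\dots$, which cover $\{g\in\mathcal{G}:0<\|g\|_n\le R\}$. On $S_j$ one applies the ball bound with $\delta=2^{-j}R$ and threshold $u=T\,(2^{-j-1}R)^{1-\alpha/2}$; since $\|g\|_n>2^{-j-1}R$ on $S_j$, on the complement of the bad event $|n^{-1/2}\sum_i\epsilon_i g(x_i)|\big/\|g\|_n^{1-\alpha/2}\le T$ uniformly over $S_j$. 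Taking $c$ large enough that $T\ge c$ forces $u\ge c_0\delta^{1-\alpha/2}$ and $u$ larger than twice the expected supremum, and then the failure probability on $S_j$ is at most $c\exp(-c'\,T^2\,2^{j\alpha}R^{-\alpha})$, which decays geometrically in $j$ because $\alpha>0$. Summing over $j\ge0$, the union bound gives a geometric series dominated by its $j=0$ term, i.e.\ a bound of the form $c\exp(-c'T^2R^{-\alpha})$; renaming $c$ to absorb $A$, $R$, $\alpha$ and the sub‑Gaussian constant $K_1$ yields the claimed $c\exp(-T^2/c^2)$ for all $T\ge c$. The main obstacle is the first step: obtaining the single‑ball inequality in exactly the above form (variance scale in the exponent) while coping with the unboundedness of $\epsilon$, which forces the bracket‑truncation device and the accompanying Bernstein bounds; once that is in hand, the peeling is routine bookkeeping.
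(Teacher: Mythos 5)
The paper does not prove this lemma; it is stated verbatim and cited as Lemma~8.4 of \cite{VANG}, so there is no internal proof to compare against. Your sketch, however, is a faithful reconstruction of van de Geer's argument: the two-stage structure (a maximal inequality over a single $\|\cdot\|_n$-ball whose exponent carries the \emph{variance scale} $\delta^{2}$ rather than the entropy-integral scale $\delta^{2-\alpha}$, followed by peeling over dyadic shells $\{2^{-j-1}R<\|g\|_n\le 2^{-j}R\}$) is exactly the machinery of Chapter~8 of \cite{VANG}, and your bookkeeping on the geometric decay of the shell probabilities (exponent $\propto T^{2}2^{j\alpha}R^{-\alpha}$) is correct. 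You also correctly flag the two points that make this more than routine Dudley chaining: sup-norm \emph{bracketing} (rather than $\|\cdot\|_n$-covering) is needed so that the unbounded sub-Gaussian errors can be handled by truncation and a Bernstein-type bound inside the chaining; and the threshold $T\ge c$ is exactly what is needed to dominate the expected supremum on each shell so that the one-sided tail bound can be invoked. So the proposal is correct and mirrors the cited reference; there is no gap to report.
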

Lemma \ref{smooth:lem:entropy1}, proved in Section \ref{smooth:app:lem:entrop1_proof} of the supplementary material, finds the bracketing number for the class of functions $\mathcal{B}_C$.
\begin{lemma}\label{smooth:lem:entropy1} For every fixed positive $M_1,M_2,$ and $C$, we have
\begin{align*}
\log N \left(\delta,\mathcal{B}_C, \|\cdot\|_\infty\right) &\lesssim \delta^{-1/2}. 
\end{align*}
\end{lemma}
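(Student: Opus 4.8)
The plan is to reduce the estimate to a classical metric-entropy bound for balls in the Sobolev space $W^{2,2}$ over a bounded interval. Fix $C>0$. For $(m,\theta)$ admissible in the definition of $\mathcal{B}_C$, set $c_m:=(1+J(m_0)+J(m))^{-1}\in(0,1]$ and $\tilde m:=c_m m$, so that $\|\tilde m\|_\infty\le C$, $J(\tilde m)\le 1$, and every $f\in\mathcal{B}_C$ has the form $f=\tilde m\circ\theta-c_m\,(m_0\circ\theta_0)$. The second summand ranges over scalar multiples of the single fixed bounded function $m_0\circ\theta_0$ with scalar in $(0,1]$, so it can be covered by $O(\delta^{-1})$ balls in $\|\cdot\|_\infty$; hence it suffices to control the $\|\cdot\|_\infty$-covering numbers of $\mathcal{G}_C:=\{\tilde m\circ\theta:\theta\in\Theta,\ \|\tilde m\|_\infty\le C,\ J(\tilde m)\le 1\}$.

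Next I would decouple the $\theta$-dependence from the $\tilde m$-dependence. By Lemma~\ref{smooth:lem:dbounds} (recall $D$ is bounded) there is a constant $C_1=C_1(C,D)$ with $\|\tilde m'\|_\infty\le C_1(1+J(\tilde m))\le 2C_1$, so using $\sup_{x\in\rchi}|x|\le T$ (assumption~\ref{smooth:a2}) one gets $\|\tilde m\circ\theta-\tilde m\circ\theta'\|_{\infty,\rchi}\le 2C_1T|\theta-\theta'|$. Covering $\Theta\subset S^{d-1}$ by an $\varepsilon$-net of cardinality $\lesssim\varepsilon^{-(d-1)}$ with $\varepsilon\asymp\delta$ then reduces, up to an additive term $O((d-1)\log(1/\delta))$, to bounding, for each fixed net point $\theta_j$, the $\|\cdot\|_{\infty,\rchi}$-covering number of $\{\tilde m\circ\theta_j\}$. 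Since $x\mapsto\theta_j^\top x$ maps $\rchi$ into the bounded interval $D$, a $\|\cdot\|_{\infty,D}$-cover of $\mathcal{F}:=\{m:D\to\R:\|m\|_\infty\le C,\ J(m)\le 1\}$ induces one of $\{\tilde m\circ\theta_j\}$, so the whole problem comes down to showing $\log N(\delta,\mathcal{F},\|\cdot\|_{\infty,D})\lesssim\delta^{-1/2}$.

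The heart of the proof is this last bound, and it is here that the $L^2$-constraint on $m''$ (rather than merely the $|s-s_0|^{1/2}$-modulus of $m'$ from Lemma~\ref{smooth:lem:bounds}) must be used, since the pointwise information alone would give only the weaker exponent $\delta^{-2/3}$ of a Hölder-$3/2$ ball. Using the Taylor representation $m(t)=m(a)+m'(a)(t-a)+\int_a^t(t-s)m''(s)\,ds$ about an endpoint $a$ of $D$, split $m$ into its affine part and the remainder $R_m:=\int_a^\cdot(\cdot-s)m''(s)\,ds$; by Cauchy--Schwarz $\|R_m\|_\infty\lesssim J(m)\le 1$, so the affine part has uniformly bounded coefficients and needs only $O(\log(1/\delta))$ brackets. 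The map $m''\mapsto R_m$ is the twice-integration (Volterra) operator, whose singular values on $L^2(D)$ decay like $j^{-2}$; hence $\{R_m:J(m)\le1\}$ is, in a suitable orthonormal basis, an ellipsoid with semi-axes of order $j^{-2}$, and the $\|\cdot\|_\infty$-metric entropy of such an ellipsoid is $\lesssim\delta^{-1/2}$ — the sharp Birman--Solomjak estimate for balls of $W^{2,2}$ over a bounded interval; cf.\ the smoothing-spline entropy bounds in \cite{VANG}. Combining the three contributions gives $\log N(\delta,\mathcal{B}_C,\|\cdot\|_\infty)\lesssim\log(1/\delta)+(d-1)\log(1/\delta)+\delta^{-1/2}\lesssim\delta^{-1/2}$.

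The only genuinely nontrivial step is the Sobolev-ball entropy bound; the two decompositions and the $\varepsilon$-net on the sphere are routine bookkeeping. If one prefers a self-contained argument in place of citing Birman--Solomjak, the adaptive piecewise-polynomial construction works: partition $D$ into subintervals on which $\int|m''|^2$ is comparably small, approximate $m$ on each by a low-degree polynomial with error controlled by the local $L^2$-norm of $m''$, and observe that the total budget $\int_D|m''|^2\le1$ forces the number of pieces needed for accuracy $\delta$ to be $O(\delta^{-1/2})$; each local polynomial is then discretized with $O(\log(1/\delta))$ parameters, and a careful count removes the logarithmic factor.
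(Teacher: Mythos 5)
Your argument is correct and takes essentially the paper's route: reduce the bound to the $\delta^{-1/2}$ metric entropy of the Sobolev ball $\{m:\|m\|_\infty\le M,\,J(m)\le 1\}$ and combine it with a finite $\varepsilon$-net on the compact sphere $\Theta$, treating the $m_0\circ\theta_0$ term separately. The only differences are cosmetic: the paper obtains the Sobolev-ball estimate by citing Lemma~\ref{smooth:lem:entropy} (Theorem~2.4 of \cite{VANG}) rather than rederiving it via Birman--Solomjak or adaptive partitioning, and it covers the class $\f^*=\{m_0/(1+J(m_0)+J(m))\}$ by embedding it into $\G_{C_1,1}$ rather than exploiting, as you more economically do, that it is a one-parameter family of scalar multiples of a single fixed function.
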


In the view of \eqref{smooth:eq:bc_prob}, Lemmas \ref{smooth:lem:ContEmpProcess}  and \ref{smooth:lem:entropy1} allow us to conclude
\be \label{smooth:eq:main_cosistency}
\frac{(1/n) \sum_{i=1}^n \epsilon_i (\hat{m}(\hat{\theta}^\top x_i)-m_0(\theta_0^\top x_i)) } {\|\hat{m}\circ\hat{\theta}- m_0\circ\theta_0\|_n ^{3/4} (1+J(m_0)+J(\hat{m}))^{1/4}} =O_p(n^{-1/2}).
\ee
Together, \eqref{smooth:eq:consistency_2} and \eqref{smooth:eq:main_cosistency} imply
\begin{align} \label{smooth:eq:basicIneq}
\begin{split}
&\hat{\lambda}_n^2 (J^2(\hat{m})-J^2(m_0))\\
 \le{}& Q_n(m_0,\theta_0)-Q_n(\hat{m},\hat{\theta})\\
={}& \frac{2}{n} \sum_{i=1}^n (y_i-m_0(\theta_0^\top x_i)) (\hat{m}(\hat{\theta}^\top x_i)-m_0(\theta_0^\top x_i)) - \|\hat{m} \circ \hat{\theta}-m_0\circ\theta_0\|_n^2\\
\le{}& \|\hat{m}\circ\hat{\theta}- m_0\circ\theta_0\|_n ^{3/4} (1+J(m_0)+J(\hat{m}))^{1/4} O_p(n^{-1/2}) - \|\hat{m} \circ \hat{\theta}-m_0\circ\theta_0\|_n^2.
\end{split}
\end{align}
We will now consider two cases.

\noindent\textbf{Case 1}: Suppose $J(\hat{m}) > 1+J(m_0)$. By the proof of Theorem 10.2 of~\cite{MR1915446} with $\nu=2$ and $\alpha= 1/2$, we have that  
\begin{equation}\label{smooth:case1_J}
 J(\hat{m}) = O_p(n^{-1/2}) \hat{\lambda}_n^{-5/4}. \text{  and  } \|\hat{m} \circ \hat{\theta}-m_0\circ\theta_0\|_n= O_p(n^{-1/2}) \hat{\lambda}_n^{-1/4}.
\end{equation}
However, by assumption \ref{smooth:a3}, we have that $\hat{\lambda}_n^{-1}= O_p(n^{2/5}).$ Hence the conclusion follows.

\noindent\textbf{Case 2}: When $J(\hat{m}) \le 1+J(m_0),$ \eqref{smooth:eq:basicIneq} implies,
\[ \|\hat{m} \circ \hat{\theta}-m_0\circ\theta_0\|_n^2 \le \|\hat{m}\circ\hat{\theta}- m_0\circ\theta_0\|_n ^{3/4} (1+J(m_0))^{1/4} O_p(n^{-1/2} )+ \hat{\lambda}_n^2 J^2(m_0).\]
Therefore, it follows that either
\bee \label{smooth:eq:case2_part1}
\|\hat{m} \circ \hat{\theta}-m_0\circ\theta_0\|_n \le (1+J(m_0))^{1/5} O_p(n^{-2/5} )= O_p(\hat{\lambda}_n)
\eee
or
\bee \label{smooth:eq:case2_part2}
\|\hat{m} \circ \hat{\theta}-m_0\circ\theta_0\|_n \le O_p(1) \hat{\lambda}_n J(m_0) =O_p(\hat{\lambda}_n) J(m_0).
\eee
Thus we have that $J(\hat{m})=O_p(1),$ $\|\hat{m} \circ \hat{\theta}-m_0\circ\theta_0\|_n= O_p(\hat{\lambda}_n),$ and, by \eqref{smooth:eq:LinfBound}, $\|\hat{m}\|_\infty= O_p(1).$  To find the rates of convergence of $\|\hat{m} \circ \hat{\theta}-m_0\circ\theta_0\|,$ we use the following lemma.
\begin{lemma} \label{smooth:lem:EqualityNorms} (Lemma 5.16, \cite{VANG})
Suppose $\mathcal{G}$ is a class of uniformly bounded functions and for some $0 < \nu <2,$
$$ \sup_{\delta>0} \delta^\nu \log N_{[\;]} (\delta,\mathcal{G}, \|\cdot \|_\infty) < \infty .$$
Then for every given $\alpha >0$ there exists a constant  $C>0$ such that
$$ \limsup_{n\rightarrow \infty} \mathbb{P}\left( \sup_{g\in \mathcal{G}, \|g\|> C n^{-1/ (2+\nu)}} \left| \frac{\|g\|_n}{\|g\|} -1\right| >\alpha \right)=0.$$
\end{lemma}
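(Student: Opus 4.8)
I would deduce the statement from a ratio-type control of the centered empirical process applied to the squared functions, combined with a peeling argument over dyadic shells of the $\|\cdot\|$-radius. Write $K:=\sup_{g\in\mathcal G}\|g\|_\infty<\infty$, $\rho_n:=Cn^{-1/(2+\nu)}$, and assume without loss of generality $\alpha\le1$. Since $\|g\|_n^2-\|g\|^2=(\mathbb P_n-P)(g^2)$ (with $\|h\|^2=\int h\,dP$, $\|h\|_n^2=\int h\,d\mathbb P_n$) and $|\sqrt{1+u}-1|\le|u|$ for $|u|\le1$, it suffices to choose $C$ so that
\[
\mathbb P\Big(\sup_{g\in\mathcal G,\ \|g\|>\rho_n}\frac{|(\mathbb P_n-P)(g^2)|}{\|g\|^2}>\alpha\Big)\longrightarrow 0 .
\]
Because $\|g\|\le K$ for every $g$, only $O(\log n)$ of the dyadic shells $S_j:=\{g\in\mathcal G: 2^{j-1}\rho_n<\|g\|\le 2^j\rho_n\}$, $j\ge1$, are nonempty, so it is enough to bound $\sum_j\mathbb P\big(\sup_{g\in S_j}|(\mathbb P_n-P)(g^2)|>\alpha(2^{j-1}\rho_n)^2\big)$.

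First I would transfer the entropy bound to $\mathcal G^2:=\{g^2:g\in\mathcal G\}$: an $\varepsilon$-bracket $[\ell,u]$ for $g$ in $\|\cdot\|_\infty$ yields a $2K\varepsilon$-bracket for $g^2$ in $\|\cdot\|_\infty$ (upper function $\max(\ell^2,u^2)$, lower function $0$ where $\ell,u$ straddle $0$ and $\min(\ell^2,u^2)$ elsewhere), hence $\log N_{[\,]}(\varepsilon,\mathcal G^2,\|\cdot\|_\infty)\le\log N_{[\,]}(\varepsilon/(2K),\mathcal G,\|\cdot\|_\infty)\lesssim\varepsilon^{-\nu}$, and a fortiori $\log N_{[\,]}(\varepsilon,\mathcal G^2,\|\cdot\|)\lesssim\varepsilon^{-\nu}$ and $J_{[\,]}(\delta,\mathcal G^2,\|\cdot\|)\lesssim\delta^{1-\nu/2}$ since $\nu<2$. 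On a shell $S_j$ one has $\|g^2\|_{2}=(\mathbb E g^4)^{1/2}\le K\|g\|\le K2^j\rho_n=:\delta_j$, and $\mathcal G^2$ has the constant envelope $K^2$, so a standard bracketing maximal inequality (Chapter~5 of \cite{VANG}) together with the entropy bound gives
\[
\mathbb E\sup_{g\in S_j}\big|(\mathbb P_n-P)(g^2)\big|\ \lesssim\ \frac{\delta_j^{1-\nu/2}}{\sqrt n}\Big(1+\frac{\delta_j^{-\nu/2}K^2}{\delta_j\sqrt n}\Big)\ \lesssim\ \frac{\delta_j^{1-\nu/2}}{\sqrt n}+\frac{\delta_j^{-\nu}}{n},
\]
with implied constant depending only on $\nu$, $K$, and the entropy constant. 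The point of the calibration $\rho_n=Cn^{-1/(2+\nu)}$ is that $\rho_n^{-1-\nu/2}/\sqrt n=C^{-1-\nu/2}$ and $\rho_n^{-2-\nu}/n=C^{-2-\nu}$; dividing the last display by $(2^{j-1}\rho_n)^2$ and using $\delta_j=K2^j\rho_n$ therefore gives $\mathbb E\sup_{g\in S_j}|(\mathbb P_n-P)(g^2)|\le c(\nu,K)\big(C^{-1-\nu/2}2^{-j(1+\nu/2)}+C^{-2-\nu}2^{-j(2+\nu)}\big)(2^{j-1}\rho_n)^2$, a bound summable in $j$ and $o(1)$ as $C\to\infty$, \emph{uniformly in $n$}. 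I would fix $C$ so large that this is at most $\tfrac\alpha8(2^{j-1}\rho_n)^2$ for every $j$ and every $n$.

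For the fluctuation I would invoke a concentration inequality for the supremum of a bounded centered empirical process (Talagrand/Bousquet): since $\mathcal G^2$ is bounded by $K^2$ and $\sup_{g\in S_j}\operatorname{Var}(g^2)\le\delta_j^2\asymp(2^{j-1}\rho_n)^2$, taking the threshold $\tfrac\alpha4(2^{j-1}\rho_n)^2$ (so the deviation above the mean is $\ge\tfrac\alpha8(2^{j-1}\rho_n)^2$) yields
\[
\mathbb P\Big(\sup_{g\in S_j}|(\mathbb P_n-P)(g^2)|>\tfrac\alpha4(2^{j-1}\rho_n)^2\Big)\le\exp\!\big(-c(\nu,K)\,\alpha^2 n(2^{j-1}\rho_n)^2\big)=\exp\!\big(-c'\alpha^2 C^2 2^{2j}n^{\nu/(2+\nu)}\big),
\]
using $n\rho_n^2=C^2n^{\nu/(2+\nu)}$. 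Summing over the $O(\log n)$ nonempty shells gives a bound $\lesssim(\log n)\exp(-c'\alpha^2C^2 n^{\nu/(2+\nu)})\to0$. On the complement, every $g$ with $\|g\|>\rho_n$ lies in some $S_j$ and obeys $|(\mathbb P_n-P)(g^2)|\le\tfrac\alpha4\|g\|^2$, hence $|\|g\|_n^2/\|g\|^2-1|\le\alpha/4$ and so $|\|g\|_n/\|g\|-1|\le\alpha$; since $C$ depends only on $\alpha$ (and on $\nu$ and the entropy constant), this is exactly the assertion.

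The main obstacle is carrying out the last two paragraphs simultaneously: the estimate must be propagated over the $O(\log n)$ shells with a bound that is both summable in the shell index and can be forced below $\alpha$ by choosing $C$ large \emph{without reference to $n$}, and the union bound over shells must still vanish — which rules out a naive Markov bound on the expectation and forces a genuine exponential deviation inequality for suprema of bounded empirical processes. Everything hinges on $1/(2+\nu)$ being precisely the exponent at which $\rho_n^{-1-\nu/2}/\sqrt n$ and $\rho_n^{-2-\nu}/n$ are constant in $n$ while $n\rho_n^2\to\infty$; the suboptimal second-order term $\delta_j^{-\nu}/n$ in the maximal inequality is harmless for the same reason.
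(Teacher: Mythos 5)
Your proposal is correct. Since the paper merely quotes this as Lemma~5.16 of \cite{VANG} without reproducing its proof, there is no argument of the paper to compare against, but your reconstruction matches the standard argument and is essentially van de Geer's own: peel over dyadic shells in $\|\cdot\|$, transfer the $\|\cdot\|_\infty$-bracketing entropy of $\mathcal{G}$ to $\mathcal{G}^2$ (the bracket width grows by a factor of order $K$, so the polynomial rate $\varepsilon^{-\nu}$ is preserved), apply a bracketing maximal inequality together with the observation that at $\rho_n = Cn^{-1/(2+\nu)}$ both $\rho_n^{-1-\nu/2}/\sqrt{n}$ and $\rho_n^{-2-\nu}/n$ are free of $n$, and then invoke an exponential concentration inequality for suprema of bounded empirical processes (you use Bousquet's; van de Geer uses her own chaining-based deviation bounds from the same chapter) to upgrade the expectation control to a vanishing tail probability. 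Your observation that a plain Markov bound on the expectation would deliver only $\limsup \le \varepsilon(C)$ rather than $\limsup = 0$ correctly identifies why the concentration step cannot be dispensed with.
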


\section{Proofs of results in Section \ref{smooth:sec:SemiInf} }\label{smooth:app:SemiInf}

\subsection{Proof of Theorem \ref{smooth:thm:eff_equation}}\label{smooth:app:proof_eff_equation}
We will first show that $\xi_t(u;\theta,\eta,m)$ is a valid submodel. Note  that $\phi_{\theta,\eta, 0}(u+(\theta-\theta)^\top h_{\theta}(u))=u,$ $\forall u \in D.$ Hence,
\begin{align*}
\xi_0(\theta^\top x;\theta,\eta,m) = m\circ \phi_{\theta,\eta,0}(\theta ^\top x)= m({\theta^\top x}).
\end{align*}
Now we will prove that $J^2(\xi_t(\cdot;\theta,\eta,m))  < \infty.$ Let us define $$\psi_{\theta,\eta,t}(u):= \phi_{\theta,\eta,t}(u+(\theta-\zeta_t(\theta,\eta))^\top h_{\theta}(u)),$$ then  $\xi_t(u;\theta,\eta,m) = m\circ\psi_{\theta,\eta,t}(u)$ Observe that
\begin{align}  \label{smooth:eq:J_diff}
\begin{split}
 J^2(\xi_t(\cdot; \theta,\eta,m) )={}& \int_D \big|\xi_t^{\prime\prime}(u;\theta,\eta,m)\big|^2du\\
={}&\int_D \big[m^{\prime\prime} \circ\psi_{\theta,\eta,t}(u) \psi_{\theta,\eta,t}^{\prime}(u) ^2+m^\prime \circ\psi_{\theta,\eta,t}(u) \psi_{\theta,\eta,t}^{\prime\prime}(u)\big]^2 du\\
 =& \int_D \big[m^{\prime\prime}(u) (\psi_{\theta,\eta,t}^{\prime}  \circ \psi_{\theta,\eta,t}^{-1}(u)) ^2+m^\prime (u) \psi_{\theta,\eta,t}^{\prime\prime} \circ \psi_{\theta,\eta,t}^{-1}(u)\big]^2 \frac{du}{\psi_{\theta,\eta,t}^{\prime} \circ \psi_{\theta,\eta,t}^{-1}(u) }
 \end{split}
\end{align}
where $\psi_{\theta,\eta,t}^{\prime}(u)= \frac{\partial}{\partial u} \psi_{\theta,\eta,t}(u)$.  Thus, we have that $J^2(\xi_t(\cdot; \theta,\eta,m) )=O(1)$ whenever $J(m) =O(1)$, $\|m\|_\infty =O(1)$, and $t$ in a small neighborhood of $0$ (as $\psi_{\theta,\eta,t}(\cdot)$ is a strictly increasing function when $t$ is small). Next we evaluate  $ \partial \xi_t(\zeta_t(\theta,\eta)^\top x;\theta,\eta,m)/\partial t $ to help with the calculation of the score function for the submodel $\{\zeta_t(\theta, \eta), \xi_t(\cdot; \theta,\eta,m)\}.$ Note that
\begin{align*}
& \frac{\partial}{\partial t}   \xi_t(\zeta_t(\theta,\eta)^\top x;\theta,\eta,m)\\
 ={}&  \frac{\partial}{\partial t} m\circ \phi_{\theta,\eta,t}\left(\zeta_t(\theta,\eta)^\top x+ (\theta- \zeta_t(\theta,\eta))^\top  h_{\theta}(\zeta_t(\theta,\eta)^\top x)\right)
 \\
 ={}& m^{\prime}\circ \phi_{\theta,\eta,t}\big(\zeta_t(\theta,\eta)^\top x+(\theta-\zeta_t(\theta,\eta))^\top    h_{\theta}(\zeta_t(\theta,\eta)^\top x)\big)\\
 &\quad \bigg[ \dot{ \phi}_{\theta,\eta,t}\big[\zeta_t(\theta,\eta)^\top x+[\theta- \zeta_t(\theta,\eta)]^\top   h_{\theta}(\zeta_t(\theta,\eta)^\top x)\big] \\ &\qquad +\phi^\prime_{\theta,\eta,t}\big[\zeta_t(\theta,\eta)^\top x+(\theta- \zeta_t(\theta,\eta))^\top    h_{\theta}(\zeta_t(\theta,\eta)^\top x)\big]  \frac{\partial \zeta_t(\theta,\eta)}{\partial t}^\top \Big[x \\
& \hspace{6cm}+(\theta- \zeta_t(\theta,\eta))^\top   h_{\theta}^\prime (\zeta_t(\theta,\eta)^\top x) x - h_{\theta}(\zeta_t(\theta,\eta)^\top x) \Big]\bigg],
\end{align*}
 where $\dot{ \phi}_{t,\theta} (u) =\partial\phi_{\theta,\eta,t} (u)/{\partial t}.$ We will now show that the score function of the submodel $\{ t,\xi_t(\cdot; ,\theta, \eta, m)\}$ is $\tilde{\ell}_{\theta,m}(y,x).$  Using the facts that $\phi_{\theta,\eta,t}^\prime(u)=1$ and $\dot{\phi}_{\theta,\eta,t}(u)=0$ for all $u \in D$ (follows from the definition \eqref{smooth:eq:BdryCond}) and ${\partial \zeta_t(\theta,\eta)}/{\partial t}= (-2 t /\sqrt{1-t^2|\eta|^2})\,\theta +H_\theta\eta$, we get
\begin{align}
\begin{split} \label{smooth:eq:eval_score}
\left.  \frac{\partial}{\partial t} (y-\xi_t(\zeta_t(\theta,\eta)^\top x;\theta,\eta,m))^2 \right|_{t=0}={}&-2(y-\xi_t(\zeta_t(\theta,\eta)^\top x;\theta,\eta,m)) \left.\frac{\partial   \xi_t(\zeta_t(\theta,\eta)^\top x;\theta,\eta,m)}{\partial t}\right|_{t=0}\\
={}&-2(y-m(\theta^\top x)) m^{\prime}(\theta ^\top x)\eta^\top H_\theta^\top (x -h_{\theta}(\theta ^\top x))
\end{split}
\end{align}

Observe that  $(\hat{m},\hat{\theta})$ minimizes the penalized loss function  in \eqref{smooth:simpls} and $\xi_0( \zeta_0(\hat{\theta},\eta)^\top x;\hat{\theta},\eta,\hat m)=\hat{m}(\hat{\theta}^\top x),$ where  $\zeta_t(\hat{\theta},\eta)=\sqrt{1-t^2|\eta|^2}\, \hat\theta + s H_{\hat{\theta}} \eta$.  Hence, for every  $\eta \in \R^{d-1}$, the function
\begin{align}\label{smooth:eq:path_func}
 t \erelbar{21} \frac{1}{n}\sum_{i=1}^n \big(y_i-\xi_t(\zeta_t(\hat\theta,\eta)^\top x;\hat{\theta},\eta,\hat{m})\big)^2+ \hat{\lambda}_n^2\int_D \Big| \frac{\partial^2}{\partial u^2}\xi_t(u;\hat{\theta},\eta,\hat{m})\Big|^2du
 \end{align}
on a  some small neighborhood of $0$ (that depends on $\eta$) is minimized at $t=0$.
Moreover, using some tedious algebra it can be shown that $ J^2(\xi_t(\cdot; \theta,\eta,m) )$ is differentiable and
\begin{align*}
\begin{split}
\left.\frac{\partial }{\partial t}J^2(\xi_t(\cdot;\theta,\eta,m)) \right\vert_{t =0} \lesssim \int_D |m''(p)|^2dp.
\end{split}
\end{align*}
This we have that the function in \eqref{smooth:eq:path_func} is differentiable at $t=0.$  Conclude that, for all $\eta \in \R^{d-1}$ we have  \[
\eta^\top \p_n \tilde{\ell}_{\hat{\theta}, \hat{m}} -\hat{\lambda}_n^2 \left.\frac{\partial J^2(\xi_t(\cdot;{\theta},\eta, m))}{\partial t} \right\vert_{t =\hat{\theta}}=0.
 \]
The proof of the theorem is now complete as $\hat{\lambda}_n^2= o_p(n^{-1/2}).$
\end{appendix}
\noeqref{smooth:eq:L_n}
\noeqref{smooth:eq:eta_hat}
\noeqref{smooth:eq:g_beta0_U_envelope}
\noeqref{smooth:eq:second_part_main}
\noeqref{smooth:eq:insetoutset}
\noeqref{smooth:eq:nonp_path}

\bibliographystyle{apalike}
\bibliography{SigNoise}
\newpage
\begin{center}
\Large\textbf{Supplementary Material}
\end{center}
\setcounter{section}{0}
\setcounter{thm}{7}
\setcounter{lemma}{9}
\renewcommand{\thesection}{S.\arabic{section}}
\section{Proof of Theorem \ref{smooth:thm:existance}}\label{smooth:app:thm:existance_proof}
The minimization problem considered is
\[
\inf_{\theta\in\Theta,m\in\mathcal{S}}\mathcal{L}_n(m,\theta;\lambda),
\]
where $\mathcal{L}_n$ is defined in \eqref{smooth:eq:L_n}.  For any fixed vector $\theta \in \Theta$, define $t_i^{\theta}:=\theta^\top x_i$, for $i=1,\ldots,n.$
Then we have
\[
\mathcal{L}_n(m,\theta; \lambda)= \left[\frac{1}{n}\sum_{i=1}^n \big(y_i-m(t_i^{\theta})\big)^2 + \lambda^2\int_D \big| m''(t)\big|^2dt\right]
\]
and the minimization can be equivalently written as
$\inf_{\theta\in\Theta}\inf_{m\in\mathcal{S}} \mathcal{L}_n(m,\theta; \lambda).$
Let us define
\begin{equation}\label{smooth:eq:T}
T(\theta) := \inf_{m\in\mathcal{S}}\mathcal{L}_n(m,\theta; \lambda) \quad  \text{ and } \quad  m_{\theta}:= \argmin_{m\in\mathcal{S}}\mathcal{L}_n(m,\theta; \lambda).
\end{equation}
Theorem 2.4 of \cite{greensilverman94} proves that  the  infimum in  \eqref{smooth:eq:T} is attained for every $\theta \in \Theta$ and   the unique minimizer  $m_\theta$ is a natural cubic spline with knots at $\{t_i^{\theta}\}_{i=1}^n.$  Furthermore \cite{greensilverman94} notes that (see Section 2.3.4), $m_\theta$ does not depend on $D$ beyond the condition that $\{t_i^\theta\}_{1\le i\le n} \in D$. Moreover, $m''_\theta$ is zero outside $(t^\theta_{(1)},t^\theta_{(n)})$, where for  $k=1,\ldots,n$,  $t^\theta_{(k)}$ denotes the $k$-th smallest value in $\{t_i^\theta\}_{i=1}^n$.

For every $\theta \in  \Theta$, $m_\theta$ is determined by points in a bounded set, namely  $D_R:= [- t_{\max}, t_{\max}],$ where $t_{\max}$ a finite constant such that $\sup_{\theta\in \Theta} \max_{i\le n} |\theta^\top x_i|  \le t_{\max}$. Note that such a constant always exists as $\Theta\subset S^{d-1}$.  Define \[\Ss_R:= \{ m: D_R \rightarrow \R| \, m' \text{ is absolutely continuous}\},\] and for all $m \in \Ss_R$, define $J^2_R(m) :=\int_{D_R} |m''(t)|^2dt.$ For every $m\in \Ss_R$ and $\theta\in \Theta$, we define
\be\label{smooth:eq:L_R}
\mathcal{L}^R_n(m,\theta; \lambda)= \left[\frac{1}{n}\sum_{i=1}^n \big(y_i-m(t_i^{\theta})\big)^2 + \lambda^2\int_{D_R} \big| m''(t)\big|^2dt\right],
\ee
\be \label{smooth:eq:T_R}
T_R(\theta) := \inf_{m\in\mathcal{S}_R} \mathcal{L}^R_n(m,\theta; \lambda), \quad \text{ and }\quad  m_\theta^R:=\argmin_{m\in\Ss_R} \mathcal{L}^R_n(m,\theta; \lambda).
\ee
\cite{greensilverman94} observes that (see Section 2.3.4), $m_\theta$ is the  linear extrapolation of $m_\theta^R$ to $D$. Moreover, as $m_\theta$ is a linear function outside $D_R$, we have
\[\int_{D_R} \left\vert (m_\theta^R)''(t)\right\vert^2dt=\int_{D} \big| m''_\theta (t)\big|^2dt \quad \text{and} \quad T_R(\theta)=T(\theta).\]
Thus we have
\[
\inf_{\theta\in\Theta,m\in\mathcal{S}}\mathcal{L}_n(m,\theta;\lambda)=\inf_{\theta\in\Theta} T(\theta)=\inf_{\theta\in\Theta} T_R(\theta)=\inf_{\theta\in\Theta,m\in\mathcal{S}}\mathcal{L}^R_n(m,\theta;\lambda).\]

As $\Theta$ is a compact set, the existence of the minimizer of $\theta\mapsto T_R(\theta)$ will be established if we can show that $T_R(\theta)$ is a continuous function on $\Theta$; see the Weierstrass extreme value theorem. We now prove that $\theta \mapsto T_R(\theta)$ is a continuous function. Notice that $\sup_{\theta\in\Theta}T_R(\theta) \le \sup_{\theta\in\Theta}\mathcal{L}^R_n(0,\theta;\lambda)=\sum_{i=1}^n y_i^2/n <\infty.$
Hence there is a finite constant $K$ (depending only on $\{y_i\}_{i=1}^n$) such that for all $\theta\in\Theta$,
\begin{align} \label{smooth:eq:th1_2}
Q_n(m^R_\theta,\theta) +\lambda^2 J_R^2(m^R_\theta)&\le K.
\end{align}
We will use the above bound to show that there exists a finite  $L$ (depending only on $\lambda$ and $\{(y_i, x_i)\}_{i=1}^n$)  such that $\|m^R_\theta\|_\infty \le L$ and $J_R(m^R_\theta)\le L$ for all $\theta \in \Theta.$
By \eqref{smooth:eq:th1_2}, we have that
\be \label{smooth:eq:th1_bnd} J_R^2(m^R_\theta) \le K/\lambda^2 \quad \text{ and } \quad |m^R_\theta(t^\theta_{(i)})|\le \sqrt{n K} +\max_{i\le n} |y_i|,
\ee
 for $i=1,\ldots, n.$
If $t^\theta_{(1)}=t^\theta_{(n)},$ then it is easy to see that $m^R_\theta(\cdot)\equiv \sum_{i=1}^n y_i/n$  which implies that  $\|m^R_\theta\|_\infty$ is bounded and $J_R(m^R_\theta)=0$. Now let us assume $t^\theta_{(1)}<t^\theta_{(n)}$.
 By Lemma \ref{smooth:lem:dbounds}, for any  $s\in \R$ such that $|s| \le t_{\max}$, we have
\[
\left\vert (m^R_\theta)' (s)- (m^R_\theta)'(t^\theta_{(1)})\right\vert \le J_R(m_\theta^R) \sqrt{t_{\max}}.
\]
Integrating  the above display with respect to $s$, we get
\be \label{smooth:eq:mb}
\left\vert m^R_\theta(s) - m^R_\theta(t^\theta_{(1)}) - ( m^R_\theta)'(t^\theta_{(1)})(s - t^\theta_{(1)})\right\vert\le  J_R(m^R_\theta)(t_{\max})^{3/2}.
\ee
Taking $s=t^\theta_{(n)}$ in the previous display, we have  $|( m^R_\theta)'(t^\theta_{(1)}) | \le C$,  where the constant $C$ depends only on $ K, \lambda,$ and  $\{(x_i, y_i)\}_{i=1}^n$ (see \eqref{smooth:eq:th1_bnd}). In view of  the bound on $|( m^R_\theta)'(t^\theta_{(1)}) |$, \eqref{smooth:eq:mb} implies that
$$ \sup_{|s|\le t_{\max}} |m^R_\theta(s)| \le C_1,$$
 where the constant $C_1$ depends only on $ K, \lambda,$ and $\{(y_i, x_i)\}_{i=1}^n$. Thus, there exists  a finite $L$ (depending only on $\lambda $ and $ \{(y_i, x_i)\}_{i=1}^n$)  such that $\|m^R_\theta\|_\infty \le L$ and $J_R(m^R_\theta)\le L$. Note that $L$ does not depend on $\theta$. As $\|m^R_\theta\|_\infty \le L$ and $J_R(m^R_\theta)\le L$, we can redefine $T_R(\theta)$ as
\[
T_R(\theta) = \inf_{m\in \{ m \in \mathcal{S}_R: \|m\|_\infty \le L \text{ and }J_R(m)\le L \}} \left[Q_n(m,\theta) + \lambda^2\int_{D_R} \big| m''(t)\big|^2dt\right].
\]
We will now show that  the class of functions \[\{ Q_n(m,\cdot) : \Theta \rightarrow \R|\, m\in\Ss_R, \|m\|_\infty \le L, \text{ and } J_R(m)\le L\}\] is uniformly equicontinuous, i.e., for every $\varepsilon>0$, there exists a $\delta>0$ such that  $|\theta-\eta| \le \delta$ implies that $$ \sup_{m \in \{m \in \Ss_R: \|m\|_\infty \le L \text{ and }J_R(m)\le L\} }|Q_n(m,\theta)-Q_n(m,\eta)|\le \varepsilon.$$Note that
\begin{align} \label{smooth:eq:th1_3}
\begin{split}
&|Q_n(m,\theta) -Q_n(m,\eta)|\\
={}& \frac{1}{n}\left\vert \sum_{i=1}^n \big[ (y_i -m(\theta^\top x_i))^2-(y_i -m(\eta^\top x_i))^2\big]\right\vert\\
={}&\frac{1}{n}\left\vert \sum_{i=1}^n \big[ (m(\eta^\top x_i) -m(\theta^\top x_i))^2 +  2 (y_i -m(\eta^\top x_i))(m(\eta^\top x_i) -m(\theta^\top x_i))\big]\right\vert\\
\le {}& \max_{1\le i\le n} |m(\eta^\top x_i) -m(\theta^\top x_i)|^2 +   \frac{2}{n} \max_{1\le i\le n} |m(\eta^\top x_i) -m(\theta^\top x_i)| \sum_{i=1}^n |y_i -m(\eta^\top x_i)|.
\end{split}
\end{align}
 In view of Lemma \ref{smooth:lem:dbounds},  for $i=1, \ldots, n$ we have
\begin{equation} \label{smooth:eq:th1_1}
|m(\theta^{\top}x_i) - m(\eta^{\top}x_i)|\le\|m'\|_{\infty}|x_i^{\top}(\theta-\eta)|\le C_2 (1+J_R(m))|\theta - \eta|,
\end{equation} where $C_2$ is a constant that depends only on $L$ and $\max_{1\le i \le n} |x_i|$. For every $m \in \{m \in \Ss_R: \|m\|_\infty \le L \text{ and }J_R(m)\le L\}$, \eqref{smooth:eq:th1_3} and \eqref{smooth:eq:th1_1}  imply that \be \label{smooth:eq:th1_10}
\sup_{m \in \{m \in \Ss_R: \|m\|_\infty \le L \text{ and }J_R(m)\le L\} }|Q_n(m,\theta) -Q_n(m,\eta)| \le C_3 |\theta-\eta|,
\ee
 where the constant $C_3$ depends only on $L$ and $\max_{1\le i \le n} |x_i|$.  Observe that  for every $\theta \in \Theta$, $m_\theta^R \in \{m \in \Ss_R: \|m\|_\infty \le L \text{ and }J_R(m)\le L\}$. Fix  $\delta=\varepsilon/C_3$, then uniform equicontinuity of $\{ \theta\mapsto Q_n(m,\theta): m\in\Ss_R, \|m\|_\infty \le L, \text{ and } J_R(m)\le L\}$ implies that,  for all $|\eta-\theta|\le \delta,$ we have
\begin{equation}\label{smooth:eq:th1_be}
Q_n(m^R_\eta, \theta) -\varepsilon \le  Q_n(m^R_\eta,\eta) \quad \text{ and }\quad Q_n(m^R_\theta, \eta)  \le Q_n(m^R_\theta,\theta) +\varepsilon.
\end{equation}
 Recall that for every $\beta \in \Theta$ and  $m \in \{ m\in \Ss_R: J_R(m)<\infty\}$, we have $\mathcal{L}^R_n(m^R_\beta,\beta;\lambda) \le \mathcal{L}^R_n(m,\beta;\lambda).$ Thus, from \eqref{smooth:eq:th1_be}, we have
\begin{align}\label{smooth:eq:th1_4}
\begin{split}
Q_n(m^R_\eta, \theta) -\varepsilon \le Q_n(m^R_\eta,\eta)\;\Leftrightarrow \; \mathcal{L}^R_n(m^R_{\eta},\theta;\lambda) -\varepsilon\le\mathcal{L}^R_n(m^R_{\eta},\eta;\lambda) \\\Rightarrow \;\mathcal{L}^R_n(m^R_{\theta},\theta;\lambda) -\varepsilon\le\mathcal{L}^R_n(m^R_{\eta},\eta;\lambda)\; \Rightarrow \;T_R(\theta)-\varepsilon\le T_R(\eta)
\end{split}
\end{align}
and
\begin{align}
\begin{split}\label{smooth:eq:th1_5}
Q_n(m^R_\theta, \eta)  \le Q_n(m^R_\theta,\theta) +\varepsilon\;\Leftrightarrow\;
\mathcal{L}^R_n(m^R_{\theta},\eta;\lambda)\le\mathcal{L}^R_n(m^R_{\theta},\theta;\lambda) + \varepsilon\\\Rightarrow \;\mathcal{L}^R_n(m^R_{\eta},\eta;\lambda)\le\mathcal{L}^R_n(m^R_{\theta},\theta;\lambda)+\varepsilon \;\Rightarrow \;T_R(\eta)\le T_R(\theta)+\varepsilon.
\end{split}
\end{align}

  Combining \eqref{smooth:eq:th1_4} and \eqref{smooth:eq:th1_5}, we have that $T_R(\theta)-\varepsilon \le T_R(\eta)\le T_R(\theta)+\varepsilon, $ for all $|\eta-\theta|\le \delta.$ Thus, it follows that  $\theta\mapsto T_R(\theta)$ is uniformly continuous and $T_R(\theta)$ attains a minimum  on the compact set $\Theta$ ($S^{d-1}$ is compact and $\Theta$ is closed subset of $S^{d-1}$).  Thus  $$\hat{\theta}=\argmin_{\theta \in \Theta} T_R(\theta)=\argmin_{\theta \in \Theta} T(\theta)$$ is well defined. Moreover by Theorem 2.4 of \cite{greensilverman94} we have that $ m^R_{\hat{\theta}}$ is a unique natural cubic spline with knots at $\{t_i^{\hat{\theta}}\}_{i=1}^n$ and
  $$\hat{m} =m_{\hat{\theta}},$$ where $m_{\hat{\theta}}$  is the linear extrapolation of $m_{\hat\theta}^R$ to $D$.

\section{Proofs of results in Section~\ref{smooth:sec:Assymplse}}
\subsection{Proof of Lemma  \ref{smooth:lem:entropy1}} \label{smooth:app:lem:entrop1_proof}

To prove this lemma, we use the following entropy bound from \cite{VANG}.
We will also use the following result in the proofs of Lemmas \ref{smooth:lem:first_term} and \ref{smooth:lem:second_term} in Sections \ref{smooth:app:proof_first_term} and \ref{smooth:app:proof_second_term}, respectively.
\begin{lemma}\label{smooth:lem:entropy}(Theorem 2.4, \cite{VANG})
Let $\mathcal{F}$ be a class of functions $f:I\to\mathbb{R}$ (for $I$ a compact interval in $\mathbb{R}$) such that for some $M_1,M_2<\infty$, $\|f\|_{\infty}\le M_1,$ the first $k-1$ derivatives are absolutely continuous and $\int_I [f^{(k)} (x)]^2 dx  \le M_2^2$. Then there exists a constant $C$ depending only on $I$ such that,
\[
\log N_{[\;]}(\varepsilon,\mathcal{F},\|\cdot\|_{\infty})\le C\left(\frac{M_1+M_2}{\varepsilon}\right)^{1/k},\quad \mbox{for all }\varepsilon > 0.
\]
\end{lemma}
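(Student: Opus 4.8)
This statement is quoted verbatim from \cite{VANG} as a known bound on the bracketing entropy of Sobolev-type classes, so a self-contained argument is not strictly needed; nonetheless, here is the route I would take. First I would reduce to $I=[0,1]$ by an affine change of variables, at the cost of constants depending on the length of $I$ (the $k$-th derivative picks up the $k$-th power of the length under rescaling, which is exactly why the final constant is allowed to depend on $I$). Then I would split each $f$ using the Taylor representation with integral remainder about a fixed point $x_0$,
\[
f(x)=\sum_{j=0}^{k-1}\frac{f^{(j)}(x_0)}{j!}(x-x_0)^j+\frac{1}{(k-1)!}\int_{x_0}^x (x-t)^{k-1} f^{(k)}(t)\,dt=:p_f(x)+r_f(x),
\]
so that $\mathcal{F}$ is contained in the sum of the polynomial family $\{p_f\}$ and the remainder family $\mathcal{R}:=\{r_f\}$. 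Since $\|f\|_\infty\le M_1$ and $I$ is bounded, the coefficient vectors of the degree-$(k-1)$ polynomials $p_f$ lie in a bounded region of $\mathbb{R}^k$, so this finite-dimensional family has sup-norm covering number at most $(c(M_1+M_2)/\varepsilon)^k$; its log-entropy is $O\!\big(k\log((M_1+M_2)/\varepsilon)\big)$, of smaller order than $\varepsilon^{-1/k}$, and can be absorbed.

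The real work is bounding the sup-norm entropy of $\mathcal{R}$, the image of the $L^2$-ball of radius $M_2$ under $k$-fold integration. I would do this with a Birman--Solomjak-type piecewise-polynomial (equivalently, wavelet) approximation: partition $[0,1]$ dyadically, replace $r_f$ on each cell by a polynomial of degree $<k$, and bound the local sup-norm error by the local $L^2$-norm of $f^{(k)}$ times a power of the cell length, using Cauchy--Schwarz and a one-dimensional Sobolev inequality on the cell; summing the squared local errors and using $\sum_{\text{cells}}\|f^{(k)}\|_{L^2(\text{cell})}^2=\|f^{(k)}\|_{L^2}^2\le M_2^2$ then controls the global error. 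A uniform mesh only gives the suboptimal exponent $2/(2k-1)$; to reach the stated $\varepsilon^{-1/k}$ one must keep, at each scale, only the cells on which the local $L^2$-mass of $f^{(k)}$ is large — an adaptive (greedy) truncation of the wavelet expansion of $f^{(k)}$ — then quantize the finitely many retained polynomial coefficients and optimize the number of retained cells, which turns out to be of order $\varepsilon^{-1/k}$. Finally, because the approximants are piecewise polynomials and the approximation error is controlled uniformly, pairing each approximant with its error band yields genuine $\|\cdot\|_\infty$-brackets, so the same rate holds for bracketing numbers.

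The main obstacle is this last sharpening step: the naive uniform partition loses a factor and produces the wrong exponent, so one genuinely needs the adaptive choice of cells (equivalently, the sharp decay of the entropy numbers of the embedding $W^{k,2}(I)\hookrightarrow C(I)$) together with a little care in upgrading the $L^\infty$-covering bound to a bracketing bound. All of this is carried out in \cite{VANG}, which is what we invoke here.
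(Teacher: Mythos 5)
The paper does not prove this lemma: it is invoked verbatim as Theorem~2.4 of \cite{VANG}, so there is no in-paper argument against which to compare your sketch. That said, your outline is a faithful account of how the result is actually established. The decomposition into a finite-dimensional polynomial part plus a $k$-fold integral remainder, the observation that the polynomial part contributes only $O(k\log((M_1+M_2)/\varepsilon))$ to the log-entropy, and the reduction of the remainder class to a Birman--Solomjak-type piecewise-polynomial approximation are precisely the ingredients behind van de Geer's bound. You also correctly identify the one genuinely delicate point: a uniform dyadic mesh with local Taylor approximation and Cauchy--Schwarz gives local error $\lesssim h^{k-1/2}\|f^{(k)}\|_{L^2(\mathrm{cell})}$, hence log-entropy of order $\varepsilon^{-2/(2k-1)}\log(1/\varepsilon)$, which is strictly worse than $\varepsilon^{-1/k}$; the adaptive refinement of cells according to the local $L^2$-mass of $f^{(k)}$ (equivalently, nonlinear $n$-term approximation of $f^{(k)}$) is what recovers the sharp exponent, and the piecewise-polynomial approximants with uniform error bands upgrade covering numbers to bracketing numbers. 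The only small thing left implicit is that the Taylor coefficients $f^{(j)}(x_0)/j!$, $j<k$, are uniformly bounded in terms of $M_1$, $M_2$, and $|I|$; this follows from a standard one-dimensional interpolation/Landau-type inequality and is needed to make the "bounded region of $\mathbb{R}^k$" claim precise, but it is routine and does not affect the rate. So the proposal is sound and matches the route the cited reference takes; it just fills in a proof the paper itself elides.
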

The above lemma says that  the class of functions
$$\G_{M_1,M_2} := \{m \in \mathcal{S}: \; \|m\|_{\infty}\le M_1, \mbox{ and }J(m)\le M_2\}$$ can be covered by $\exp(C \sqrt{M_1+M_2} \delta^{-1/2})$  balls with radius $\delta$ in the sup-norm, i.e.,
\begin{equation}\label{smooth:eq:ent_geer}
\log N_{[\;]}(\delta,\G_{M_1,M_2},\|\cdot\|_{\infty})\le C\left(\frac{M_1+M_2}{\delta}\right)^{1/2}.
\end{equation}%
For all $\theta_1,\theta_2\in\Theta$, we have that $|\theta_1 - \theta_2| \le 2$. Thus by Lemma 4.1 of \cite{Pollard90}, we have
\[
N(\varepsilon,\Theta,|\cdot|)\lesssim \varepsilon^{-d+1}.
\]
Now define the class of functions
$$\mathcal{H}_{M_1,M_{2}} := \{m(\theta^{\top}x):\;\theta\in\Theta,\;m\in\mathcal{S}, \;\|m\|_{\infty} \le M_{1},\mbox{ and }J(m) \le M_{2}\}.$$
We will show that
\be \label{smooth:eq:ent_H_m_1m_2}
\log N_{[\;]}(\varepsilon,\h_{M_1,M_2},\|\cdot\|_{\infty})\lesssim \left(\frac{M_1+M_2}{\varepsilon}\right)^{1/2}.
\ee
Note that, with respect to $\|\cdot\|_{\infty}$-norm covering number and bracketing number are the same and we can choose an $\varepsilon$-net from within the function class. Thus $\|\cdot\|_{\infty}$ brackets can be chosen from the function class.

Consider an $\varepsilon/[2(1+M_2)T]$-net of $\Theta$, $\{\theta_1,\theta_2,\ldots,\theta_p\}$, $\rchi \subset B_\textbf{0}(T) \subset \R^d,$ the Euclidean ball of radius $T$ around the origin. Choose an $\varepsilon/2$-net for $\G_{M_1,M_2}$, $\{m_1,m_2,\ldots,m_q\}$. We can, without loss of generality, assume that $m_i\in\G_{M_1,M_2}.$ Thus by Lemma \ref{smooth:lem:dbounds}, we have $\|m_i'\|_{\infty}\lesssim 1+M_2$.

Now we will show that the set of functions $\{m_i\circ\theta_j\}_{1 \le i \le q, 1\le j \le p}$ form an $\varepsilon$-net for $\h_{M_1,M_2}$ with respect to $\|\cdot\|_{\infty}$-norm. For any given $m\circ\theta\in\h_{M_1,M_2}$, we can get $m_i$ and $\theta_j$ such that $\|m-m_i\|_{\infty} < \varepsilon/2$ and $|\theta - \theta_j| < \varepsilon/2(1+M_2)T.$ Then
\begin{align}
|m(\theta^{\top}x) &- m_i(\theta_j^{\top}x)|\\&\le|m(\theta^{\top}x) - m(\theta_j^{\top}x)| + |m(\theta_j^{\top}x) - m_i(\theta_j^{\top}x)|\\
&\le \|m'\|_{\infty}|x\|\theta - \theta_j| + \|m - m_i\|_{\infty} \le \frac{(1+M_2)|x|\varepsilon}{2T(1+M_2)}+\frac{\varepsilon}{2} \le \varepsilon.
\end{align}
Hence, the bracketing entropy number in the $\|\cdot\|_{\infty}$-norm for the required set is bounded above by a multiple of $(M/\varepsilon)^{1/2} + \log(C2T(1+M_2)\varepsilon^{-d+1})$ for a suitable constant $C >0,$ which is further bounded by a multiple of $\left(M/{\varepsilon}\right)^{1/2}$, where $M = M_1+M_2$. Thus we have \eqref{smooth:eq:ent_H_m_1m_2}.

Now we will use \eqref{smooth:eq:ent_H_m_1m_2} to prove Lemma \ref{smooth:lem:entropy1}. Let us define,
$$\f_C:= \left\lbrace f(\theta^\top x) : f= \frac{m}{1+J(m_0)+J(m)}, \;\theta \in \Theta,\; m \in \Ss,\text{ and } \frac{\| m\|_\infty}{1+J(m_0) +J(m)} \le C \right\rbrace$$

Since $\f_C \subset \h_{C,1},$ we can choose $\delta/2$ brackets $[g_{1,1},g_{1,2}],\ldots,[g_{q,1},g_{q,2}]$ over $\f_C$ such that for every $f(\theta^\top x) \in \f_C$ there exists a $i$ such that $g_{i,1}(x)\le f(\theta^\top x) \le g_{i,2}(x)$. Let us now define,
$$\f^*:= \left\lbrace h: h= \frac{m_0}{1+J(m_0)+J(m)} \text{ and } m \in \Ss \right\rbrace.$$ Observe that $\f^* \subset \G_{C_1,1},$ where $C_1= \|m_0\|_\infty/ J(m_0).$ Thus we can choose $\delta/2$ brackets $[l_{1,1},l_{1,2}],\ldots,[l_{r,1},l_{r,2}]$ over $\f^*$ such that for every $h \in \f^*$ there exists a $j$ such that $l_{j,1}(\theta_0^\top x)\le h(\theta_0^\top x) \le l_{j,2}(\theta_0^\top x).$ Thus we have,
$$ g_{i,1}(x)-l_{j,2}(\theta_0^\top x)\le \frac{m(\theta^\top x)}{1+J(m_0)+J(m)} -\frac{m_0(\theta_0^\top x)}{1+J(m_0)+J(m)} \le g_{i,2}(x)-l_{j,1}(\theta_0^\top x),$$ where $i$ depends on $(m,\theta)$ and $j$ on $m$.

Brackets of the form $[ g_{i,1}(x)-l_{j,2}(\theta_0^\top x), g_{i,2}(x)-l_{j,1}(\theta_0^\top x)]$ for $i \in \{ 1,\ldots q\}$ and $j \in \{1,\ldots r\}$ cover the required space. Hence, the bracketing entropy satisfies

$$\log N \left(\delta, \mathcal{B}_C,\;\|\cdot\|_\infty\right) \le \frac{(C+1)^\frac{1}{2} + (C_1+1)^\frac{1}{2} }{\delta^\frac{1}{2}},$$where $C_1= \|m_0\|_\infty/ J(m_0).$

%
\subsection{Proof  of Theorem \ref{smooth:thm:cons}} \label{smooth:sec:proof_cons}
\noindent The following lemma is crucial to the  proof of Theorem \ref{smooth:thm:cons}.
\begin{lemma}\label{smooth:pcomp}
For every fixed $M$, the set of functions $m\in\Ss$ with $J(m) \le M$ and $\|m\|_{\infty} \le M$ is precompact relative to $\|\cdot\|^S_D$.
\end{lemma}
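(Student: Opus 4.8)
The plan is to deduce precompactness from the Arzel\`a--Ascoli theorem, applied in two stages: once to the functions themselves and once to their derivatives. Write $\G_{M,M} := \{m\in\Ss: \|m\|_\infty\le M,\ J(m)\le M\}$. Since $\rchi$ is compact and $\Theta\subset S^{d-1}$, assumption \ref{smooth:a2} gives $\sup_{t\in D}|t|\le T$, so $D$ is bounded and its closure $\bar D$ is compact. First I would record two uniform regularity estimates valid for every $m\in\G_{M,M}$. By Lemma \ref{smooth:lem:dbounds}, $\|m'\|_\infty\le C(1+J(m))\le C(1+M)$ for a constant $C$ depending only on $M$ and the diameter of $D$; hence $\{m:m\in\G_{M,M}\}$ is uniformly bounded (by $M$) and uniformly Lipschitz with constant $C(1+M)$, in particular equicontinuous. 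By Lemma \ref{smooth:lem:bounds}, $|m'(s)-m'(t)|\le J(m)|s-t|^{1/2}\le M|s-t|^{1/2}$ for all $s,t\in D$, so $\{m':m\in\G_{M,M}\}$ is uniformly bounded and uniformly equicontinuous as well. In particular each $m$ and each $m'$ extends continuously to $\bar D$, so I may regard $\G_{M,M}$ as a family of $C^1$ functions on the compact set $\bar D$.

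Next, fix an arbitrary sequence $(m_k)\subset\G_{M,M}$. Arzel\`a--Ascoli applied to $(m_k)$ on $\bar D$ yields a subsequence along which $m_k\to m$ uniformly on $\bar D$. Applying Arzel\`a--Ascoli once more, along this subsequence, to the derivatives $(m_k')$, I extract a further subsequence (not relabelled) with $m_k'\to g$ uniformly on $\bar D$ for some continuous $g$. The standard theorem on differentiation of uniform limits --- uniform convergence of the derivatives together with convergence of $(m_k)$ at a single point implies $m$ is differentiable with $m'=g$ --- then gives $m\in C^1(\bar D)$ and
\[
\|m_k-m\|^S_D=\sup_{t\in D}|m_k(t)-m(t)|+\sup_{t\in D}|m_k'(t)-g(t)|\longrightarrow 0.
\]
Thus every sequence in $\G_{M,M}$ has a $\|\cdot\|^S_D$-convergent subsequence, which is exactly the asserted precompactness.

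For completeness one can even check that $\G_{M,M}$ is closed, hence compact, for $\|\cdot\|^S_D$: the limit $g=m'$ inherits the $1/2$-H\"older bound of Lemma \ref{smooth:lem:bounds}, and since $\|m_k''\|_{L^2(D)}=J(m_k)\le M$, a subsequence of $(m_k'')$ converges weakly in $L^2(D)$ to some $h$; as $m_k'\to g$ also in $L^2(D)$, $h$ is the weak derivative of $g$, so $m'$ is absolutely continuous with $J(m)=\|h\|_{L^2}\le\liminf_k J(m_k)\le M$ by weak lower semicontinuity of the $L^2$ norm, while $\|m\|_\infty\le M$ by uniform convergence. Only precompactness is needed here, however. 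There is no serious obstacle in the argument; the one point requiring a little care is that $D$ need not itself be closed --- handled by passing to $\bar D$ and using the uniform equicontinuity of Lemma \ref{smooth:lem:bounds} (for $m'$) together with the Lipschitz bound (for $m$) to extend functions continuously --- after which the two-stage subsequence extraction is entirely routine.
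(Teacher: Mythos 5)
Your proof is correct and follows essentially the same approach as the paper: a double application of Arzelà--Ascoli (to the functions and to their derivatives, using Lemmas \ref{smooth:lem:bounds} and \ref{smooth:lem:dbounds} for equicontinuity and uniform boundedness), combined with the standard theorem identifying the derivative of the uniform limit with the uniform limit of the derivatives. The only cosmetic difference is that you extract a convergent subsequence from $(m_k)$ first and then from $(m_k')$, whereas the paper does it in the reverse order; you are also slightly more careful in invoking the uniform-limit-of-derivatives theorem explicitly and in noting the (minor) point that $D$ may need to be replaced by its closure.
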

\begin{proof}
Let us define, $\mathcal{D}_M:= \{m \in \Ss :  \|m\|_{\infty} \le M, \text{and}\, J(m) \le M \}$. By Lemma \ref{smooth:lem:bounds} the class of functions $\{ m' : m\in \mathcal{D}_M \}$ is uniformly Lipschitz of order $1/2$. Thus any sequence of functions $\{ m'_{k} : m_k \in \mathcal{D}_M\} $ is equicontinuous. By Lemma \ref{smooth:lem:dbounds}, $\{m'_k\}$  is uniformly bounded. Applying the Arzela-Ascoli theorem, we see that every sequence $\{m_{k}\}$  has a subsequence $\{m_{k_l}\}$ such that $\{m^\prime_{k_l}\}$ converges uniformly on $D$. Since  $\{m^\prime_{k_l}\}$ is uniformly bounded, we have that $\{m_{k_l}\}$ is equicontinuous. Therefore as $\|m_{k_l}\|_\infty \le M$, by Arzela-Ascoli theorem, there exists  a subsequence $\{k_{l_j}\}$ of $\{k_l\}$  such that $\{m_{k_{l_j}}\}$ converge uniformly on $D$. Since these functions converge uniformly on a compact set, by applying the dominated convergence theorem, we see that there exists a subsequence such that functions and derivatives converge. Furthermore, the derivative of the limit equals the limit of the derivative.
\end{proof}
Suppose that $\|m_k\circ\theta_k - m_0\circ\theta_0\|\to 0$, $\|m_k\|_\infty=O(1)$, and $J(m_k) = O(1)$. By Lemma \ref{smooth:pcomp}, every subsequence of $(m_k,\theta_k)$ has a further subsequence $(m_{k_l}, \theta_{k_l})$ such that $\theta_{k_l}\to\theta$ and $\|m_{k_l} - m\|^S_D\to 0$ for some $\theta$ and $m$. Then $\|m_{k_l}\circ\theta_{k_l} - m\circ\theta\| \to 0$ by continuity of the map $(m,\theta)\mapsto m\circ\theta.$ Thus $\|m\circ\theta - m_0\circ\theta_0\| = 0,$ and hence by assumption \ref{smooth:a0}, we get $\theta = \theta_0$ and $m = m_0$ on the support $D_0$. The assumption that $D_0$ is the closure of its interior implies that $m'$ and $m_0'$ agree on $D_0$. Since the convergence in Lemma \ref{smooth:pcomp} is uniform, we get that $\|m - m_0\|_{D_0}= 0$. Combining this with Theorem \ref{smooth:thm:mainc}, we get that $\hat{\theta}\overset{P}{\to}\theta_0$ and $\|\hat{m} - m_0\|^S_{D_0}\overset{P}{\to} 0.$

Let $a$ be a point in $D_0$ and $s\in D$. By Lemma \ref{smooth:lem:bounds}, we have that $|\hat{m}^\prime(s) -\hat{m}^\prime (a)| \le J(\hat{m}) |s-a|^{1/2}= O_p(1)$. Moreover, we have that $|\hat{m}^\prime(a)- m_0^\prime (a)|= o_p(1).$ Thus $\|\hat{m}^\prime\|_\infty=O_p(1).$

\subsection{Proof of Lemma \ref{smooth:lem:H_lip}} \label{smooth:sec:lem:h_lip_proof}
For every $\theta \in S^{d-1}$ and $\theta\neq \theta_0$, define
\begin{equation}\label{eq:p_d_def}
\theta_{d}:= \frac{\theta_0-\theta}{|\theta_0-\theta|} \quad \text{and}\quad \theta_p := \frac{\theta_0 -\theta\theta_0^\top \theta }{|\theta_0 -\theta\theta_0^\top \theta|}
\end{equation}
Observe that $\theta^\top \theta_p =0$ and $\theta_p\in \Span\{ \theta_0,\theta\},$ where for $a_1,\ldots, a_k \in \R^d$, $\Span\{a_1,\ldots,a_k\}$ denotes the linear span of $a_1,\ldots, a_k$. Consider the following symmetric matrices in $\R^{d\times d}$:
\begin{equation}\label{smooth:eq:T_def}
T^{d}_\theta :=\mathbb{I}_d- 2\theta_{d}\theta_{d}^\top \quad \text {and} \quad  T_\theta^{p}: =\mathbb{I}_d- 2\theta_p\theta_p^\top.
\end{equation}
Note that for every $x\in \R^d$,  $x \mapsto T^{d}_\theta x$ and $x \mapsto T_\theta^{p} x$ define the reflections about the hyperplanes through $0$ which are orthogonal  to $\theta_{d}$ and $\theta_p$, respectively. More generally, for any $a\in S^{d-1}$, $T_a:=\mathbb{I}_d-2aa^\top$ is known as the Householder transformation or elementary reflector matrix;  see Page 324 of \cite{Meyer01}. It is easy to see that $T_a$ is an orthogonal matrix for every $a\in S^{d-1}$ and det$(T_a)=-1$. 
As $|\theta_0|=|\theta|=1$, we have \[1=\theta_{d}^\top\theta_{d}=\frac{1}{|\theta_0-\theta|^2}(\theta_0-\theta)^\top (\theta_0-\theta)=\frac{1}{|\theta_0-\theta|^2} \big[2\theta_0^\top\theta_0 -2 \theta^\top \theta_0 \big]= \frac{2}{|\theta_0-\theta|} \theta_{d}^\top \theta_0. \]
Thus \[ T^d_\theta \theta_0 =\theta_0 - 2\theta_{d}\theta_{d}^\top \theta_0=\theta_0 -  \theta_{d} |\theta_0-\theta|   =\theta\] and as $\theta_p^\top \theta=0,$ we have  $T_\theta^p \theta =\theta.$ Now, let $\{e_{1},\ldots,e_{d}\}$ be an orthonormal basis of $\R^d$ such that $e_{1}=\theta_0$. Define
\begin{equation}\label{eq:H_def}
 H_{\theta_0} := [ e_2,\ldots,e_d] \text{ and } H_\theta:= T^{p}_\theta T^{d}_\theta H_{\theta_0}, \; \forall \theta\neq\theta_0.
\end{equation}
 As $T_\theta^p T_\theta^d$ is an orthogonal matrix, it is easy to see that $H_{\theta_0}$ and $H_\theta$ satisfy  conditions (a) and (b).  Now we will prove that $\|H_\theta-H_{\theta_0}\|_2 \le |\theta_0-\theta|$. Observe that
\begin{align*}
\|H_\theta-H_{\theta_0}\|_2 &=\sup_{\eta \in S^{d-2}} |H_\theta\eta-H_{\theta_0} \eta|\\
&=\sup_{\eta \in S^{d-2}} |T^{p}_\theta T^{d}_\theta H_{\theta_0}\eta-H_{\theta_0} \eta|\\
&=\sup_{x^\top \theta_0=0, \, x \in S^{d-1}} |T^{p}_\theta T^{d}_\theta x-x|\\
&\le\sup_{ x \in S^{d-1}} |T^{p}_\theta T^{d}_\theta x- \mathbb{I}_d x|= \|T^{p}_\theta T^{d}_\theta -\mathbb{I}_d\|_2.
\end{align*}
We will now show that $\|T^{p}_\theta T^{d}_\theta -\mathbb{I}_d\|_2 =|\theta_0-\theta|.$ The following argument shows that $T^{p}_\theta T_\theta^{d}$ is essentially a rotation operator on $\Span\{\theta,\theta_0\}$ that fixes $\Span\{\theta,\theta_0\}^\perp.$
Fix $\theta\in \Theta$.  Observe that for any orthogonal matrix $Q$, we have
\begin{align}\label{smooth:eq:H_0}
 \|T^{p}_\theta T^{d}_\theta -\mathbb{I}_d\|_2=  \|Q^\top (T^{p}_\theta T^{d}_\theta -\mathbb{I}_d)Q\|_2=\|Q^\top T^{p}_\theta T^{d}_\theta Q -\mathbb{I}_d\|_2.
\end{align}
We will try to compute the right hand side of the above display by using  a convenient choice of $Q$. Consider any orthogonal matrix $Q$ such that $\theta$ and $\theta_p$ are the first two columns of $Q$. Such a $Q$ exists as  $\theta\perp \theta_p$ and $|\theta|=|\theta_p|=1.$ By \eqref{smooth:eq:T_def} and the fact that $\theta_d \in \Span\{\theta, \theta_p\}$, we have
 \begin{align} \label{smooth:eq:H_1}
Q^\top T^{p}_\theta T^{d}_\theta Q&= \mathbb{I}_d-2Q^\top \big[\theta_{d}\theta_{d}^\top  +\theta_p\theta_p^\top -2\theta_{d}\theta_{d}^\top\theta_p\theta_p^\top\big] Q
=\begin{bmatrix}
A_\theta & \,\textbf{0}_{2\times (d-2)}\\
\;\textbf{0}_{(d-2)\times 2} & \mathbb{I}_{(d-2)}
\end{bmatrix},
\end{align}
where $A_\theta \in \R^{2\times 2}.$  As $Q^\top T^{p}_\theta T^{d}_\theta Q$ is an orthogonal matrix and det$(Q^\top T^{p}_\theta T^{d}_\theta Q)=1$,  $A_\theta$ is an orthogonal matrix and det$(A_\theta)=1$, i.e., $A_\theta$ is a rotation matrix for $\R^2$.
Note that by \eqref{smooth:eq:H_1}, we have
\begin{align}\label{smooth:eq:H_2}
Q^\top T^{p}_\theta T^{d}_\theta Q x -x=  A_\theta \begin{bmatrix} x_1\\ x_2\end{bmatrix}- \begin{bmatrix} x_1\\ x_2\end{bmatrix} \quad\text{where } x:=(x_1,x_2, \ldots, x_d)^\top  \in \R^d.
\end{align}
Thus
\begin{align}\label{smooth:eq:H_21}
\sup_{x\in S^{d-1}}\left|Q^\top T^{p}_\theta T^{d}_\theta Q x -x\right|=  \sup_{x\in S^{d-1}}\left| A_\theta \begin{bmatrix} x_1\\ x_2\end{bmatrix}- \begin{bmatrix} x_1\\ x_2\end{bmatrix}\right| =\sup_{y\in S^1} |A_\theta y-y|.
\end{align}

  However, as $A_\theta$ is a rotation  matrix and in two dimension rotation is completely determined by a angle of rotation, we have that
\be\label{smooth:eq:H_4}
 \sup_{y \in S^1} |A_\theta y- y|= |A_\theta z- z|
\ee
  for all $z\in S^1$; see Page 326, \cite{Meyer01}.
Let $z^0:=(z^0_1, z^0_2)^\top \in S^1$  be such that $\theta_0= z^0_1\theta +z^0_2\theta_p$. Define $x^0:=(z^0_1, z^0_2, 0, \ldots,0)^\top \in S^{d-1}$.  By~\eqref{smooth:eq:H_2}, we have
\begin{align}\label{smooth:eq:H_6}
|A_\theta z^0 -z^0|=|Q^\top T^{p}_\theta T^{d}_\theta Q x^0 -Q^\top Qx^0|= |Q^\top (\theta - \theta_0)| = |\theta_0-\theta|,
\end{align}
where the second equality is true due the following observation: as $Qx^0= z_1^0 \theta + z_2^0 \theta_p=\theta_0$ and $T^{p}_\theta T^{d}_\theta \theta_0=\theta$, we have  $T^{p}_\theta T^{d}_\theta Q x^0=\theta$. The last equality in the above display is true as $Q$ is an orthogonal matrix.  Thus  combining \eqref{smooth:eq:H_0}, \eqref{smooth:eq:H_2}, \eqref{smooth:eq:H_4}, and \eqref{smooth:eq:H_6}, we have $\|T_\theta^p T_\theta^d -\mathbb{I}_d\|_2 = |\theta_0-\theta|$.

Before proving $(d)$, we show that for $x\in \R^{d-1}, |H_\theta x|=|x|$ and for $y\in \R^d, |H_\theta^\top y| \le |y|$. Recall that  $T^{p}_\theta T^{d}_\theta$ is an orthogonal matrix. For $x\in \R^{d-1}$ observe that $|H_\theta x|= |H_{\theta_0} x|= |\sum_{i=1}^{d-1} x_i e_{i+1}|$, where $e_1,\ldots, e_d$ is defined in~\eqref{eq:H_def}. As $e_1,\ldots, e_d$ form an orthonormal set, we have that $|H_\theta x|= \sqrt{\sum_{i=1}^{d-1} x_i^2}=|x|$. Recall that  $T^{p}_\theta T^{d}_\theta$ is an orthogonal matrix. Thus to prove  $|H_\theta^\top y|\le | y|$, it is enough to show that $|H_{\theta_0}^\top y|\le |y|$. Let $y\in R^d$, then  $y= \sum_{i=1}^d(e_i^\top y) e_i.$ Observe that $H_{\theta_0}^\top y =\sum_{j=2}^d \sum_{i=1}^d e_i^\top y e_j^\top e_i.$ As $e_1,\ldots, e_d$ form an orthonormal set, we have $e_j^\top e_i=0$ for all $j\neq i$ and $e_i^\top e_i=1$. Thus $|H_{\theta_0}^\top y | = \sqrt{\sum_{j=2}^d (e_j^\top y)^2}\le \sqrt{\sum_{j=1}^d (e_j^\top y)^2}= |y|.$

 Now we verify that $\{H_\theta: \theta\in\Theta\}$ defined in~\eqref{eq:H_def} satisfies condition $(d)$ of Lemma~\ref{smooth:lem:H_lip}. Let $\eta, \beta \in \Theta\setminus \theta_0$ such that $ |\eta-\theta_0|<1/2$, $|\beta-\theta_0|<1/2.$  Note that
 \begin{align}
\|H_\eta^\top-H_\beta^\top\|_2 &= \|H_{\theta_0}^\top \big[ T_\eta^d T_\eta^p- T_\beta^d T_\beta^p\big]\|_2\\
&= \sup_{x\in S^{d-1}} \big|H_{\theta_0}^\top \big[ T_\eta^d T_\eta^p- T_\beta^d T_\beta^p\big] x\big|\\
&\le \sup_{x\in S^{d-1}} \big| \big( T_\eta^d T_\eta^p- T_\beta^d T_\beta^p\big) x \big|\\
&\le \sup_{x\in S^{d-1}} \big| \big( T_\eta^d T_\eta^p- T_\eta^d T_\beta^p\big) x\big|+  \sup_{x\in S^{d-1}} \big| \big(  T_\eta^d T_\beta^p- T_\beta^d T_\beta^p\big) x \big|\\
&= \sup_{x\in S^{d-1}} \big| T_\eta^d \big( T_\eta^p- T_\beta^p\big) x\big|+  \sup_{x\in S^{d-1}} \big| \big(  T_\eta^d - T_\beta^d \big) T_\beta^p x \big|\\
&= \sup_{x\in S^{d-1}} \big| \big( T_\eta^p- T_\beta^p\big) x\big|+  \sup_{x\in S^{d-1}} \big| \big(  T_\eta^d - T_\beta^d \big) x \big|\\
&=  \| T_\eta^p- T_\beta^p \|_2+  \|   T_\eta^d - T_\beta^d\|_2, \label{eq:H_diff_calc}
 \end{align}
here the first inequality is true as $|H_{\theta_0}^\top x| \le |x|$ for all $x\in\R^d$ and the penultimate equality is true as both $T_\eta^d$ and $T_\beta^p$ are orthogonal matrices in $\R^{d\times d}$. We will next show that
\begin{equation}\label{eq:diffs_H_lip}
  \| T_\eta^d- T_\beta^d \|_2 \le 4|\eta_d-\beta_d|\quad\text{and} \quad   \| T_\eta^p- T_\beta^p \|_2 \le 4|\eta_p-\beta_p|,
\end{equation} where $\eta_p, \eta_d,\beta_p,$ and $\beta_d$ are defined as in~\eqref{eq:p_d_def}. Observe that
\begin{align}\label{eq:t_diff}
\begin{split}
\| T_\eta^d- T_\beta^d \|_2 &= 2 \|\beta_d \beta_d^\top-\eta_d \eta_d^\top\|_2\\
&\le 2 \|\beta_d \beta_d^\top-\beta_d \eta_d^\top\|_2+2 \|\beta_d\eta_d^\top-\eta_d \eta_d^\top\|_2\\
&= 2 \|\beta_d (\beta_d^\top-\eta_d^\top)\|_2+ 2\|(\beta_d-\eta_d) \eta_d^\top\|_2\\
&= 2 \sup_{x\in S^{d-1}} |\beta_d (\beta_d^\top-\eta_d^\top) x|+2|\beta_d-\eta_d| \sup_{x\in S^{d-1}}| \eta_d^\top x|\\
&=2 \sup_{x\in S^{d-1}} |(\beta_d^\top-\eta_d^\top) x|+ 2|\beta_d-\eta_d|\\
&= 4 |\beta_d-\eta_d|.
\end{split}
\end{align}
A similar calculation will show the second equality in~\eqref{eq:diffs_H_lip}. The proof of~\eqref{eq:H_lip_gen} will be complete if we can show that
\begin{equation}\label{eq:bound_diff}
|\eta_d-\beta_d|\le 2 \frac{|\eta-\beta|}{|\eta-\theta_0| +|\beta-\theta_0|} \quad \text{and} \quad |\eta_p-\beta_p|\le \frac{16|\eta-\beta|/\sqrt{15}}{|\eta-\theta_0| +|\beta-\theta_0|}.
\end{equation}
Observe that by properties of projection onto the unit sphere (see Lemma 3.1 of~\cite{Kalajetal16}), we have
\begin{align}\label{eq:d_diff}
\begin{split}
|\eta_d-\beta_d|= \left|\frac{\eta-\theta_0}{|\eta-\theta_0|}-\frac{\beta-\theta_0}{|\beta-\theta_0|}\right| \le \frac{2|\eta-\beta|}{|\eta-\theta_0|+|\beta-\theta_0|} .
\end{split}
\end{align}
and
\begin{align}\label{eq:p_diff}
\begin{split}
|\eta_p-\beta_p|&= \left|\frac{\theta_0-\eta \theta_0^\top \eta}{|\theta_0-\eta \theta_0^\top \eta|}-\frac{\theta_0-\beta \theta_0^\top \beta}{|\theta_0-\beta \theta_0^\top \beta|}\right|\le \frac{2|\eta \theta_0^\top \eta-\beta \theta_0^\top \beta|}{|\theta_0-\eta \theta_0^\top \eta|+|\theta_0-\beta \theta_0^\top \beta|}.
\end{split}
\end{align}

We now try to simplify~\eqref{eq:p_diff}. First note that
$|\eta-\theta_0|\le 1/2$ implies that  $1+\theta_0^\top \eta \ge 15/8$. Now observe that
\begin{align}\label{eq:11}
\begin{split}
|\theta_0-\eta \theta_0^\top \eta |^2 &= 1- (\theta_0^\top \eta)^2 = (1- \theta_0^\top \eta) (1+\theta_0^\top \eta)\\
&= \frac{|\eta-\theta_0|^2}{2} (1+\theta_0^\top \eta)\ge\frac{|\eta-\theta_0|^2}{2} \inf_{\eta\in \Theta} (1+\theta_0^\top \eta) \ge \frac{15}{16}|\eta-\theta_0|^2.
\end{split}
\end{align}
For the numerator of~\eqref{eq:p_diff}, we have
\begin{align}\label{eq:p_dii_num}
\begin{split}
|\eta \theta_0^\top \eta-\beta \theta_0^\top \beta| \le |\eta \theta_0^\top \eta-\eta \theta_0^\top \beta|+|\eta \theta_0^\top \beta-\beta \theta_0^\top \beta|\le 2|\eta-\beta|.
\end{split}
\end{align}
Combining the above two displays, we have
\begin{equation}\label{eq:p_diff_final}
|\eta_p-\beta_p| \le   \frac{4|\eta-\beta|}{\sqrt{\frac{15}{16}}(|\eta-\theta_0|+|\beta-\theta_0|)} \le\frac{16|\eta-\beta|/\sqrt{15}}{|\eta-\theta_0|+|\beta-\theta_0|}  .
\end{equation}
Combining \eqref{eq:H_diff_calc},~\eqref{eq:diffs_H_lip}, and~\eqref{eq:bound_diff}, we have that
\[\|H_\eta^\top-H_\beta^\top\|_2 \le (8+64/\sqrt{15}) \frac{|\eta-\beta|}{|\eta-\theta_0|+|\beta-\theta_0|} \]

\subsection{Proof of Theorem \ref{smooth:thm:ratest}} \label{smooth:sec:proof_ratest}

We first state and  prove a lemma that we will use to prove Theorem~\ref{smooth:thm:ratest}.
\begin{lemma} \label{smooth:lem:ratest_1}
Suppose $m\in \Ss,\; J(m) < \infty,\; \text{and } \theta \in \Theta.$ Then
\begin{align}
P_X\big|m(\theta^{\top}X) - m(\theta_0^{\top}X)&- m_0'(\theta_0^{\top}X)X^{\top}(\theta - \theta_0)\big|^2\\&\lesssim |\theta_0-\theta|^{3}J^2(m) + |\theta - \theta_0|^2P_X\big|(m-m_0)'(\theta_0^{\top}X)\big|^2.
\end{align}
\end{lemma}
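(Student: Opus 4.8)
The plan is to reduce everything to a pointwise estimate in $x$ and then integrate against $P_X$. Fix $x\in\rchi$ and write $u:=\theta^\top x$, $u_0:=\theta_0^\top x$. Since $\Theta$ is connected, $\{\beta^\top x:\beta\in\Theta\}$ is an interval contained in $D$, so the segment joining $u_0$ and $u$ lies in $D$ and $m'$ is absolutely continuous along it. Applying the fundamental theorem of calculus to $m'$ I would write
\[
m(u)-m(u_0)-m_0'(u_0)(u-u_0)=\int_{u_0}^{u}\bigl(m'(t)-m'(u_0)\bigr)\,dt+\bigl(m'(u_0)-m_0'(u_0)\bigr)(u-u_0),
\]
and note that, since $u-u_0=(\theta-\theta_0)^\top x$, the left-hand side is exactly $m(\theta^\top x)-m(\theta_0^\top x)-m_0'(\theta_0^\top x)X^\top(\theta-\theta_0)$ evaluated at $X=x$.

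Next I would bound the two terms separately. For the first, Lemma \ref{smooth:lem:bounds} gives $|m'(t)-m'(u_0)|\le J(m)|t-u_0|^{1/2}$ for $t$ between $u_0$ and $u$, hence
\[
\Bigl|\int_{u_0}^{u}\bigl(m'(t)-m'(u_0)\bigr)\,dt\Bigr|\le J(m)\Bigl|\int_{u_0}^{u}|t-u_0|^{1/2}\,dt\Bigr|=\tfrac23\,J(m)\,|u-u_0|^{3/2},
\]
and by assumption \ref{smooth:a2} together with the Cauchy--Schwarz inequality, $|u-u_0|=|(\theta-\theta_0)^\top x|\le T|\theta-\theta_0|$, so this term is at most $\tfrac23 T^{3/2}J(m)|\theta-\theta_0|^{3/2}$. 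The second term is at most $T\,|(m-m_0)'(\theta_0^\top x)|\,|\theta-\theta_0|$. Using $(a+b)^2\le 2a^2+2b^2$, squaring, and integrating against $P_X$ then yields
\[
P_X\bigl|m(\theta^\top X)-m(\theta_0^\top X)-m_0'(\theta_0^\top X)X^\top(\theta-\theta_0)\bigr|^2\le \tfrac89 T^{3}J^2(m)\,|\theta-\theta_0|^{3}+2T^{2}\,|\theta-\theta_0|^{2}\,P_X\bigl|(m-m_0)'(\theta_0^\top X)\bigr|^2,
\]
which is the asserted inequality, with the implicit constant depending only on $T$.

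This is a routine computation and I do not anticipate a genuine obstacle. The only point needing a word of care is ensuring that the segment joining $\theta_0^\top x$ and $\theta^\top x$ lies inside $D$, so that $J(m)=\bigl(\int_D|m''|^2\bigr)^{1/2}$ actually controls the oscillation of $m'$ along it; this follows from the connectedness of $\Theta$ noted above, and alternatively one could extend $m$ by linear extrapolation outside $D$ (as in Theorem \ref{smooth:thm:existance}), which leaves $J(m)$ unchanged.
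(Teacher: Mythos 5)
Your proof is correct and follows essentially the same strategy as the paper: both decompose $m(\theta^\top x)-m(\theta_0^\top x)-m_0'(\theta_0^\top x)x^\top(\theta-\theta_0)$ into a piece controlled by the $1/2$-H\"older continuity of $m'$ (Lemma~\ref{smooth:lem:bounds}, giving the $J^2(m)|\theta-\theta_0|^3$ term) and a piece involving $(m-m_0)'(\theta_0^\top x)$, then square, use Cauchy--Schwarz with $|x|\le T$, and integrate. The only cosmetic difference is that you bound the first piece via the fundamental theorem of calculus and an explicit integral, whereas the paper invokes the mean value theorem to produce an intermediate point; both yield the same $J(m)|u-u_0|^{3/2}$ bound.
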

\begin{proof}
By the mean value theorem, we have
\begin{align}
m(\theta^{\top}x) - m(\theta_0^{\top}x) - m_0'(\theta_0^{\top}x)x^{\top}(\theta - \theta_0)&= m'(\xi^{\top}x)x^{\top}(\theta - \theta_0) - m_0'(\theta_0^{\top}x)x^{\top}(\theta - \theta_0)\\
&= \{m'(\xi^{\top}x) - m_0'(\theta_0^{\top}x)\}x^{\top}(\theta - \theta_0),
\end{align}
where $\xi^{\top}x$ lies between $\theta^{\top}x$ and $\theta_0^{\top}x$. Since $\rchi$ is bounded (see \ref{smooth:a2}), by an application of the Cauchy-Schwarz inequality, we have
\begin{align}
\big|m(\theta^{\top}x) - m(\theta_0^{\top}X) - m_0'(\theta_0^{\top}x)x^{\top}(\theta - \theta_0)\big|^2 &\lesssim |\theta - \theta_0|^2\big|m'(\xi^{\top}x) - m_0'(\theta_0^{\top}x)\big|^2\\
&\lesssim |\theta - \theta_0|^2\big|m'(\xi^{\top}x) - m^\prime(\theta_0^{\top}x)\big|^2\\
&\quad + |\theta - \theta_0|^2\big|m'(\theta_0^{\top}x) - m_0'(\theta_0^{\top}x)\big|^2.
\end{align}
By Lemma \ref{smooth:lem:bounds}, we have
\begin{align}
|m'(\xi^{\top}x) - m'(\theta_0^{\top}x)|\le J(m)|\xi^{\top}x - \theta_0^{\top}x|^{1/2}&\le J(m)|\theta^{\top}x - \theta_0^{\top}x|^{1/2}\\&\lesssim J(m)|\theta - \theta_0|^{1/2}.
\end{align}
Thus we have
\begin{align}
\big|m(\theta^{\top}x) - m_0(\theta_0^{\top}x) &-  m_0'(\theta_0^{\top}x)x^{\top}(\theta - \theta_0)\big|^2\\&\lesssim\big|m'(\theta_0^{\top}x) - m_0'(\theta_0^{\top}x)\big|^2|\theta - \theta_0|^2+ J^2(m)|\theta - \theta_0|^3,
\end{align}
and hence
\begin{align}
P_X \big| m(\theta^{\top}X) - m(\theta_0^{\top}X) &- m_0'(\theta^{\top}X)X^{\top}(\theta - \theta_0)\big|^2\\&\lesssim |\theta - \theta_0|^2P_X \big|(m-m_0)'(\theta_0^{\top}X)\big|^2+J^2(m)|\theta_0-\theta|^{3} . \qedhere
\end{align}
\end{proof}
The proof of Theorem \ref{smooth:thm:ratest} is a small modification of the proof of Theorem 3.3 of~\cite{VANC}. 
Let us define $A(x): =\hat{m}(\hat{\theta}^{\top}x) - m_0(\theta_0^{\top}x)$ and $B(x):=m_0'(\theta_0^{\top}x)x^{\top}(\hat{\theta} - \theta_0) + (\hat{m} - m_0)(\theta_0^{\top}x).$
Observe that $$A(x)-B(x)=\hat{m}(\hat{\theta}^{\top}x) -m_0'(\theta_0^{\top}x)x^{\top}(\hat{\theta} - \theta_0) - \hat{m}(\theta_0^{\top}x).$$ Recall that $|\hat{\theta} - \theta_0|\overset{P}{\to}0$, $P_X\big|(\hat{m}-m_0)'(\theta_0^{\top}X)\big|^2\overset{P}{\to}0$ and $J(\hat{m}) = O_p(1).$  Thus by Lemma \ref{smooth:lem:ratest_1}, we have that
$$P_X|A(X)-B(X)|^2 \lesssim |\hat{\theta} -\theta_0|^3 J^2(\hat{m}) + |\hat{\theta}-\theta_0|^2 P_X| (\hat{m}'-m_0')(\theta_0^\top X)|^2 =o_p(1)|\hat{\theta} -\theta_0|^2. $$
and
\begin{align}
P_X|A(X)|^2\ge \frac{1}{2} P_X|B(X)|^2- P_X|A(X)-B(X)|^2 \ge\frac{1}{2} P_X|B(X)|^2- o_p(1)|\hat{\theta} -\theta_0|^2.
\end{align}
However by Theorem \ref{smooth:thm:mainc}, we have that $P_X|A(X)|^2 =O_p(\hat{\lambda}_n^2).$
Thus we have $$ P_X\big|m_0'(\theta_0^{\top}X)X^{\top}(\hat{\theta} - \theta_0) + (\hat{m} - m_0)(\theta_0^{\top}X)\big|^2  \le O_p(\hat{\lambda}_n^2) +o_p(1)|\hat{\theta} - \theta_0|^2.$$
Now define
\begin{equation}\label{smooth:eq:def_g_1}
\gamma_n := \frac{\check{\theta}_n - \theta_0}{|\check{\theta}_n - \theta_0|},\quad g_1(x) := m_0'(\theta_0^{\top}x)x^{\top}(\hat{\theta} - \theta_0)  \text{ and}\quad  g_2(x) := (\hat{m} - m_0)(\theta_0^{\top}x)
\end{equation}
Note that for all $n$,
\begin{align}\label{smooth:eq:step_th6}
\begin{split}
P_Xg_1^2 &= (\hat{\theta}_n - \theta_0)^{\top}P_X[XX^{\top}|m_0'(\theta_0^{\top}X)|^2](\hat{\theta}_n - \theta_0)\\
&= |\hat{\theta}_n - \theta_0|^2\, \gamma_n^\top P_X[XX^{\top}|m_0'(\theta_0^{\top}X)|^2]\gamma_n\\
&\ge |\hat{\theta}_n - \theta_0|^2\, \gamma_n^\top\E\big[\Var(X|\theta_0^\top X) |m_0'(\theta_0^{\top}X)|^2\big]\gamma_n.
\end{split}
\end{align}
Since $\gamma_n^{\top}\theta_0$ converges in probability to zero, we get by Lemma~\ref{lem:gamma_n} and assumption~\ref{smooth:a6} that with probability converging to one,
\begin{equation}\label{eq:G1SquareBound}
\frac{P_Xg_1^2}{|\hat{\theta}_n - \theta_0|^2} \ge \frac{\lambda_{\min}(H_{\theta_0}^\top\E\big[\Var(X|\theta_0^\top X) |m_0'(\theta_0^{\top}X)|^2\big] H_{\theta_0})}{2} > 0.
\end{equation}
With \eqref{smooth:eq:step_th6} in mind, we can see that proof of this theorem will be complete if we can show that
\begin{equation}\label{smooth:eq:th6_con}
P_X g_1^2 +P_X g_2^2 \lesssim  P_X\big|m_0'(\theta_0^{\top}X)X^{\top}(\hat{\theta} - \theta_0) + (\hat{m} - m_0)(\theta_0^{\top}X)\big|^2.
\end{equation}
The following theorem gives a sufficient condition for \eqref{smooth:eq:th6_con} to hold.
\begin{lemma}\label{smooth:lem:ineq}(Lemma 5.7 of \cite{VANC})
Let $g_1$ and $g_2$ be measurable functions such that $|P_X (g_1g_2)|^2 \le cP_Xg_1^2P_Xg_2^2$ for a constant $c<1$. Then
\[
P_X(g_1+g_2)^2\ge (1-\sqrt{c})(P_Xg_1^2 + P_Xg_2^2).
\]
\end{lemma}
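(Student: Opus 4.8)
The plan is to expand the square and control the cross term using the hypothesis together with the arithmetic–geometric mean inequality. First I would write
\begin{equation*}
P_X(g_1+g_2)^2 = P_X g_1^2 + P_X g_2^2 + 2 P_X(g_1 g_2),
\end{equation*}
which is valid as soon as $P_X g_1^2$ and $P_X g_2^2$ are finite (if either is infinite the claimed inequality is trivial, so we may assume finiteness).

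Next I would bound the cross term from below. By hypothesis $|P_X(g_1 g_2)| \le \sqrt{c}\,\sqrt{P_X g_1^2}\,\sqrt{P_X g_2^2}$, and by AM–GM $\sqrt{P_X g_1^2}\,\sqrt{P_X g_2^2} \le \tfrac12\big(P_X g_1^2 + P_X g_2^2\big)$. Combining these,
\begin{equation*}
2 P_X(g_1 g_2) \ge -2\,|P_X(g_1 g_2)| \ge -\sqrt{c}\,\big(P_X g_1^2 + P_X g_2^2\big).
\end{equation*}
Substituting this into the expansion above gives
\begin{equation*}
P_X(g_1+g_2)^2 \ge \big(P_X g_1^2 + P_X g_2^2\big) - \sqrt{c}\,\big(P_X g_1^2 + P_X g_2^2\big) = (1-\sqrt{c})\big(P_X g_1^2 + P_X g_2^2\big),
\end{equation*}
which is the assertion.

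There is essentially no obstacle here: the argument is a two-step elementary computation, and the only thing to be careful about is the degenerate case where one of the second moments is not finite, which I would dispose of in one sentence at the outset. The constant $1-\sqrt{c}$ is positive precisely because $c<1$, so the bound is non-vacuous, and this is exactly the form needed to deduce \eqref{smooth:eq:th6_con} (hence \eqref{smooth:eq:step_th6} and the rate $|\hat\theta-\theta_0| = O_p(\hat\lambda_n)$) once one checks the near-orthogonality condition $|P_X(g_1 g_2)|^2 \le c\, P_X g_1^2\, P_X g_2^2$ for the specific $g_1,g_2$ in \eqref{smooth:eq:def_g_1} with some $c<1$, using the non-degeneracy of the conditional distribution of $X$ given $\theta_0^\top X$.
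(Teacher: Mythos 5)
Your proof is correct. The paper itself gives no proof of this lemma, citing it as Lemma~5.7 of \cite{VANC}, and the elementary expand-Cauchy--Schwarz-then-AM--GM argument you give is exactly the standard way to establish it; the finiteness remark is a reasonable bit of hygiene, and your closing observation about how the lemma is applied (with $g_1,g_2$ from~\eqref{smooth:eq:def_g_1}) matches the use in Section~\ref{smooth:sec:proof_ratest}.
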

The following arguments show that $g_1$ and $g_2$ (defined in \eqref{smooth:eq:def_g_1}) satisfy the condition of Lemma \ref{smooth:lem:ineq}. Observe that
\begin{align}
P_X[m_0'(\theta_0^{\top}X)g_2(X) & X^{\top}(\hat{\theta} - \theta_0)]^2\\
&= P_X \big| m_0'(\theta_0^{\top}X)g_2(X)E(X^{\top}(\hat{\theta} - \theta_0)|\theta_0^{\top}X)\big|^2\\
&\le P_X\big[\{m_0'(\theta_0^{\top}X)\}^2E^2[X^{\top}(\hat{\theta} - \theta_0)|\theta_0^{\top}X]\big]P_Xg_2^2(X)\\
&= |\hat{\theta} - \theta_0|^2 \gamma_n^\top P_X\big[|m_0'(\theta_0^{\top}X)|^2E[X|\theta_0^{\top}X] E[X^{\top}|\theta_0^{\top}X]\big] \gamma_n P_Xg_2^2(X)\\
&= c_n |\hat{\theta} - \theta_0|^2 \gamma_n^\top P_X\big[|m_0'(\theta_0^{\top}X)|^2X^{\top}X\big] \gamma_n P_Xg_2^2(X)\\
&= c_n P_X g_1^2 P_Xg_2^2(X),
\end{align}
where \[c_n:= \frac{ \gamma_n^\top P_X\big[|m_0'(\theta_0^{\top}X)|^2E[X^{\top}|\theta_0^{\top}X] E[X|\theta_0^{\top}X]\big] \gamma_n }{\gamma_n^\top P_X\big[|m_0'(\theta_0^{\top}X)|^2X^{\top}X\big] \gamma_n}.
\]
To show that with probability converging to one, $c_n < 1$, observe that
\[
1 - c_n = \frac{ \gamma_n^\top\E\big[\Var(X|\theta_0^\top X) |m_0'(\theta_0^{\top}X)|^2\big]\gamma_n}{ \gamma_n^\top\E\big[XX^{\top} |m_0'(\theta_0^{\top}X)|^2\big]\gamma_n}
\]
and by Lemma~\ref{lem:gamma_n} along with assumption~\ref{smooth:a6}, with probability converging to one,
\[
1 - c_n > \frac{ 4\lambda_{\min}\left(H_{\theta_0}^{\top}\E\big[\Var(X|\theta_0^\top X) |m_0'(\theta_0^{\top}X)|^2\big]H_{\theta_0}\right)}{\lambda_{\max}\left(H_{\theta_0}^{\top}\E\big[XX^{\top} |m_0'(\theta_0^{\top}X)|^2\big]H_{\theta_0}\right)} > 0.
\]
This implies that with probability converging to one, $c_n < 1.$

\begin{lemma}\label{lem:gamma_n}
Suppose $A\in\mathbb{R}^{d\times d}$ and let $\{\gamma_n\}$ be any sequence of random vectors in $S^{d - 1}$ satisfying $\theta_0^{\top}\gamma_n =o_p(1).$ Then 
\[
\p\bigg(0.5 \lambda_{\min}\left(H_{\theta_0}^{\top}AH_{\theta_0}\right) \le \gamma_n^{\top}A\gamma_n \le 2 \lambda_{\max}\left(H_{\theta_0}^{\top}AH_{\theta_0}\right)\bigg) \to 1,
\]
where for any symmetric matrix $B$, $\lambda_{\min}(B)$ and $\lambda_{\max}(B)$ denote, respectively, the minimum and the maximum eigenvalues of $B.$
\end{lemma}
\begin{proof}
Note that $\text{Col}(H_{\theta_0}) \oplus\{\theta_0\} =\R^d$, thus
\begin{equation}\label{eq:OrthogonalDecomposition}
\gamma_n = \left(\gamma_n^{\top}\theta_0\right)\theta_0 + H_{\theta_0}\left(H_{\theta_0}^{\top}\gamma_n\right).
\end{equation}
Therefore,
\begin{align*}
\gamma_n^{\top}A\gamma_n &= \left[\left(\gamma_n^{\top}\theta_0\right)\theta_0 + H_{\theta_0}\left(H_{\theta_0}^{\top}\gamma_n\right)\right]^{\top}A\left[\left(\gamma_n^{\top}\theta_0\right)\theta_0 + H_{\theta_0}\left(H_{\theta_0}^{\top}\gamma_n\right)\right]\\
&= \left(\gamma_n^{\top}\theta_0\right)^2\theta_0^{\top}A\theta_0 + \left(\gamma_n^{\top}\theta_0\right)\theta_0^{\top}AH_{\theta_0}\left(H_{\theta_0}^{\top}\gamma_n\right)\\ &\qquad+ \left(\gamma_n^{\top}\theta_0\right)\left(H_{\theta_0}^{\top}\gamma_n\right)^{\top}H_{\theta_0}^{\top}A\theta_0 + \left(H_{\theta_0}^{\top}\gamma_n\right)^{\top}H_{\theta_0}^{\top}AH_{\theta_0}\left(H_{\theta_0}^{\top}\gamma_n\right).
\end{align*}
Note that $H_{\theta_0}^{\top}\gamma_n$ is a bounded sequence of vectors. Because of $\gamma_n^{\top}\theta_0$ in the first three terms above, they converge to zero in probability and so,
\[
\left|\gamma_n^{\top}A\gamma_n - \left(\gamma_n^{\top}H_{\theta_0}\right)H_{\theta_0}^{\top}AH_{\theta_0}\left(H_{\theta_0}^{\top}\gamma_n\right)\right| = o_p(1).
\]
Also, note that from~\eqref{eq:OrthogonalDecomposition},
\[
|H_{\theta_0}^{\top}\gamma_n|^2 - 1 = |\gamma_n|^2 - \left(\gamma_n^{\top}\theta_0\right)^2 - 1 = - \left(\gamma_n^{\top}\theta_0\right)^2 = o_p(1).
\]
Therefore, as $n\to\infty$,
\begin{equation}\label{eq:InterimoPOneStatement}
\left|\gamma_n^{\top}A\gamma_n - \frac{\left(\gamma_n^{\top}H_{\theta_0}\right)H_{\theta_0}^{\top}AH_{\theta_0}\left(H_{\theta_0}^{\top}\gamma_n\right)}{|H_{\theta_0}^{\top}\gamma_n|^2}\right| = o_p(1).
\end{equation}
By the definition of the minimum and maximum eigenvalues,
\[
\lambda_{\min}\left(H_{\theta_0}^{\top}AH_{\theta_0}\right) \le \frac{\left(\gamma_n^{\top}H_{\theta_0}\right)H_{\theta_0}^{\top}AH_{\theta_0}\left(H_{\theta_0}^{\top}\gamma_n\right)}{|H_{\theta_0}^{\top}\gamma_n|^2} \le \lambda_{\max}\left(H_{\theta_0}^{\top}AH_{\theta_0}\right).
\]
Thus using~\eqref{eq:InterimoPOneStatement} the result follows.
\end{proof}
\section{Proofs of results in Section~\ref{smooth:sec:SemiInf}}\label{smooth:Sec3_proof}

\subsection{Proof of existence of $r$ in~\eqref{eq:D_r}} \label{rem:StrictSubset}
Without loss of generality assume that $\sup_{t\in D_{\theta_0}} t > 0$  and $\inf_{t\in D_{\theta_0}} t < 0$\footnote{Note that we can always do this by a simple translation of $\rchi$.}. Then there exist constants $\bar{k}$ and $\underline{k}$ and $r<1$ such that $\sup_{t\in D_{\theta}} t > \bar{k} >0$  and $\inf_{t\in D_{\theta}} t < \underline{k} <0$ for all $\theta \in B_{\theta_0}(r).$ We will prove that  $D_\theta \subsetneqq D$ for every $\theta \in B_{\theta_0}(r/2)$.

Fix $\bar{\theta}\in B_{\theta_0}(r/2)$. By definition of supremum there exists an $\bar{x}\in \rchi$ such that 
\[ \sup_{t\in D_{\bar{\theta}}} t \le \bar{\theta}^\top \bar{x}+ r\bar{k}/3.
\]
However, we have that $sup_{t\in D_{\bar{\theta}}} t > \bar{k}$. Thus $\bar{\theta}^\top \bar{x}  \ge (1-r/3)\bar{k} > 2 \bar{k}/3$. By Cauchy-Schwarz inequality, we have that $|\bar{x}| > 2\bar{k}/3.$
As $\bar{\theta} \in B_{\theta_0}(r/2)$, we have that \[
  \{\theta: |\theta- \bar{\theta}| =r/2\} \subset \{\theta: |\theta- {\theta_0}| \le r\}.
\] Thus
\[ \sup_{\theta:|\theta - \theta_0| \le r} (\theta - \bar{\theta})^{\top}\bar{x} \ge \sup_{\theta:|\theta - \bar{\theta}| = r/2} (\theta - \bar{\theta})^{\top}\bar{x} = r|\bar{x}|/2 > r \bar{k}/3.\] 
Thus there exists a $\theta_*$ such that $|\theta_*-\theta_0|\le r$ and $\theta_*^\top \bar{x} >\bar{\theta}^{\top}\bar{x} + r\bar{k}/3$. Therefore
\[ \sup_{t\in D} t> \bar{\theta}^{\top}\bar{x} + r\bar{k}/3 \ge \sup_{t\in D_{\bar{\theta}}} t.
\]
Similarly, we can show that   $\inf_{t\in D} t < \inf_{t\in D_{\bar{\theta}}} t$. Thus $D_{\bar\theta} \subsetneqq D.$ 
\subsection{Unbiasedness of \texorpdfstring{$\tilde{\ell}_{\hat{\theta}, \hat{m}}$}{Lg} }\label{smooth:thm:nobias_proof}
We start with some notation. Let $P_{\theta,m}^{Y|X}$ denote the conditional distribution of $Y$ given $X$, where $Y=m(\theta^\top X) +\epsilon$. For any $(\theta,m) \in \Theta\times \mathcal{S}$ and $f \in L_2(P_{\theta,m})$,  define
\begin{equation}\label{smooth:eq:E_def}
 E_{\theta,m}( f) :=\int f dP_{\theta,m},  \quad E_{\theta,m}^X( f) :=\int_\R f dP_{\theta,m}^{Y|X}, \quad \text{ and }\quad   E_X (f) :=\int f dP_X.
\end{equation}
For $f:\rchi\rightarrow\R$ we have  $P_{\theta_0,m_0}[f(X)]= P_X(f(X))$ and $$P_{\theta_0,m_0} \big[\big(Y-m_0(\theta_0^\top X)\big)^2 f(X)\big]= E_X\big[ E_{\theta_0,m_0}^X\big[f(X)\big(Y-m_0(\theta_0^\top X)\big)^2\big] \big]=E_X\big[ f(X) \sigma^2(X)\big],$$ where $\sigma^2(x) = \E(\epsilon^2| X=x)$.
For the rest of the paper, we use $ E_{\theta,m}$ and $P_{\theta,m}$ interchangeably.
\begin{thm} \label{smooth:thm:nobias}
Under assumptions \ref{smooth:a0}--\ref{smooth:a3}, \ref{smooth:b3}, and \ref{smooth:dens},
\be \label{smooth:eq:no_bias}
P_{\theta, m_0} \tilde{\ell}_{\theta,m} =0,
\ee
for all  $\theta \in \Theta$ and $m \in \{ g \in  \Ss :   J(g) <\infty\}$.
\end{thm}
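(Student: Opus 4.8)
The plan is to compute $P_{\theta,m_0}\tilde{\ell}_{\theta,m}$ directly by iterated conditioning, first on $X$ and then on the index $\theta^\top X$. Under $P_{\theta,m_0}$ we have $Y = m_0(\theta^\top X) + \epsilon$ with $\E(\epsilon\mid X) = 0$ a.e.\ (assumption \ref{smooth:a3}), so
\[
\tilde{\ell}_{\theta,m}(Y,X) = \big[(m_0 - m)(\theta^\top X) + \epsilon\big]\, m'(\theta^\top X)\, H_\theta^\top\{X - h_\theta(\theta^\top X)\},
\]
and I would split the expectation into a ``noise'' part (the $\epsilon$ term) and a ``bias'' part (the $(m_0-m)(\theta^\top X)$ term), showing that each vanishes.

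Before taking expectations I would record that everything is integrable, so that the conditioning below is legitimate: by \ref{smooth:a2} the support $\rchi$ is compact, hence $D_\theta = \{\theta^\top x : x \in \rchi\}$ is a compact interval, $|X|\le T$ and $|h_\theta(\theta^\top X)| = |\E(X\mid\theta^\top X)|\le T$; the columns of $H_\theta$ are orthonormal, so $\|H_\theta\|_2 = \|H_\theta^\top\|_2 = 1$; since $m\in\Ss$ and $J(m)<\infty$, Lemma \ref{smooth:lem:bounds} gives $\|m'\|_{\infty}\le |m'(s_0)| + J(m)\,\diameter(D_\theta)^{1/2}<\infty$ for any fixed $s_0\in D_\theta$, whence $m$ is also bounded on $D_\theta$; finally $m_0$ is bounded by \ref{smooth:a1} and $\E|\epsilon|<\infty$ by \ref{smooth:a3}. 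Thus $\tilde{\ell}_{\theta,m}\in L_1(P_{\theta,m_0})$.

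For the noise part, conditioning on $X$ and using $\E(\epsilon\mid X) = 0$ gives
\[
\E\big[\epsilon\, m'(\theta^\top X)\, H_\theta^\top\{X - h_\theta(\theta^\top X)\}\big] = \E\big[m'(\theta^\top X)\,H_\theta^\top\{X - h_\theta(\theta^\top X)\}\,\E(\epsilon\mid X)\big] = 0 .
\]
For the bias part, I would condition on $\theta^\top X$: the matrix $H_\theta^\top$ is non-random and the scalar $(m_0-m)(\theta^\top X)m'(\theta^\top X)$ is $\sigma(\theta^\top X)$-measurable, so both pull out of the inner conditional expectation, leaving
\[
\E\Big[(m_0-m)(\theta^\top X)\,m'(\theta^\top X)\,H_\theta^\top\big(\E(X\mid\theta^\top X) - h_\theta(\theta^\top X)\big)\Big] = 0,
\]
since $h_\theta(u) = \E(X\mid\theta^\top X = u)$ by the definition \eqref{smooth:eq:h_beta}, so the inner bracket is $0$ $P_X$-a.e. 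Adding the two vanishing contributions yields \eqref{smooth:eq:no_bias}.

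There is no real obstacle here; the only point requiring a little care is the integrability bookkeeping needed to justify the iterated conditional expectations, which is why I would dispatch it first, and the observation that the essential cancellation is nothing more than the defining property of $h_\theta$ combined with the fact that $(m_0-m)m'$ evaluated at $\theta^\top X$ is measurable with respect to the index alone.
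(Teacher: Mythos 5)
Your proof is correct and takes essentially the same route as the paper's: both hinge on the two conditioning steps, $\E(\epsilon\mid X)=0$ to kill the noise term and $\E\bigl(X-h_\theta(\theta^\top X)\mid\theta^\top X\bigr)=0$ (the defining property of $h_\theta$) to kill the bias term. The paper merely runs these steps as a single chain of equalities rather than an explicit noise-plus-bias split, and omits the integrability bookkeeping you spell out; the substance is identical.
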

\begin{proof}
Note that by definition \eqref{smooth:eq:E_def}, we have $ E_{\theta,m_0}^X\big[ Y-m(\theta ^\top X)\big]=m_0(\theta^\top X)-m(\theta^\top X).$ Thus \begin{align}
P_{\theta, m_0} \tilde{\ell}_{\theta,m}={}& E_{\theta,m_0}[(Y-m(\theta ^\top X)) m^\prime(\theta ^\top X) K_1(X;\theta)]\\
={}&E_X \big[E_{\theta,m_0}^X[(Y-m(\theta ^\top X)) m^\prime(\theta ^\top X) K_1(X;\theta)]\big]\\
={}&E_{\theta,m_0}[(m_0\,m^\prime-m\,m^\prime)(\theta ^\top X)K_1(X;\theta) ]\\
={}& E_{\theta,m_0}\big[\E\big((m_0\,m^\prime-m\,m^\prime)(\theta ^\top X) K_1(X;\theta) |\theta ^\top X \big)\big]\\
={}&E_{\theta,m_0}\big[(m_0\,m^\prime-m\,m^\prime)(\theta ^\top X) \E\big(K_1(X;\theta) |\theta ^\top X\big)\big]\\
={}&0.\qedhere
\end{align}
\end{proof}

\subsection{Proof of (\ref{smooth:eq:ParaScore_approx}) in Theorem \ref{smooth:thm:Main_rate} } \label{smooth:app:proof_step3}

To prove \eqref{smooth:eq:ParaScore_approx}, we will need some auxiliary results on the asymptotic behavior of  $\tilde{\ell}_{\hat\theta,\hat{m}}$. We summarize them in the following lemma.
\begin{lemma} \label{smooth:lem:Consistencyofell}
Under assumptions \ref{smooth:a1}--\ref{smooth:a6}, \ref{smooth:b3}, and \ref{smooth:dens}, the PLSE satisfies
\begin{align}
P_{\theta_0, m_0} |\tilde{\ell}_{\hat{\theta},\hat{m}}- \tilde{\ell}_{\theta_0,m_0}|^2&=o_p(1),\label{smooth:eq:L_2conv}\\
P_{\hat{\theta}, m_0} |\tilde{\ell}_{\hat{\theta},\hat{m}}|^2&=O_p(1). \label{smooth:eq:L_2bound}
\end{align}
\end{lemma}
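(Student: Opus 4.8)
The plan is to prove both displays by decomposing $\tilde{\ell}_{\hat{\theta},\hat{m}}$ and controlling each piece with the consistency and rate bounds of Theorems~\ref{smooth:thm:mainc} and~\ref{smooth:thm:cons}, together with two elementary facts about the design factor $K_1(x;\theta)=H_\theta^\top\big(x-h_\theta(\theta^\top x)\big)$. First, since $\sup_{x\in\rchi}|x|\le T$ (assumption~\ref{smooth:a2}), $H_\theta$ has orthonormal columns, and $|h_\theta(u)|=\big|\E[X\mid\theta^\top X=u]\big|\le T$, one has $\sup_{\theta\in\Theta}\|K_1(\cdot;\theta)\|_{2,\infty}<\infty$. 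Second, writing
\[
K_1(x;\hat\theta)-K_1(x;\theta_0)=\big(H_{\hat\theta}^\top-H_{\theta_0}^\top\big)\big(x-h_{\hat\theta}(\hat\theta^\top x)\big)+H_{\theta_0}^\top\big(h_{\theta_0}(\theta_0^\top x)-h_{\hat\theta}(\hat\theta^\top x)\big),
\]
and invoking $\|H_{\hat\theta}^\top-H_{\theta_0}^\top\|_2=\|H_{\hat\theta}-H_{\theta_0}\|_2\le|\hat\theta-\theta_0|$ (Lemma~\ref{smooth:lem:H_lip}(c)) together with $\|h_{\hat\theta}-h_{\theta_0}\|_\infty\le\bar M|\hat\theta-\theta_0|$ and the regularity of $h_{\theta_0}$ on the compact $D$ (assumption~\ref{smooth:b3}), I get $\|K_1(\cdot;\hat\theta)-K_1(\cdot;\theta_0)\|_{2,\infty}\lesssim|\hat\theta-\theta_0|\overset{P}{\to}0$, using $\hat\theta\overset{P}{\to}\theta_0$ (Theorem~\ref{smooth:thm:cons}).

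For~\eqref{smooth:eq:L_2conv}, under $P_{\theta_0,m_0}$ we have $Y=m_0(\theta_0^\top X)+\epsilon$, so
\[
\tilde{\ell}_{\hat{\theta},\hat{m}}-\tilde{\ell}_{\theta_0,m_0}=\big(m_0(\theta_0^\top X)-\hat m(\hat\theta^\top X)\big)\hat m'(\hat\theta^\top X)K_1(X;\hat\theta)+\epsilon\Big(\hat m'(\hat\theta^\top X)K_1(X;\hat\theta)-m_0'(\theta_0^\top X)K_1(X;\theta_0)\Big).
\]
By $\|\hat m'\|_\infty=O_p(1)$ (Theorem~\ref{smooth:thm:cons}), the uniform bound on $K_1$, and $\|\hat m\circ\hat\theta-m_0\circ\theta_0\|=O_p(\hat\lambda_n)=o_p(1)$ (Theorem~\ref{smooth:thm:mainc}), the $L_2(P_{\theta_0,m_0})$-norm of the first term is $o_p(1)$. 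Since $\E(\epsilon^2\mid X)=\sigma^2(X)$ with $\|\sigma^2\|_\infty<\infty$ (assumption~\ref{smooth:dens}), for the second term it suffices to show $P_X\big|\hat m'(\hat\theta^\top X)K_1(X;\hat\theta)-m_0'(\theta_0^\top X)K_1(X;\theta_0)\big|^2=o_p(1)$; splitting this difference as $\hat m'(\hat\theta^\top X)\big(K_1(X;\hat\theta)-K_1(X;\theta_0)\big)+\big(\hat m'(\hat\theta^\top X)-m_0'(\theta_0^\top X)\big)K_1(X;\theta_0)$, the first summand is $o_p(1)$ in $L_2(P_X)$ by the $K_1$ bound and $\|\hat m'\|_\infty=O_p(1)$, while for the second I would use the pointwise estimate
\[
\big|\hat m'(\hat\theta^\top x)-m_0'(\theta_0^\top x)\big|\le\big|\hat m'(\hat\theta^\top x)-\hat m'(\theta_0^\top x)\big|+\big|\hat m'(\theta_0^\top x)-m_0'(\theta_0^\top x)\big|\le J(\hat m)\,|\hat\theta^\top x-\theta_0^\top x|^{1/2}+\|\hat m'-m_0'\|_{D_0}
\]
from Lemma~\ref{smooth:lem:bounds}; since $J(\hat m)=O_p(1)$ (Theorem~\ref{smooth:thm:mainc}), $|\hat\theta^\top x-\theta_0^\top x|\le T|\hat\theta-\theta_0|=o_p(1)$ uniformly in $x\in\rchi$, and $\|\hat m'-m_0'\|_{D_0}\overset{P}{\to}0$ (Theorem~\ref{smooth:thm:cons}), the right-hand side is $o_p(1)$ uniformly in $x$, which gives~\eqref{smooth:eq:L_2conv}.

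For~\eqref{smooth:eq:L_2bound}, under $P_{\hat\theta,m_0}$ we have $Y=m_0(\hat\theta^\top X)+\epsilon$ with $(\epsilon,X)\sim P_{\epsilon,X}$, whence
\[
P_{\hat\theta,m_0}|\tilde{\ell}_{\hat{\theta},\hat{m}}|^2\le 2\,\|\sigma^2\|_\infty\,\|\hat m'\|_\infty^2\,\sup_{\theta}\|K_1(\cdot;\theta)\|_{2,\infty}^2+2\,\|\hat m'\|_\infty^2\,\sup_{\theta}\|K_1(\cdot;\theta)\|_{2,\infty}^2\,\|m_0\circ\hat\theta-\hat m\circ\hat\theta\|^2.
\]
The first term is $O_p(1)$ by $\|\sigma^2\|_\infty<\infty$ and $\|\hat m'\|_\infty=O_p(1)$; the second is $O_p(1)$ since $\|m_0\circ\hat\theta-\hat m\circ\hat\theta\|^2\le(M_1+\|\hat m\|_\infty)^2=O_p(1)$ by $\|m_0\|_\infty\le M_1$ (assumption~\ref{smooth:a1}) and $\|\hat m\|_\infty=O_p(1)$ (Theorem~\ref{smooth:thm:mainc}); on the event $\{\hat\theta\in B_{\theta_0}(r)\}$, which has probability tending to one, $m_0(\hat\theta^\top X)$ is well defined on $D=D^{(r)}$. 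The main obstacle is the control of $\hat m'(\hat\theta^\top X)-m_0'(\theta_0^\top X)$ in $L_2(P_X)$: Theorem~\ref{smooth:thm:cons} only provides uniform convergence of $\hat m'-m_0'$ on $D_0$, whereas $\hat m'$ is here evaluated at the perturbed argument $\hat\theta^\top X\notin D_0$, so one must use the H\"older-$1/2$ equicontinuity of $\hat m'$ supplied by $J(\hat m)=O_p(1)$ via Lemma~\ref{smooth:lem:bounds}, together with $\hat\theta\overset{P}{\to}\theta_0$ and the analogous argument-shift bound for $h_\theta$; once those points are in place, the rest is routine bookkeeping of the uniform bounds above.
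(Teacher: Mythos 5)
Your proposal is correct and follows essentially the same route as the paper: decompose $Y-\hat m(\hat\theta^\top X)$ through $m_0$, bound the $(m_0\circ\theta_0-\hat m\circ\hat\theta)$-piece by $\|\hat m'\|_\infty\|K_1\|_{2,\infty}\|\hat m\circ\hat\theta-m_0\circ\theta_0\|=o_p(1)$, and bound the $\epsilon$-piece through $\|\sigma^2\|_\infty$ and uniform $o_p(1)$ control of $\hat m'\circ\hat\theta\,K_1(\cdot;\hat\theta)-m_0'\circ\theta_0\,K_1(\cdot;\theta_0)$ via Lemma~\ref{smooth:lem:H_lip}(c), assumption~\ref{smooth:b3}, the H\"older-$1/2$ bound from Lemma~\ref{smooth:lem:bounds}, $J(\hat m)=O_p(1)$, $\|\hat m'-m_0'\|_{D_0}=o_p(1)$, and $\hat\theta\overset{P}{\to}\theta_0$; the $O_p(1)$ bound under $P_{\hat\theta,m_0}$ is the same computation. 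The only cosmetic difference is that the paper exploits the orthogonality $E^X_{\theta,m_0}(Y-m_0(\theta^\top X))=0$ to annihilate the cross term exactly, whereas you use $2(a^2+b^2)$-type bounds, which is equally adequate for the stated $o_p(1)/O_p(1)$ conclusions.
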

\begin{proof}
Recall that $ K_1(x;\theta) = H_\theta^\top \big(x- h_{\theta}( \theta^\top x)\big).$  To prove \eqref{smooth:eq:L_2conv}, observe that
\begin{align}
\begin{split}\label{smooth:eq:main_conti_1}
&P_{\theta_0, m_0} \big| \tilde{\ell}_{\hat{\theta},\hat{m}} - \tilde{\ell}_{\theta_0,m_0}\big|^2\\
={}&P_{\theta_0, m_0} \big| (Y-\hat{m}(\hat{\theta} ^\top X)) \hat{m}^\prime (\hat{\theta} ^\top X) K_1(X;\hat{\theta}) \\
&\quad-(Y-m_0(\theta_0 ^\top X)) {m_0}^\prime (\theta_0 ^\top X) K_1(X;\theta_0)\big|^2\\
={}&P_{\theta_0, m_0}  \big|\{(Y-m_0(\theta_0 ^\top X)) +(m_0(\theta_0 ^\top X)-\hat{m}(\hat{\theta} ^\top X))\} \hat{m}^\prime (\hat{\theta} ^\top X) K_1(X;\hat{\theta}) \\
&\quad-(Y-m_0(\theta_0 ^\top X)) {m_0}^\prime (\theta_0 ^\top X) K_1(X;\theta_0)\big|^2\\
={}&P_{\theta_0, m_0} \big|(Y-m_0(\theta_0 ^\top X))\{ \hat{m}^\prime (\hat{\theta} ^\top X) K_1(X; \hat{\theta})- {m_0}^\prime (\theta_0 ^\top X) K_1(X;\theta_0)\}\\
&\quad+(m_0(\theta_0 ^\top X)-\hat{m}(\hat{\theta} ^\top X)) \hat{m}^\prime (\hat{\theta} ^\top X) K_1(X;\hat{\theta})\big|^2\\
={}&  P_{\theta_0, m_0} [(Y-m_0(\theta_0 ^\top X))^2 \big| \hat{m}^\prime (\hat{\theta} ^\top X) K_1(X; \hat{\theta})- {m_0}^\prime (\theta_0 ^\top X) K_1(X;\theta_0)\big|^2]\\
&\quad+ P_{\theta_0, m_0} \big| (m_0(\theta_0 ^\top X)-\hat{m}(\hat{\theta} ^\top X)) \hat{m}^\prime (\hat{\theta} ^\top X) K_1(X;\hat{\theta}) \big|^2,\\
={}&  P_X \big[\sigma^2(X) | \hat{m}^\prime (\hat{\theta} ^\top X) K_1(X;\hat{\theta})- {m_0}^\prime (\theta_0 ^\top X) K_1(X;\theta_0)|^2\big]\\
&\quad+ P_{\theta_0, m_0} \big| (m_0(\theta_0 ^\top X)-\hat{m}(\hat{\theta} ^\top X)) \hat{m}^\prime (\hat{\theta} ^\top X) K_1(X;\hat{\theta}) \big|^2,\\
\le{}&\|\sigma^2(\cdot)\|_\infty P_X \big[ | \hat{m}^\prime (\hat{\theta} ^\top X) K_1(X;\hat{\theta})- {m_0}^\prime (\theta_0 ^\top X) K_1(X;\theta_0)|^2\big]\\
&\quad+ P_{\theta_0, m_0} \big| (m_0(\theta_0 ^\top X)-\hat{m}(\hat{\theta} ^\top X)) \hat{m}^\prime (\hat{\theta} ^\top X) K_1(X;\hat{\theta}) \big|^2,\\
={}& \|\sigma^2(\cdot)\|_\infty \textbf{\Rome{1}}+\textbf{\Rome{2}}\\
\end{split}
\end{align}
where  in the fourth equality, the cross product term is zero as $E_{\theta_0,m_0}^X\big(Y-m_0(\theta_0^\top X)\big)=0$ and
\begin{align}
\textbf{\Rome{1}}&:= P_X \big[ | \hat{m}^\prime (\hat{\theta} ^\top X) K_1(X;\hat{\theta})- {m_0}^\prime (\theta_0 ^\top X) K_1(X;\theta_0)|^2\big], \\
\textbf{\Rome{2}}&:= P_X \big[|(m_0(\theta_0 ^\top X)-\hat{m}(\hat{\theta} ^\top X)) \hat{m}^\prime (\hat{\theta} ^\top X) K_1(X;\hat{\theta})|^2\big].
\end{align}

Recall that  for all $a \in \R^d$, we have  $|H_\theta^\top a | \le |a|$; see proof of Lemma~\ref{smooth:lem:H_lip}. We will now show that  $\textbf{\Rome{1}} =o_p(1).$ Observe that
\begin{align}
 \textbf{\Rome{1}}&\le  2 P_X \Big[ \Big| H_{\theta_0}^\top \Big( ( \hat{m}^\prime (\hat{\theta} ^\top X)- {m_0}^\prime (\theta_0 ^\top X) ) X + ({m_0}^\prime\, h_{\theta_0}) (\theta_0 ^\top X)-  (\hat{m}^\prime\, h_{\hat{\theta}})(\hat{\theta} ^\top X)\Big)\Big|^2\Big]\\
 &\qquad +2 P_X \Big[ \Big| \big(H^\top_{\hat\theta}-H^\top_{\theta_0}\big) \hat{m}^\prime (\hat{\theta} ^\top X) (X-h_{\hat{\theta}}(\hat{\theta} ^\top X)) \Big|^2 \Big] \\
&\le 2 P_X \Big[ \Big| H_{\theta_0}^\top \Big( ( \hat{m}^\prime (\hat{\theta} ^\top X)- {m_0}^\prime (\theta_0 ^\top X) ) X + ({m_0}^\prime\, h_{\theta_0}) (\theta_0 ^\top X)-  (\hat{m}^\prime\, h_{\hat{\theta}})(\hat{\theta} ^\top X)\Big)\Big|^2\Big]\\
&\qquad +[4CT(1+J(\hat{m}))]^2 |\hat\theta-\theta_0|^2 \\
&\le 2 P_X \Big[ \Big|  ( \hat{m}^\prime (\hat{\theta} ^\top X)- {m_0}^\prime (\theta_0 ^\top X) ) X + ({m_0}^\prime\, h_{\theta_0}) (\theta_0 ^\top X)-  (\hat{m}^\prime\, h_{\hat{\theta}})(\hat{\theta} ^\top X)\Big|^2\Big],\\
&\qquad +[4CT(1+J(\hat{m}))]^2 |\hat\theta-\theta_0|^2,
\end{align}
where the second inequality follows from $(c)$ of Lemma~\ref{smooth:lem:H_lip}. Let us define  $$ \textbf{\Rome{3}}:=  4  P_X\big|({m_0}^\prime\, h_{\theta_0}) (\theta_0 ^\top X)-  (\hat{m}^\prime\, h_{\hat{\theta}})(\hat{\theta} ^\top X)\big|^2.$$  Using Lemma \ref{smooth:lem:bounds} and the fact that  $\sup_{x\in\rchi} |x|\le T$ (see \ref{smooth:a2}), we have
\begin{align}
 \textbf{\Rome{1}}&\leq 4T^2 P_X |\hat{m}^\prime (\hat{\theta} ^\top X)- {m_0}^\prime ({\theta_0} ^\top X)|^2 +  \textbf{\Rome{3}} +o_p(1)\\
&\leq 8 T^2 P_X| \hat{m}^\prime (\hat{\theta} ^\top X)- \hat{m}^\prime ({\theta_0} ^\top X)|^2 + 8 T^2 P_X |(\hat{m}^\prime - {m_0}^\prime) ({\theta_0} ^\top X) |^2+  \textbf{\Rome{3}}  +o_p(1)\\
&\leq 8 T^2 J^2(\hat{m}) P_X \big[|\hat{\theta}^\top X-{\theta_0}^\top X|\big] + 8 T^2 \|\hat{m}^\prime- {m_0}^\prime\|_{D_0} ^2+  \textbf{\Rome{3}}  +o_p(1)\\
&\leq8 T^2 J^2(\hat{m}) T|\hat{\theta}-{\theta_0}|+ 8 T^2 \|\hat{m}^\prime- {m_0}^\prime\|_{D_0} ^2 +  \textbf{\Rome{3}}+o_p(1).
\end{align}
 Recall that both $|\hat{\theta}-\theta_0|$ and $\|\hat{m}^\prime- {m_0}^\prime\|_{D_0}$ are $o_p(1)$; see  Theorem \ref{smooth:thm:cons}.
Thus we  have $\textbf{\Rome{1}}=o_p(1)$, if we can show  that $\textbf{\Rome{3}}=o_p(1).$ First observe that by Theorem \ref{smooth:thm:cons} and assumption \ref{smooth:b3}, we have that   $P_X \big|  h_{\theta_0} (\theta_0 ^\top X) -   h_{\hat{\theta}} (\hat{\theta} ^\top X)\big|^2\stackrel{P}{\rightarrow} 0$. Hence we can bound $\textbf{\Rome{3}}$ from above:  \begin{align}
\textbf{\Rome{3}} &=  4  P_X\big| ({m_0}^\prime h_{\theta_0}) (\theta_0 ^\top X)-  {m_0}^\prime ({\theta_0} ^\top X) h_{\hat{\theta}} (\hat{\theta} ^\top X)   + {m_0}^\prime ({\theta_0} ^\top X)  h_{\hat{\theta}} (\hat{\theta} ^\top X)   -  (\hat{m}^\prime  h_{\hat{\theta}}) (\hat{\theta} ^\top X)\big|^2\\
&\leq  8  P_X\big| ({m_0}^\prime ~ h_{\theta_0}) (\theta_0 ^\top X) -  {m_0}^\prime ({\theta_0} ^\top X) ~ h_{\hat{\theta}} (\hat{\theta} ^\top X)\big|^2   + 8P_X\big|{m_0}^\prime ({\theta_0} ^\top X)  h_{\hat{\theta}} (\hat{\theta} ^\top X)   -  (\hat{m}^\prime  h_{\hat{\theta}}) (\hat{\theta} ^\top X)\big|^2\\
&\leq  8  \| {m_0}^\prime\|^2_\infty P_X\big|  h_{\theta_0} (\theta_0 ^\top X) -   h_{\hat{\theta}} (\hat{\theta} ^\top X)\big|^2   + 8\| h_{\hat{\theta}} \big\|^2_{2, \infty} P_X|{m_0}^\prime ({\theta_0} ^\top X) -  \hat{m}^\prime (\hat{\theta} ^\top X)|^2\\
&\leq  8  \| {m_0}^\prime\|^2_\infty P_X\big|  h_{\theta_0} (\theta_0 ^\top X) -   h_{\hat{\theta}} (\hat{\theta} ^\top X)\big|^2     \\
&\qquad + 16\| h_{\hat{\theta}} \big\|^2_{2, \infty} \Big[ P_X|({m_0}^\prime  -  \hat{m}^\prime) (\theta_0 ^\top X)\big|^2 +P_X| \hat{m}^\prime (\theta_0 ^\top X) -  \hat{m}^\prime (\hat{\theta} ^\top X)|^2\Big]\\
&\leq  8  \| {m_0}^\prime\|^2_\infty P_X\big|  h_{\theta_0} (\theta_0 ^\top X) -   h_{\hat{\theta}} (\hat{\theta} ^\top X)\big|^2   + 16 \| h_{\hat{\theta}} \big\|^2_{2, \infty} \Big[ \|{m_0}^\prime  -  \hat{m}^\prime \|_{D_0}^2 + J^2(\hat{m})  T^2 |\hat{\theta} -\theta_0|^2 \Big].\end{align}
As each of the terms in the last inequality of the above display are $o_p(1)$, we have that $\textbf{\Rome{3}}=o_p(1).$ The proof of \eqref{smooth:eq:L_2conv} will be complete, if we can show that $\textbf{\Rome{2}}=o_p(1).$ First  note that for all $x\in \rchi$,
\begin{equation} \label{smooth:eq:K_bound}
|K_1(x;{\theta})|\le |H_\theta^\top (x- h_{\theta} (\theta^\top x))| \le  |x- h_{\theta} (\theta^\top x)| \le {2T}.
\end{equation}
By Theorem \ref{smooth:thm:mainc}  and assumption  \ref{smooth:a4}, we have
\begin{align}
\textbf{\Rome{2}}&=P_X \big[|(m_0(\theta_0 ^\top X)-\hat{m}(\hat{\theta} ^\top X)) \hat{m}^\prime (\hat{\theta} ^\top X) K_1(X;\hat{\theta})|^2\big]\\
&\le 4T^2  \|\hat{m}^\prime\|^2 _\infty P_X| (m_0(\theta_0 ^\top X)-\hat{m}(\hat{\theta} ^\top X))|^2  \stackrel {P}{\rightarrow} 0.
\end{align}
All these facts combined prove that $P_{\theta_0, m_0} | \tilde{\ell}_{\hat{\theta},\hat{m}}- \tilde{\ell}_{\theta_0,m_0}|^2=o_p(1).$

Next we prove \eqref{smooth:eq:L_2bound}. Observe that
\begin{align}
P_{\hat{\theta}, m_0} | \tilde{\ell}_{\hat{\theta},\hat{m}}|^2
&=P_{\hat{\theta}, m_0} \big| (Y-\hat{m}(\hat{\theta} ^\top X)) \hat{m}^\prime (\hat{\theta} ^\top X) K_1(X;\hat{\theta})\big|^2 \\
&=P_{\hat{\theta}, m_0} \big| (Y-{m}_0 (\hat{\theta} ^\top X)+ {m}_0 (\hat{\theta} ^\top X) -\hat{m}(\hat{\theta} ^\top X)) \hat{m}^\prime (\hat{\theta} ^\top X) K_1(X;\hat{\theta})\big|^2 \\
&\leq 4T^2  \|\hat{m}^\prime\|_\infty^2  P_{\hat{\theta}, m_0} [ (Y-{m}_0 (\hat{\theta} ^\top X)+ {m}_0 (\hat{\theta} ^\top X) -\hat{m}(\hat{\theta} ^\top X))  ]^2 \\
&=  4T^2 \|\hat{m}^\prime\|_\infty^2  P_{\hat{\theta}, m_0} [ (Y-{m}_0 (\hat{\theta} ^\top X))^2+ ({m}_0 (\hat{\theta} ^\top X) -\hat{m}(\hat{\theta} ^\top X)) ^2] \\
&=  4T^2 \|\hat{m}^\prime\|_\infty^2  \Big[ P_X |\sigma^2(X)|+  P_X |{m}_0 ({\hat\theta} ^\top X) -\hat{m}(\hat{\theta} ^\top X)| ^2 \Big]
=O_p(1),
\end{align}
where  in the penultimate equality, the cross product term is zero as $E_{\hat\theta,m_0}^X(Y-m_0(\hat \theta^\top X))=0.$
\end{proof}
Now we prove  \eqref{smooth:eq:ParaScore_approx}. For $\theta \in \Theta$ and $m\in \Ss$, define $p_{\theta,m}(y,x) := p_{\epsilon|X}(y-m(\theta^\top x), x) p_X(x)$ to be the joint density of $(Y,X)$ with respect to the dominating measure $\mu$, where $Y=m(\theta^\top X)+\epsilon$ and $X\sim P_X$.  Now consider the following submodel for $\theta_0$:
\begin{equation}
\zeta_{\eta,\theta_0} = \sqrt{1-|\eta|^2} \theta_0 + H_{\theta_0} \eta.
\end{equation}
By definition of $\hat{\eta}$ (see \eqref{smooth:eq:local_hat}), we have that $\zeta_{\hat{\eta},\theta_0}=\hat{\theta}$.  As $\hat{\eta}=o_p(1)$ (see Theorem \ref{smooth:thm:ratest} and \eqref{smooth:eq:eta_hat})  differentiability  in quadratic mean  of  model \eqref{smooth:simsl} implies that

\begin{equation}\label{smooth:eq:qmd_theta_0}
\int \left(\sqrt{p_{\hat{\theta},m_0}} -\sqrt{p_{\theta_0,m_0}} -\frac{1}{2} \hat{\eta}^\top S_{\theta_0,m_0} \sqrt{p_{\theta_0,m_0}} \right)^2 d\mu =o_p(|\hat{\eta}|^2) = o_p(|\hat{\theta}-\theta_0|^2).
\end{equation}
With Lemma \ref{smooth:lem:Consistencyofell} in hand, we now show that \eqref{smooth:eq:ParaScore_approx} holds. The following steps are very similar to the proof of Theorem 6.20 of~\cite{MR1915446}.  Note that by \eqref{smooth:eq:eta_hat}, we have
\begin{align}
\sqrt{n} (P_{\hat{\theta}, m_0} - P_{\theta_0, m_0}) \tilde{\ell}_{\hat{\theta},\hat{m}} - \sqrt{n}  P_{\theta_0, m_0}( \tilde{\ell}_{\theta_0, m_0} S_{\theta_0,m_0}^\top) H_{\theta_0}^\top (\hat{\theta}- \theta_0)
={}& \textbf{\Rome{4}} +\frac{1}{2}\textbf{\Rome{5}} +  \textbf{\Rome{6}},
\end{align}
where \begin{align}
\textbf{\Rome{4}}&= \sqrt{n} \int \tilde{\ell}_{\hat{\theta},\hat{m}} (\sqrt{p_{\hat{\theta}, m_0}} + \sqrt{p_{\theta_0, m_0}}) \left(\sqrt{p_{\hat{\theta}, m_0}} -
 \sqrt{p_{\theta_0, m_0}} -\frac{1}{2}\hat{\eta}^\top S_{\theta_0,m_0}  \sqrt{p_{\theta_0, m_0}} \right) d\mu,\\
 \textbf{\Rome{5}}&= \sqrt{n} \left[\int \tilde{\ell}_{\hat{\theta},\hat{m}} (\sqrt{p_{\hat{\theta}, m_0}} - \sqrt{p_{\theta_0, m_0}}) S_{\theta_0,m_0}^\top \sqrt{p_{\theta_0, m_0}}d\mu\right] \hat{\eta},\\
  \textbf{\Rome{6}}&= \sqrt{n} \left[\int [\tilde{\ell}_{\hat{\theta}, \hat{m}}-\tilde{\ell}_{\theta_0, m_0}] S_{\theta_0,m_0}^\top p_{\theta_0, m_0} d\mu\right]\hat{\eta}.
\end{align}
Observe that $\textbf{\Rome{4}}, \textbf{\Rome{5}},$ and  $\textbf{\Rome{6}}$  are elements of $\R^d$. In the following, we  show that  $\textbf{\Rome{4}}, \textbf{\Rome{5}},$ and  $\textbf{\Rome{6}}$ are $o_p(\sqrt{n} |\hat\theta-\theta_0|)$.
Using the Cauchy-Schwarz inequality and  the fact that $(a+b)^2 \le 2(a^2+b^2)$, we have
\begin{align}
\big|\textbf{\Rome{4}}\big|^2 &\leq
 2 n \left[P_{\hat{\theta},m_0} | \tilde{\ell}_{\hat{\theta},\hat{m}}|^2 + P_{\theta_0, m_0} | \tilde{\ell}_{\hat{\theta},\hat{m}}- \tilde{\ell}_{\theta_0,m_0}|^2 + P_{\theta_0, m_0} |\tilde{\ell}_{\theta_0,m_0}|^2 \right] o_p(|\hat \eta|^2 )
= o_p(n |\hat{\theta}-\theta_0|^2),
\end{align}
where the equality is due to Lemma \ref{smooth:lem:Consistencyofell}, \eqref{smooth:eq:qmd_theta_0}, and the fact that $\tilde{\ell}_{\theta_0,m_0} \in L_2(P_{\theta_0,m_0})$; see \ref{smooth:a1}, \ref{smooth:a2}, and Lemma \ref{smooth:lem:dbounds}.

Now we will show that $|\textbf{\Rome{6}}| =o_p(|\sqrt{n}(\theta-\theta_0)|).$ For a matrix $\mathbb{A} \in \R^{d\times d }$, let $\|\mathbb{A}\|_F$ denote the Frobenius norm of $\mathbb{A}$. Then we have
 \begin{align}\label{smooth:eq:6_1}
\begin{split}
 \big|\textbf{\Rome{6}}\big|^2&\le  \left\|\int [\tilde{\ell}_{\hat{\theta}, \hat{m}}-\tilde{\ell}_{\theta_0, m_0}] S_{\theta_0,m_0}^\top p_{\theta_0, m_0} d\mu\right\|_F^2 |\sqrt{n}\hat{\eta}|^2.
 \end{split}
 \end{align}
Let $f= (f_1,\ldots,f_d)$ and  $g=(g_1,\ldots,g_d)$ be two functions that map a separable metric space $\Re$ to  $\R^d$. If $\nu$ is a finite measure on $\Re$ such that  $|f| $ and $|g| $ are $ L_2(\nu)$, then by the Cauchy-Schwarz inequality, we have
\begin{align}\label{smooth:eq:forb_pdt}
\begin{split}
\left\| \int_\Re f g^\top d\nu \right\|_F^2 &= \sum_{i,j} \left[\int_\Re f_i g_j d\nu\right]^2 \le \sum_{i,j} \int_\Re f_i^2 d\nu \int_\Re g_j^2 d\nu\\
 &=  \left[\sum_{i} \int_\Re f_i^2 d\nu \right]\left[\sum_{j}\int_\Re g_j^2 d\nu\right] =\int_\Re |f|^2 d\nu\int_\Re |g|^2 d\nu.
\end{split}
\end{align} Thus from Lemma \ref{smooth:lem:Consistencyofell}, \eqref{smooth:eq:6_1}, and the fact that $S_{\theta_0,m_0} \in L_2(P_{\theta_0,m_0})$, we have
  \begin{align}\label{smooth:eq:6_final}
  \begin{split}
 \big|\textbf{\Rome{6}}\big|^2
  &\leq |\sqrt{n}\hat{\eta}|^2 \int |\tilde{\ell}_{\hat{\theta},\hat{m}}-\tilde{\ell}_{\theta_0, m_0}|^2 p_{\theta_0, m_0} d\mu \int |S_{\theta_0,m_0}|^2 p_{\theta_0, m_0} d\mu \\
&= |\sqrt{n}\hat{\eta}|^2 P_{\theta_0, m_0} | \tilde{\ell}_{\hat{\theta},\hat{m}}- \tilde{\ell}_{\theta_0,m_0}|^2 P_{\theta_0, m_0} |S_{\theta_0,m_0}|^2  =o_p( |\sqrt{n}\hat{\eta}|^2) =o_p(|\sqrt{n}(\hat\theta-\theta_0)|^2).
\end{split}
\end{align}
We will now prove that
\begin{equation}\label{eq:part_5_1}
|\textbf{\Rome{5}}|^2=o_p(|\sqrt{n}(\hat\theta-\theta_0)|^2).
\end{equation}
Observe that
\begin{equation}\label{smooth:eq:bd_5}
|\textbf{\Rome{5}}|^2\le  \left\|\int \tilde{\ell}_{\hat{\theta},\hat{m}} (\sqrt{p_{\hat{\theta}, m_0}} - \sqrt{p_{\theta_0, m_0}}) S_{\theta_0,m_0}^\top \sqrt{p_{\theta_0, m_0}}d\mu\right\|_F^2 |\sqrt{n}\hat{\eta}|^2.
\end{equation}
Thus the proof of~\eqref{eq:part_5_1} will be complete, if we can show that
\begin{equation}\label{smooth:eq:bd_51}
\left\|\int \tilde{\ell}_{\hat{\theta},\hat{m}} (\sqrt{p_{\hat{\theta}, m_0}} - \sqrt{p_{\theta_0, m_0}}) S_{\theta_0,m_0}^\top \sqrt{p_{\theta_0, m_0}}d\mu\right\|_F^2 =o_p(1).
\end{equation}
 We will show this by splitting the integral in the above display into two regions that depend on $n$. More specifically by splitting the integral into $\{ (y,x): |S_{\theta_0,m_0}(y,x)| > r_n\}$ and $\{ (y,x): |S_{\theta_0,m_0}(y,x)| \leq r_n\},$ where  $\{r_n\}$ is a sequence of constants to be chosen later.

  Observe that  by \eqref{smooth:eq:forb_pdt}, we have
\begin{align} \label{smooth:eq:(b)_part1}
 &\left\|\int_{|S_{\theta_0,m_0}| \leq r_n} \tilde{\ell}_{\hat{\theta},\hat{m}}S_{\theta_0,m_0}^\top (\sqrt{p_{\hat{\theta}, m_0}} - \sqrt{p_{\theta_0, m_0}}) \sqrt{p_{\theta_0, m_0}}d\mu\right\|_F^2 \nonumber\\
 ={}&\sum_{i,j}  \left[\int_{|S_{\theta_0,m_0}|  \leq r_n}\left\{\tilde{\ell}_{\hat{\theta},\hat{m}}S_{\theta_0,m_0}^\top\right\}_{i,j} \left(\sqrt{p_{\hat{\theta}, m_0}} - \sqrt{p_{\theta_0, m_0}}\right) \sqrt{p_{\theta_0, m_0}}d\mu\right]^2  \nonumber\\
 \le{}&  \left[ \int \left( \sqrt{p_{\hat{\theta}, m_0}} - \sqrt{p_{\theta_0, m_0}} \right)^2 d\mu \right]  \sum_{i,j} \left[\int_{|S_{\theta_0,m_0}|  \leq r_n}\left\{\tilde{\ell}_{\hat{\theta},\hat{m}}S_{\theta_0,m_0}^\top\right\}_{i,j}^2 p_{\theta_0, m_0}d\mu\right]\nonumber \\
 \le{}& 2 \left[ \int \left(\frac{1}{2}S_{\theta_0,m_0}^\top (\hat{\theta}- \theta_0) \sqrt{p_{\theta_0, m_0}} \right)^2 + \left( \sqrt{p_{\hat{\theta}, m_0}} - \sqrt{p_{\theta_0, m_0}} -\frac{1}{2}S_{\theta_0,m_0}^\top (\hat{\theta}- \theta_0) \sqrt{p_{\theta_0, m_0}} \right)^2 d\mu \right]\nonumber \\
 &\qquad \qquad \times \sum_{i,j} \left[\int_{|S_{\theta_0,m_0}|  \leq r_n}\left\{\tilde{\ell}_{\hat{\theta},\hat{m}}S_{\theta_0,m_0}^\top\right\}_{i,j}^2 p_{\theta_0, m_0}d\mu\right] \\
 ={}&2 \left[ \int \left(\frac{1}{2}S_{\theta_0,m_0}^\top (\hat{\theta}- \theta_0) \sqrt{p_{\theta_0, m_0}} \right)^2 + \left( \sqrt{p_{\hat{\theta}, m_0}} - \sqrt{p_{\theta_0, m_0}} -\frac{1}{2}S_{\theta_0,m_0}^\top (\hat{\theta}- \theta_0) \sqrt{p_{\theta_0, m_0}} \right)^2 d\mu \right] \nonumber\\
 &\qquad\qquad\times  \int_{|S_{\theta_0,m_0}|  \leq r_n} |\tilde{\ell}_{\hat{\theta},\hat{m}}|^2 |S_{\theta_0,m_0}^\top|^2 p_{\theta_0, m_0}d\mu \nonumber\\
\leq{}&2 r_n^2 P_{\theta_0, m_0} |\tilde{\ell}_{\hat{\theta},\hat{m}} |^2\, \big[O_p(|\hat{\theta}- \theta_0|^2) + o_p(|\hat{\theta}- \theta_0|^2)\big]= r_n^2 o_p(1)\nonumber,
\end{align}
where the last equality follows from~Theorem~\ref{smooth:thm:ratest} and \eqref{smooth:eq:L_2bound}. Now to bound the second part of the integral, observe that
\begin{align} \label{smooth:eq:(b)_part2}
\begin{split}
&\left\|\int_{|S_{\theta_0,m_0}| > r_n} \tilde{\ell}_{\hat{\theta},\hat{m}}S_{\theta_0,m_0}^\top (\sqrt{p_{\hat{\theta}, m_0}} - \sqrt{p_{\theta_0, m_0}}) \sqrt{p_{\theta_0, m_0}}d\mu\right\|_F^2\\
\le{}&2\int_{|S_{\theta_0,m_0}| > r_n} |S_{\theta_0,m_0}|^2 p_{\theta_0, m_0} d\mu \int |\tilde{\ell}_{\hat{\theta},\hat{m}}|^2 ({p_{\hat{\theta}, m_0}} + {p_{\theta_0, m_0}}) d\mu\\
\le{}& O_p(1) \int_{|S_{\theta_0,m_0}| > r_n} |S_{\theta_0,m_0}|^2 p_{\theta_0, m_0} d\mu.
\end{split}
\end{align}
Since $ P_{\theta_0, m_0}|S_{\theta_0,m_0}|^2=O_p(1),$ it is easy to see that we can find a sequence $\{r_n\}$ such that both \eqref{smooth:eq:(b)_part1} and \eqref{smooth:eq:(b)_part2} are $o_p(1).$ Thus we have~\eqref{smooth:eq:bd_51}.

\subsection{Proof of Lemma \ref{smooth:lem:first_term}} \label{smooth:app:proof_first_term}
Before proceeding to prove Lemma~\ref{smooth:lem:first_term}, we find the entropy of the class of matrices $\{ H_\theta: \theta\in \Theta\}$, where $H_\theta$ satisfies properties of Lemma~\ref{smooth:lem:H_lip}.
\begin{lemma}\label{lem:H_ent}
We can construct a cover $\{\eta_1, \ldots, \eta_{N_\varepsilon}\}$ of $\Theta \cap B_{\theta_0}(1/2)$ such that $N_\varepsilon\lesssim \varepsilon^{-2d}$ and for every $\theta\in\Theta \bigcap B_{\theta_0}(1/2)$, there exists an $i\le N_\varepsilon$ such that
\begin{equation}\label{smooth:eq:H_cover}
|\theta-\eta_i|\le \varepsilon \text{  and  }\|H_\theta^\top-H_{\eta_i}^\top\|_2\le\varepsilon.
\end{equation}
\end{lemma}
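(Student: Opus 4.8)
The plan is to build the cover in two stages: first cover the parameter set $\Theta\cap B_{\theta_0}(1/2)$ by small Euclidean balls, then argue that the same centers automatically control $\|H_\theta^\top - H_{\eta_i}^\top\|_2$ via the Lipschitz estimate in Lemma~\ref{smooth:lem:H_lip}. For the first stage, $\Theta\cap B_{\theta_0}(1/2)$ is a subset of the sphere $S^{d-1}$, hence a $(d-1)$-dimensional set sitting inside $\R^d$, so by the standard volumetric bound (e.g.\ Lemma 4.1 of \cite{Pollard90}, already invoked in Section~\ref{smooth:app:lem:entrop1_proof}) it admits an $\varepsilon$-net $\{\eta_1,\ldots,\eta_{N_\varepsilon}\}$ with $N_\varepsilon\lesssim \varepsilon^{-(d-1)}\le \varepsilon^{-2d}$. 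This immediately gives the first inequality $|\theta-\eta_i|\le\varepsilon$ in \eqref{smooth:eq:H_cover}.

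The subtle point is the second inequality. If $\theta$ and its nearest net point $\eta_i$ are both bounded away from $\theta_0$, then part (d) of Lemma~\ref{smooth:lem:H_lip} gives
\[
\|H_\theta^\top - H_{\eta_i}^\top\|_2 \le 8(1+8/\sqrt{15})\,\frac{|\theta-\eta_i|}{|\theta-\theta_0|+|\eta_i-\theta_0|},
\]
and since the denominator is at least a fixed positive constant while $|\theta-\eta_i|\le\varepsilon$, this is $\lesssim\varepsilon$. The genuine obstacle is the region near $\theta_0$, where the denominator $|\theta-\theta_0|+|\eta_i-\theta_0|$ can be tiny and the bound in (d) does not decay. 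I would handle this by a scale-adapted (multi-resolution) net rather than a single uniform net: partition $\Theta\cap B_{\theta_0}(1/2)$ into dyadic annuli $A_k:=\{\theta:\, 2^{-k-1}<|\theta-\theta_0|\le 2^{-k}\}$ for $k\ge 0$, together with a central ball $\{|\theta-\theta_0|\le\varepsilon\}$. On each annulus $A_k$ one places a net of mesh $\varepsilon\,2^{-k}$; for $\theta,\eta_i$ both in $A_k$, the numerator of the bound in (d) is $\le\varepsilon 2^{-k}$ and the denominator is $\gtrsim 2^{-k}$, so the ratio is $\lesssim\varepsilon$, as required. The number of points needed on $A_k$ is $\lesssim (2^{-k}/(\varepsilon 2^{-k}))^{d-1}=\varepsilon^{-(d-1)}$, and one only needs annuli down to scale $2^{-k}\gtrsim\varepsilon$, i.e.\ $k\lesssim\log_2(1/\varepsilon)$ of them; the central ball of radius $\varepsilon$ contributes $O(1)$ points. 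Summing, the total cardinality is $\lesssim \varepsilon^{-(d-1)}\log(1/\varepsilon)\lesssim\varepsilon^{-2d}$, which absorbs the logarithmic factor comfortably.

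To finish, I would verify that for any $\theta\in\Theta\cap B_{\theta_0}(1/2)$ the chosen center $\eta_i$ (the nearest net point within the annulus containing $\theta$, or $\theta_0$ itself if $\theta$ lies in the central ball) satisfies both requirements: the Euclidean bound $|\theta-\eta_i|\le\varepsilon\,2^{-k}\le\varepsilon$ is clear, and the operator-norm bound follows from Lemma~\ref{smooth:lem:H_lip}(d) on each annulus as computed above, while on the central ball one uses part (c), $\|H_\theta-H_{\theta_0}\|_2\le|\theta-\theta_0|\le\varepsilon$, together with the fact that $\|H_\theta^\top - H_{\theta_0}^\top\|_2=\|(H_\theta-H_{\theta_0})^\top\|_2=\|H_\theta-H_{\theta_0}\|_2$. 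A small bookkeeping care is needed at annulus boundaries (a point near $|\theta-\theta_0|=2^{-k}$ should be allowed to use a center in the adjacent annulus), but this only changes constants. The main obstacle, as indicated, is precisely the non-uniform behavior of the Lipschitz constant near $\theta_0$; the dyadic decomposition is the device that neutralizes it, at the negligible cost of the extra $\log(1/\varepsilon)$ factor.
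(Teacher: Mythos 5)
Your proof is correct, but it takes a genuinely different route from the paper. The paper avoids the multi-resolution machinery entirely: after carving out the small ball $B_{\theta_0}(\varepsilon/2)$ (covered by $\theta_0$ alone, handled via part (c) exactly as you note), it places a \emph{single} uniform net of the much finer mesh $\varepsilon^2/(8+64/\sqrt{15})$ on $\Theta\cap B_{\theta_0}(1/2)\setminus B_{\theta_0}(\varepsilon/2)$. Then in part (d) the numerator is $\lesssim\varepsilon^2$ while the denominator is $\gtrsim\varepsilon$ (since both points are outside $B_{\theta_0}(\varepsilon/2)$), giving a ratio $\lesssim\varepsilon$ with no annulus decomposition and no boundary bookkeeping. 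The price is a pessimistic cardinality of order $\varepsilon^{-2(d-1)}$, which the lemma's $\varepsilon^{-2d}$ bound accommodates. Your dyadic-annulus construction with scale-adapted mesh $\varepsilon\,2^{-k}$ is sharper --- it yields $\lesssim\varepsilon^{-(d-1)}\log(1/\varepsilon)$ net points rather than $\varepsilon^{-2(d-1)}$ --- but since the lemma only claims $\varepsilon^{-2d}$, the extra sharpness is not used downstream, and the paper's one-net argument is shorter to write and verify. Both proofs correctly identify the obstruction (the Lipschitz constant in part (d) degenerates near $\theta_0$) and both use parts (c) and (d) of Lemma~\ref{smooth:lem:H_lip} in the same qualitative way; the difference is purely in how the mesh is chosen away from $\theta_0$.
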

\begin{proof}
To find the entropy with respect to the matrix $2$-norm, we construct a $\varepsilon$-cover for the set $\{ H_\theta^\top: \theta\in \Theta\}$. By Lemma 4.1 of~\cite{Pollard90}, we have that
\[ N(\varepsilon^2/(8+64/\sqrt{15}),\Theta\cap B_{\theta_0}(1/2)\setminus B_{\theta_0}(\varepsilon/2), |\cdot| )\lesssim \varepsilon^{-2d}.\]
 Let $\{\theta_i\}_{1\le i\le N_\varepsilon}$ for $N_\varepsilon \lesssim \varepsilon^{-2d}$ form a cover of $\Theta\cap B_{\theta_0}(1/2)\setminus B_{\theta_0}(\varepsilon/2)$. We can without loss of generality assume that $|\theta_i-\theta_0| \ge \varepsilon/2$ for all $1\le i\le N_\varepsilon.$
We claim that $H_{\theta_0}^\top \cup \{ H_{\theta_i}^\top\}_{1\le i\le N_\varepsilon}$ forms a $\varepsilon$-cover for $\{ H_\theta^\top: \theta\in \Theta\}.$ It is enough to show that for every $\eta\in \Theta$, we can find $i^* \in \{0,1,\ldots, N_\varepsilon\}$ such that $\|H_\eta^\top-H_{\theta_{i^*}}^\top\|_2 \le \varepsilon$. If $\eta \in B_{\theta_0}(\varepsilon/2)$ then choose $i^*=0$. By condition $(c)$ of Lemma~\ref{smooth:lem:H_lip}, we have $\|H_\eta^\top-H_{\theta_0}^\top\|_2 \le |\eta-\theta_0|\le \varepsilon.$ If $\eta \notin B_{\theta_0}(\varepsilon/2)$ then choose $i^*$ such that $|\eta-\theta_{i^*}| \le \varepsilon^2/(8+64/\sqrt{15})$. Thus by condition $(d)$ of Lemma~\ref{smooth:lem:H_lip}, we have \[\|H_\eta^\top-H_{\theta_{i^*}}^\top\|_2 \le (8+64/\sqrt{15}) \frac{|\eta-\theta_{i^*}|}{|\eta-\theta_0| +|\theta_{i^*}-\theta_0|} \le (8+64/\sqrt{15}) \frac{ \varepsilon^2/(8+64/\sqrt{15}) }{\varepsilon} \le \varepsilon.\qedhere \]
\end{proof}
Now we will show that  $D_{M_1, M_2,M_3}(n)$ is an envelope of $\mathcal{D}_{M_1, M_2,M_3}(n).$ For every $(m,\theta) \in \mathcal{C}_{M_1,M_2, M_3}(n)$ and $x\in \rchi$, we have
\begin{align}
&|(m_0 (\theta_0 ^\top x) -m(\theta ^\top x)) m^\prime (\theta ^\top x) K_1(x;\theta)|\\
\le{}& \big(|(m_0 (\theta_0 ^\top x)-m(\theta_0 ^\top x)|+ |m(\theta_0 ^\top x) -m(\theta ^\top x))|\big) M_2 2T\\
\le{}& \big(\|m_0 -m\|_{D_0}+ \|m^\prime\|_\infty |\theta_0-\theta| |x|) M_2 2T \\
\le{}& 2TM_2 (a_n^{-1}+ \|m^\prime\|_\infty |\theta_0-\theta| T)\\
\le{}& 2TM_2 (a_n^{-1}+T M_2 \hat\lambda_n^{1/2})=D_{M_1, M_2,M_3}(n),
\end{align}
 where the first and second inequality follow from the facts that  $\sup_{x\in\rchi}|x|\le T$ and $\|K_1(\cdot;\theta)\|_{2,\infty} \le 2T$, see \ref{smooth:a2} and \eqref{smooth:eq:K_bound}.
Next we prove that there exists finite $c$ depending only on $M_1,M_2,$ and $M_3,$ such that
\be \label{smooth:eq:entropy_Wstar}
 N(\varepsilon, \mathcal{D}^*_{M_1, M_2,M_3}, \|\cdot\|_{2,\infty})  \leq c \exp \left(\frac{c}{\varepsilon}+ \frac{c}{\sqrt{\varepsilon}}\right)\varepsilon^{-2(d-1)}.
 \ee  We first find covers for $\mathcal{C}^{m*}_{M_1, M_2,M_3}$,  $\{ f^\prime : f \in \mathcal{C}^{m*}_{M_1, M_2,M_3} \big\}$, and  $\Theta\cap B_{\theta_0}(1/2)$ and use  them to construct a cover for $\mathcal{D}^*_{M_1,M_2, M_3}$.
By Lemma \ref{smooth:lem:entropy} (for $k=1$ and $2$, respectively), we have
\begin{align}
N(\varepsilon,  \mathcal{C}^{m*}_{M_1, M_2,M_3}, \|\cdot\|_\infty) &\le \exp(c/\sqrt{\varepsilon}),\\
N(\varepsilon, \big\{ f^\prime : f \in \mathcal{C}^{m*}_{M_1, M_2,M_3} \big\}, \|\cdot\|_\infty) &\le \exp(c/{\varepsilon}),
\end{align}
where $c$ is a constant depending only on $M_1,M_2,$ and $M_3.$  Let us denote the functions in the $\varepsilon$-cover of $\mathcal{C}^{m*}_{M_1, M_2,M_3}$ by $r_1,\ldots, r_q$ and the functions in the $\varepsilon$-cover of  $\{ f^\prime : f \in \mathcal{C}^{m*}_{M_1, M_2,M_3} \big\}$ by $l_1,\ldots, l_{t}.$ By Lemma~\ref{lem:H_ent}, we have that there exists  $\theta_1,\ldots, \theta_s$ for $s\lesssim \varepsilon^{-4d}$ such that $\{\theta_i\}_{1\le i \le s}$ form an $\varepsilon^2$-cover  of $\Theta \cap B_{\theta_0}(1/2)$ and satisfies~\eqref{smooth:eq:H_cover}.  Fix $(m,\theta) \in \mathcal{C}_{M_1,M_2, M_3}(n).$ Without loss of generality  assume that the function nearest to $m$ in the $\varepsilon$-cover of $\mathcal{C}^{m*}_{M_1, M_2,M_3}$ is $r_1,$  the function nearest to $m^\prime$ in the $\varepsilon$-cover of $\{ f^\prime : f \in \mathcal{C}^{m*}_{M_1, M_2,M_3} \big\}$ is $l_1,$ and the vector nearest to $\theta$ in the $\varepsilon^2$-cover of $\Theta \cap B_{\theta_0}(1/2)$ is $\theta_1$, i.e.,
\begin{equation}\label{eq:local_cond}
\|m-r_1\|_{\infty} \le \varepsilon, \quad  \|m^{\prime}-l_1\|_{\infty} \le \varepsilon,\quad \|H_{\theta_1}^\top-H_\theta^\top\|_2 \le \varepsilon^2 \quad \text{and} \quad |\theta_1-\theta| \le \varepsilon^2.
\end{equation}

 Now for every $x\in \rchi$, observe that
\begin{align}
\begin{split}\label{smooth:eq:first_part_split}
&\big|(m_0 (\theta_0 ^\top x) -m(\theta ^\top x)) m^\prime (\theta ^\top x) K_1(x;\theta)- (m_0 (\theta_0 ^\top x) -r_1(\theta_1 ^\top x)) l_1 (\theta_1 ^\top x) K_1(x;\theta_1)\big|\\
={}&\big|(m_0 (\theta_0 ^\top x) -m(\theta ^\top x)) m^\prime (\theta ^\top x) K_1(x;\theta)-\\
 &\qquad \big(m_0 (\theta_0 ^\top x)-m(\theta ^\top x)+m(\theta ^\top x) -r_1(\theta_1 ^\top x)\big) l_1 (\theta_1 ^\top x) K_1(x;\theta_1)\big|\\
\le{}&\big| m_0 (\theta_0 ^\top x) -m(\theta ^\top x)\big| \big| m^\prime (\theta ^\top x) K_1(x;\theta)-  l_1 (\theta_1 ^\top x) K_1(x;\theta_1)\big|\\
&\qquad +\big|m(\theta ^\top x) -r_1(\theta_1 ^\top x)\big| \big|l_1 (\theta_1 ^\top x) K_1(x;\theta_1)\big|\\
={}&\textbf{A}+\textbf{B},
\end{split}
\end{align}
where
\begin{align}
\textbf{A}&:=\big| m_0 (\theta_0 ^\top x) -m(\theta ^\top x)\big|m^\prime (\theta ^\top x) K_1(x;\theta)-  l_1 (\theta_1 ^\top x) K_1(x;\theta_1)\big|\\
\textbf{B}&:=\big|m(\theta ^\top x) -r_1(\theta_1 ^\top x)\big| \big|l_1 (\theta_1 ^\top x) K_1(x;\theta_1)\big|.
\end{align}
We next find an upper bound for \textbf{A}. First, by Lemma \ref{smooth:lem:H_lip} and assumption \ref{smooth:b3}, we have
\begin{align}
\begin{split}\label{smooth:eq:K_diff_bound}
&\big |K_1(x;\theta)-  K_1(x;\theta_1)\big| \\
={}&\big|H_\theta^\top (x- h_{\theta} (\theta^\top x))- H_{\theta_1}^\top(x- h_{\theta} (\theta^\top x))+  H_{\theta_1}^\top(x- h_{\theta} (\theta^\top x))   - H_{\theta_1}^\top (x- h_{\theta_1} (\theta_1^\top x))\big| \\
\le{}&\big|(H_\theta^\top - H_{\theta_1}^\top) (x- h_{\theta} (\theta^\top x))\big|+ \big| H_{\theta_1}^\top \big[(x- h_{\theta} (\theta^\top x))   -  (x- h_{\theta_1} (\theta_1^\top x))\big]\big| \\
\le{}&\|H_\theta^\top - H_{\theta_1}^\top\|_2 2T + \big|h_{\theta} (\theta^\top x)   -  h_{\theta_1} (\theta_1^\top x)\big| \\
\le{}&2 T\varepsilon^2+ (\bar{M}+\|h_{\theta_0}'\|_\infty) |\theta-\theta_1| \lesssim \varepsilon^2.
\end{split}
\end{align}
Now observe that
\begin{align}
\begin{split}\label{smooth:eq:A}
\textbf{A}\le{}&2M_1\big| m^\prime (\theta ^\top x) K_1(x;\theta)-  l_1 (\theta_1 ^\top x) K_1(x;\theta_1)\big|\\
\le{}&2 M_1\big| m^\prime (\theta ^\top x) K_1(x;\theta)-  l_1 (\theta ^\top x) K_1(x;\theta)\big|+ \big| l_1 (\theta ^\top x) K_1(x;\theta)-  l_1 (\theta_1 ^\top x) K_1(x;\theta)\big|\\
&\qquad +2 M_1 \big| l_1 (\theta_1 ^\top x) K_1(x;\theta)-  l_1 (\theta_1 ^\top x) K_1(x;\theta_1)\big|\\
\le{}&2 M_1 |K_1(x;\theta)| \big| m^\prime (\theta ^\top x) -  l_1 (\theta ^\top x)\big| + | K_1(x;\theta)|\big| l_1 (\theta ^\top x)-  l_1 (\theta_1 ^\top x)\big|\\
& \qquad +2 M_1 \|l_1\|_\infty \big|  K_1(x;\theta)-  K_1(x;\theta_1)\big|\\
\lesssim{}& 4TM_1 \bigg( \varepsilon  + \left[\int_D {l_1^\prime}^2(z) dz \right] |\theta-\theta_1|^{1/2} T^{1/2}\bigg) + 2 M_1M_2(2T+\bar{M}) \varepsilon^2\\
\le{}& 4T M_1 (\varepsilon  +  M_3 |\theta-\theta_1|^{1/2} T^{1/2}) + (2T+\bar{M})2 M_1 M_2  \varepsilon^2 \\
\lesssim{}& \varepsilon,
\end{split}
\end{align}
where the penultimate inequality follows from~\eqref{eq:local_cond} and the last inequality follows from \ref{smooth:a2}, \eqref{smooth:eq:K_diff_bound}, and Lemma \ref{smooth:lem:bounds}. To find an upper bound for $\textbf{B}$, observe that
\begin{align}\label{smooth:eq:B}
\begin{split}
\textbf{B}={}&\big|m(\theta ^\top x) -r_1(\theta_1 ^\top x)\big| \big|l_1 (\theta_1 ^\top x) K_1(x;\theta_1)\big|\\
\le{}&\Big[ \big|m(\theta ^\top x) -r_1(\theta^\top x)\big|+ \big|r_1(\theta ^\top x) -r_1(\theta_1 ^\top x)\big|\Big] \big|l_1 (\theta_1 ^\top x) K_1(x;\theta_1)\big|\\
\le{}& \big[ \varepsilon +  \|r_1^\prime\|_\infty |\theta-\theta_1| T\big] \|l_1\|_\infty 2T \lesssim \varepsilon.
\end{split}
\end{align}
Combining  \eqref{smooth:eq:first_part_split}, \eqref{smooth:eq:A}, and \eqref{smooth:eq:B} we get  that $\{(m_0 (\theta_0 ^\top x) -r_i(\theta_k^\top x)) l_j^\prime (\theta_k ^\top x) K_1(x;\theta_k)\}_{i,j,k}$ for ${1\le i\le q , 1\le j \le t,}$ and  ${1\le k \le s}$ form an (constant multiple of) $\varepsilon$-cover (with respect to $\|\cdot\|_{2,\infty}$ norm) of $\mathcal{D}^*_{M_1, M_2,M_3}.$ Thus we have \eqref{smooth:eq:entropy_Wstar}. Moreover, as $N_{[\,]}(\varepsilon, \mathcal{D}^*_{M_1, M_2,M_3}, \|\cdot\|_{2, P_{\theta_0, m_0}}) \lesssim N(\varepsilon, \mathcal{D}^*_{M_1, M_2,M_3}, \|\cdot\|_{2,\infty})$ and
  $$\mathcal{D}_{M_1, M_2,M_3}(n) \subset \mathcal{D}^*_{M_1, M_2,M_3},$$
   for every $n \in \mathbb{N},$ we have $N_{[\,]}(\varepsilon,\mathcal{D}_{M_1, M_2,M_3}(n), \|\cdot\|_{2, P_{\theta_0, m_0}}) \lesssim N_{[\,]}(\varepsilon, \mathcal{D}^*_{M_1, M_2,M_3}, \|\cdot\|_{2, P_{\theta_0, m_0}})$ and  $J_{[\,]}(\gamma,\mathcal{D}^*_{M_1, M_2,M_3}(n),\|\cdot\|_{2, P_{\theta_0, m_0}}) \lesssim c \gamma^{1/2}.$
Observe that $f\in \mathcal{D}_{M_1, M_2,M_3}(n)$  maps $\rchi$ to $\R^{d-1}$. For any $f\in \mathcal{D}_{M_1, M_2,M_3}(n),$ let $f_1, \ldots, f_{d-1}$ denote each of the real valued components, i.e., $f(\cdot):= (f_1(\cdot), \ldots, f_{d-1}(\cdot))$. With this notation, we have
\begin{align}\label{eq:dim_split}
\begin{split}
&\mathbb{P}\bigg( \sup_{f \in \mathcal{D}_{M_1, M_2,M_3}(n) }|\g_n f | > \delta \bigg)\\
&{}\le  \sum_{i=1}^{d-1} \mathbb{P}\bigg( \sup_{f \in \mathcal{D}_{M_1, M_2,M_3}(n) }|\g_n f_i | > \delta/\sqrt{d-1} \bigg).
\end{split}
\end{align}
We can bound each term in the summation of~\eqref{eq:dim_split} using the maximal inequality in Corollary 19.35 of \cite{VdV98}. We have
\begin{align}
\mathbb{P}\bigg( \sup_{f \in \mathcal{D}_{M_1, M_2,M_3}(n) }|\g_n f_1 | > \delta \bigg)\le{}& \delta^{-1} \mathbb{E} \bigg( \sup_{f \in \mathcal{D}_{M_1, M_2,M_3}(n) }|\g_n f_1 |\bigg)\\
\le{}& \delta^{-1} J_{[\,]}(\| D_{M_1, M_2,M_3}(n)\|,\mathcal{D}^*_{M_1, M_2,M_3}(n),\|\cdot\|_{2, P_{\theta_0, m_0}}) \\
\lesssim{}& \delta^{-1} \|D_{M_1, M_2,M_3}(n) \|^{1/2}\\
\lesssim{}& \Big[ \hat{\lambda}_n^{1/2} +a_n^{-1}\Big]^{1/2}\rightarrow 0, \qquad \text{as } n\rightarrow \infty. \label{eq:each_dim_1}
\end{align}
In the last inequality, we have used \eqref{smooth:eq:g_beta0_U_envelope} and the fact that $D^2_{M_1, M_2,M_3}(n)$ is non-random. The lemma follows by combining \eqref{eq:each_dim_1} and~\eqref{eq:dim_split}.

\subsection{Proof of Lemma \ref{smooth:lem:second_term}} \label{smooth:app:proof_second_term}
We will first  show that, for every $ (m,\theta) \in \mathcal{C}_{M_1, M_2,M_3}(n)$ and $x \in \rchi,$ we have
\begin{equation}\label{smooth:eq:env_epsilion}
\Big|\epsilon \big[ U_{\theta,m} (x)-U_{\theta_0,m_0}(x)\big]\Big| \le |\epsilon| W_{M_1, M_2,M_3}(n).
\end{equation}
 Observe that for every $(m,\theta) \in \mathcal{C}_{M_1,M_2, M_3}(n)$ and $x \in \rchi$, we have
\begin{align}
&| U_{\theta,m} (x)-U_{\theta_0,m_0}(x)|\\
\le{}&|m^\prime (\theta^\top x) K_1(x;\theta)-m^\prime (\theta_0^\top x) K_1(x;\theta)|+|m^\prime (\theta_0^\top x) K_1(x;\theta)-m_0^\prime (\theta_0^\top x) K_1(x;\theta_0)|\\
\le{}&|m^\prime (\theta^\top x) K_1(x;\theta)-m^\prime (\theta_0^\top x) K_1(x;\theta)|+|m^\prime (\theta_0^\top x) K_1(x;\theta)-m_0^\prime (\theta_0^\top x) K_1(x;\theta)|\\
&\qquad +|m_0^\prime (\theta_0^\top x) K_1(x;\theta)-m_0^\prime (\theta_0^\top x) K_1(x;\theta_0)|\\
\le{}&|m^\prime (\theta^\top x) -m^\prime (\theta_0^\top x)| |K_1(x;\theta)|+|m^\prime (\theta_0^\top x) -m_0^\prime (\theta_0^\top x) | |K_1(x;\theta)|\\
& \qquad +|m_0^\prime (\theta_0^\top x)|| K_1(x;\theta)-K_1(x;\theta_0)|\\
\le{}&J(m)|\theta_0-\theta|^{1/2} T^{1/2} |K_1(x;\theta)|+\|m-m_0\|_{D_0}^S |K_1(x;\theta)| +\|m_0^\prime \|_\infty (2T+\bar{M}+\|h_{\theta_0}'\|_\infty)|\theta_0-\theta|\\
\le{}& \big[  2 T^{3/2} M_3 \hat{\lambda}_n^{1/4}+  2 T a_n^{-1} +M_2 (2T+\bar{M}+\|h_{\theta_0}'\|_\infty)\big] \hat{\lambda}_n^{1/2}=W_{M_1,M_2,M_3}(n),
\end{align}
where for the third term in the penultimate inequality follows from~\eqref{smooth:eq:K_diff_bound}.

Next, we will prove that there exists a constant $c$ depending only on $M_1,M_2,$ and $M_3$ such that
\be \label{smooth:eq:ent_w_n}
N_{[\,]}(\varepsilon, \w_{M_1, M_2,M_3}(n), \|\cdot\|_{2, P_{\theta_0, m_0}}) \leq c \exp (c/\varepsilon) \varepsilon^{-2(d-1)}.
\ee
  As in  proof of Lemma  \ref{smooth:lem:first_term}, we first find covers for the class of functions $\{ f^\prime : f \in \mathcal{C}^{m*}_{M_1, M_2,M_3} \big\}$ and the set $\Theta\cap B_{\theta_0}(1/2)$ and use  them to construct a cover for $\w^*_{M_1, M_2,M_3}$.
By Lemma \ref{smooth:lem:entropy}, we have
\begin{align}
N(\varepsilon, \big\{ f^\prime : f \in \mathcal{C}^{m*}_{M_1, M_2,M_3} \big\}, \|\cdot\|_\infty) &\le \exp(c/{\varepsilon}),
\end{align} where $c$ is a constant depending only on $d,M_1,M_2,$ and $M_3.$ We denote the functions in the $\varepsilon$-cover of $\{ f^\prime : f \in \mathcal{C}^{m*}_{M_1, M_2,M_3} \big\}$ by $l_1,\ldots, l_{t}$. By Lemma~\ref{lem:H_ent}, we have that there exists  $\theta_1,\ldots, \theta_s$ for $s\lesssim \varepsilon^{-4d}$ such that $\{\theta_i\}_{1\le i \le s}$ form an $\varepsilon^2$-cover  of $\Theta \cap B_{\theta_0}(1/2)$ and satisfies~\eqref{smooth:eq:H_cover} (with $\varepsilon^2$ instead of $\varepsilon$). Fix $(m,\theta) \in \mathcal{C}_{M_1,M_2, M_3}(n).$  Without loss of generality  assume that the function nearest to $m^\prime$ in the $\varepsilon$-cover of $\{ f^\prime : f \in \mathcal{C}^{m*}_{M_1, M_2,M_3} \big\}$ is $l_1$ and the vector nearest to $\theta$ in the $\varepsilon^2$-cover of $\Theta\cap B_{\theta_0}(1/2)$ is $\theta_1$, i.e.,
\[
\|m^{\prime}-l_1\|_{\infty} \le \varepsilon, \quad \quad |\theta_1-\theta| \le \varepsilon^2, \quad \text{and} \quad \|H_\theta^\top-H_{\theta_1}^\top\|_2 \le \varepsilon^2.
\]
Let us define $r_1, \ldots, r_t$ to be  anti-derivatives of $l_1,\ldots, l_t$, i.e., $l_1= r_1^\prime, \ldots l_t=r_t^\prime$. Then for every $x\in \rchi$, observe that
\begin{align}
&|U_{\theta,m}(x)-U_{\theta_1,r_1}(x)|\\
\le{}&|U_{\theta,m}(x)-U_{\theta,r_1}(x)|+|U_{\theta,r_1}(x)-U_{\theta_1,r_1}(x)|\\
\le{}&|m^\prime (\theta^\top x) K_1(x;\theta)-r_1^\prime (\theta^\top x) K_1(x;\theta)|+|r_1^\prime (\theta^\top x) K_1(x;\theta)-r_1^\prime (\theta_1^\top x) K_1(x;\theta_1)|\\
\le{}&|m^\prime (\theta^\top x)-r_1^\prime (\theta^\top x)|| K_1(x;\theta)|+|r_1^\prime (\theta^\top x) K_1(x;\theta)-r_1^\prime (\theta_1^\top x) K_1(x;\theta)|\\
&\qquad +|r_1^\prime (\theta_1^\top x) K_1(x;\theta)-r_1^\prime (\theta_1^\top x) K_1(x;\theta_1)|\\
\le{}&\varepsilon | K_1(x;\theta)|+|r_1^\prime (\theta^\top x) -r_1^\prime (\theta_1^\top x)| |K_1(x;\theta)| +\|r_1^\prime\|_\infty| K_1(x;\theta)-K_1(x;\theta_1)|\\
\le{}&\varepsilon \| K_1(\cdot;\theta)\|_{2,\infty}+J(r_1)|\theta-\theta_1|^{1/2} T^{1/2} \|K_1(\cdot;\theta)\|_{2,\infty} +M_1 (2T+\bar{M})|\theta-\theta_1|\lesssim{}\varepsilon.
\end{align}
Here the last inequality follows from \ref{smooth:a2}, \eqref{smooth:eq:K_diff_bound}, and Lemma \ref{smooth:lem:bounds}.
 Thus, $\{U_{\theta_i,r_j}-U_{\theta_0,m_0}\}_{1\le i\le t , 1\le j \le s}$ form an (constant multiple of) $\varepsilon$-cover (with respect to $\|\cdot\|_{2,\infty}$ norm) of $\w^*_{M_1, M_2,M_3}$. Moreover, as $N_{[\,]}(\varepsilon, \w^*_{M_1, M_2,M_3}, \|\cdot\|_{2,P_{\theta_0,m_0}}) \lesssim N(\varepsilon, \w^*_{M_1, M_2,M_3}, \|\cdot\|_{2,\infty})$ and
$\w_{M_1, M_2,M_3}(n) \subset \w^*_{M_1, M_2,M_3},$
  for every $n \in \mathbb{N},$ we have $$N_{[\,]}(\varepsilon,\w_{M_1, M_2,M_3}(n), \|\cdot\|_{2,P_{\theta_0,m_0}}) \lesssim N_{[\,]}(\varepsilon, \w^*_{M_1, M_2,M_3}, \|\cdot\|_{2,P_{\theta_0,m_0}})\lesssim c \exp (c/\varepsilon) \varepsilon^{-4d} .$$
Observe that if  $[\hbar_1, \hbar_2]$ is a bracket for $U_{\theta, m}-U_{\theta_0,m_0},$ then $[\hbar_1 \epsilon^+ -\hbar_2 \epsilon^-, \hbar_2 \epsilon^+ -\hbar_1 \epsilon^-]$ is a bracket (here the ordering is coordinate-wise) for $\epsilon (U_{\theta, m}-U_{\theta_0,m_0}).$ Therefore, we have
\be \label{smooth:eq:bracket_entropy_epsilon_w}
 N_{[\,]}\big(\varepsilon \|\sigma(\cdot)\|_\infty, \{ \epsilon f : f \in \w_{M_1, M_2,M_3}(n)\}, \|\cdot\|_{2,P_{\theta_0,m_0}}\big)  \le c \exp(c/\varepsilon) \varepsilon^{-2(d-1)}.
 \ee
Now we prove \eqref{smooth:eq:second_part_main}.  As in \eqref{smooth:eq:insetoutset}, we have
\begin{align}
&\p( \big|\g_n \big[ \epsilon \big(U_{\hat{\theta},\hat{m}}(X)-U_{\theta_0,m_0}(X)\big)\big]\big|> \delta) \\
\le{}& \p \Big(\sup_{(m,\theta) \in \mathcal{C}_{M_1,M_2, M_3}(n)} \big|\g_n \big[ \epsilon \big(U_{\theta,m}(X)-U_{\theta_0,m_0}(X)\big)\big]\big|> \delta\Big) +\p((\hat{\theta}, \hat{m}) \notin \mathcal{C}_{M_1,M_2, M_3}(n))
\end{align}
  By discussions similar to those after Theorem \ref{smooth:thm:ConsistencyofG_n}, we only need to show that for every fixed $M_1, M_2,$ and $M_3,$ we have
$$\p \Big(\sup_{f  \in \w_{M_1, M_2,M_3}(n)} |\g_n \epsilon f| >\delta\Big)\rightarrow 0,$$ as $n \rightarrow 0.$
Note that by \eqref{smooth:eq:bracket_entropy_epsilon_w},  for $\gamma >0$ we have  $$J_{[\,]}\Big(\gamma,\{ \epsilon f : f \in \w_{M_1, M_2,M_3}(n)\}, \|\cdot\|_{2,P_{\theta_0,m_0}}\Big) \lesssim \gamma^{\frac{1}{2}}.$$
 By arguments similar to~\eqref{eq:dim_split} and~\eqref{eq:each_dim_1}, we have
\begin{align}
\p\bigg(\sup_{f \in \w_{M_1,M_2, M_3}(n) } |\g_n \epsilon f|> {\delta}\bigg) \lesssim& \delta^{-1} \E  \bigg(\sup_{f \in \w_{M_1,M_2, M_3}(n) } |\g_n \epsilon f|\bigg)\\
\lesssim& J_{[\,]}\Big(P_{\theta_0,m_0}\big(|\epsilon^2| W_{M_1, M_2,M_3}^2(n)\big)^{\frac{1}{2}},\w_{M_1,M_2, M_3}(n), L_2(P_{\theta_0,m_0})\Big) \\
\lesssim& \left[\hat{\lambda}_n^{1/4}+a_n^{-1}+ \hat{\lambda}_n^{1/2}\right]^{1/2}  \rightarrow{} 0,\,   as \, n \rightarrow \infty.
\end{align}
\end{document}